\newtheorem{theorem}{Theorem}%
\newtheorem{proposition}{Proposition}%
\newtheorem{lemma}{Lemma}
\newtheorem{remark}{Remark}%
\newtheorem{definition}{Definition}
\newtheorem{procedure}{Procedure}
\newtheorem{assumption}{Assumption}
\newcommand{\Var}{\mathrm{Var}}
\newcommand{\rank}{\mathrm{rank}}
\newcommand{\sign}{\mathrm{sign}}
\newcommand{\E}{\mathbb{E}}
\newcommand{\Prob}{\mathbb{P}}
\newcommand{\hatED}{\hat{\mathbb E}_{\mathcal D}}
\newcommand{\EDP}{\mathbb E_{\mathcal D, P}}
\newcommand{\EP}{\mathbb E_P}
\newcommand{\VDP}{\mathcal V_{\mathcal D, P}}
\newcommand{\VP}{\mathcal V_{ P}}
\newcommand{\oP}{o_{\mathcal P}}
\newcommand{\oPW}{o_{\mathcal P,\mathcal W}}
\newcommand{\OP}{O_{\mathcal P}}
\newcommand{\sigmamin}{\sigma_{\mathrm{min}}}
\newcommand{\sumD}{\sum_{i\in \mathcal D}}
\author[1]{Cyrill Scheidegger}
\author[1,2]{Malte Londschien}
\author[1]{Peter B\"uhlmann}
\affil[1]{Seminar for Statistics, ETH Zurich}
\affil[2]{ETH AI Center, ETH Zurich}
\begin{document}
\title{ Machine-Learning-Powered Specification Testing in Linear Instrumental Variable Models}
\maketitle

\begin{abstract}
    The linear instrumental variable (IV) model is widely used in observational studies, yet its validity hinges on strong assumptions. Classical specification tests such as the Sargan--Hansen J test are limited to overidentified settings and are therefore not applicable in the common just-identified case, where the number of instruments is equal to the number of endogenous variables.
We propose a novel test for the well-specification of the linear IV model 
under the assumption that the structural error is mean independent of the instruments. This assumption enables specification testing even in the just-identified setting. 
Our approach uses the idea of residual prediction: if the two-stage least squares residuals can be predicted from the instruments better than chance, this indicates misspecification. The resulting test employs sample splitting and a user-chosen machine learning method, and we show asymptotic type I error control and consistency against a broad class of alternatives.

We further show how the proposed testing principle can be adapted to settings with weak or many instruments via an Anderson--Rubin-type inversion, thereby substantially extending the applicability.
The tests accommodate heteroskedasticity- and cluster-robust inference and are
implemented in the R package \texttt{RPIV} and the \texttt{ivmodels} software package for Python.
\end{abstract}

\section{Introduction}
The linear instrumental variable (IV) model is arguably one of the most popular statistical models for estimating causal effects in observational studies. Instrumental variable regression allows identification and inference for the causal parameter even when the explanatory variables and the structural error are correlated due to hidden confounding or omitted variables, provided that there exists an instrumental variable that is correlated with the explanatory variables and uncorrelated with the structural error.
Assuming $n$ i.i.d.\ observations of $(y_i, {\bm x}_i, {\bm z}_i)$, the linear IV model is 
\begin{equation}\label{eq_StandardLinearIV}
    y_i = {\bm x}_i^T{\bm{\beta}} + u_i, \quad \E[{\bm z}_iu_i] = 0, \quad i = 1,\ldots, n,
\end{equation}
where $y_i$ is the univariate response, ${\bm x}_i\in \mathbb R^p$ are the explanatory variables, ${\bm z}_i\in \mathbb R^d$ $(d\geq p)$ are the instruments, and $u_i\in \mathbb R$ is the univariate error with mean zero \citep[see, e.g.,][]{WooldridgeEconometricAnalysis}. Correct inference relies on three main assumptions: (i) relevance, i.e., $\rank(\E[{\bm x}_i{\bm z}_i^T])=p$, (ii) linearity, i.e., ${\bm x}_i$ affects the outcome $y_i$ linearly, and (iii) exogeneity, i.e., $\E[{\bm z}_iu_i]=0$. If relevance, linearity, and exogeneity hold, then ${\bm{\beta}}$ can be consistently estimated using the two-stage least squares estimator.
While (i)--(iii) are crucial for valid inference, they are often at most approximately satisfied in practice.

This underscores the need for simple and effective diagnostic tools that can help practitioners assess whether the linear IV model is appropriate for their analysis. Relevance can be tested using, for example, the first-stage F-statistic \citep{AndersonEstimatingLinearRestrictions, StaigerIVRegressionWithWeakInstruments}. If there are more instruments than endogenous variables ($d>p$), the assumptions (ii) and (iii) can be jointly tested using the Sargan--Hansen J test \citep{SarganEstimationEconomicRelationships, HansenLargeSampleProperties}. However, in practice, the so-called just-identified setting with the same number of instruments and explanatory variables ($d = p$) is common. In that setting, the two-stage least squares residuals are empirically uncorrelated with the instruments by construction, independently of whether (ii) and (iii) hold, and (ii) and (iii) cannot be verified from data.

When practitioners argue for the validity of (ii) and (iii), they typically appeal to the absence of any systematic relationship between the instruments and unobserved factors influencing the outcome rather than merely the absence of linear correlation \citep[see also][]{DieterleASimpleDiagnostic}. Hence, the uncorrelatedness $\E[{\bm z}_iu_i]=0$ in (iii) is often replaced by the mean independence $\E[u_i|{\bm z}_i]=0$ a.s. 
The assumption $\E[u_i|{\bm z}_i] = 0$ also opens some possibilities for specification testing, i.e., testing the assumptions (ii) and (iii), including the just-identified case. \cite{DieterleASimpleDiagnostic} consider the special case of a single explanatory variable ${x}_i$ and single instrument ${z}_i$. By adding ${z}_i^2$ as a second instrument, they make the problem overidentified, which enables them to use the Sargan--Hansen J test to test for certain forms of misspecification, that is, violations of (ii) or (iii).
In this work, we exploit the mean independence assumption $\E[u_i|{\bm z}_i]=0$ to develop a new and conceptually simple test for the well-specification of the linear IV model based on the idea of residual prediction \citep{ShahGoodnessOfFit, JankovaGOFTestingInHDGLM}.

\subsection{Our Contribution}
Our primary contribution is a new residual-prediction-based test for the well-specification of the linear instrumental variable model that is applicable even in just-identified settings under mean independence. Informally, our test seeks to answer the question ``Is a linear instrumental variable model appropriate for the data at hand?'' Formally, given data $({\bm x}_i, y_i, {\bm z}_i)_{i=1,\ldots, n}$ we test the null hypothesis
\begin{equation}
    \label{eq_H0Effective}
    H_0:\exists {\bm{\beta}}\in \mathbb R^p \text{ such that } \E[y_i-{\bm x}_i^T{\bm{\beta}}|{\bm z}_i]= 0 \text{ a.s.}
\end{equation}
If the linear instrumental variable model is well-specified with $\E[u_i|{\bm z}_i]=0$, then such a ${\bm{\beta}}$ exists by definition. However, we will see in Section \ref{sec_WhenIs} that there are also scenarios where the linear instrumental variable model is misspecified, but nevertheless, such a ${\bm{\beta}}$ exists. Hence, there exist misspecifications for which our test has no power.

Our test uses the idea of \textit{residual prediction} \citep{ShahGoodnessOfFit, JankovaGOFTestingInHDGLM}. Under $H_0$, the parameter ${\bm{\beta}}$ can be consistently estimated using two-stage least squares. Hence, the residuals $\hat r_i = y_i - {\bm x}_i^T\hat {\bm{\beta}}$ should be close to the true errors $u_i$ that satisfy $\E[u_i|{\bm z}_i]=0$. If one attempts to predict the residuals $\hat r_i$ using machine learning, there should be little systematic signal that can be learned. If, on the other hand, the machine learning algorithm can predict the residuals $\hat r_i$ well from the instruments ${\bm z}_i$, this points to
a violation of $\E[u_i|{\bm z}_i]=0$. Using sample splitting and machine learning on one part of the sample, we will construct a test statistic that is asymptotically normal with variance that can be consistently estimated.

While the proposed test is valid under standard IV asymptotics, an important additional challenge arises in settings with weak or many instruments, where two-stage least squares is unreliable.
To address this limitation, we introduce a second version of our test that does not rely on an initial estimate of the parameter.
Concretely, for a given $\bm\beta_0\in \mathbb R^p$, we test the null hypothesis
\begin{equation}\label{eq_H0Weak}
    H_0(\bm \beta_0): \E[y_i - \bm x_i^T\bm \beta_0|\bm z_i] = 0 \text{ a.s.}
\end{equation}
That is, we simultaneously test the well-specification of the model and the value of the parameter $\bm \beta_0$. By inverting the test, one can obtain the set of all $\bm \beta\in \mathbb R^p$ that are compatible with well-specification for a given significance level. This set can alternatively be interpreted as a confidence set for $\bm \beta$ under the assumption of well-specification. An additional benefit of this formulation is that it avoids the problem of selective inference, i.e., the difficulty in the interpretation of the p-values when parameter inference is done only after the specification test did not reject. Hence, this second approach provides a unified framework for assessing model specification and conducting inference that remains valid under weak identification: if the model is correctly specified, the confidence interval has asymptotically correct coverage regardless of IV strength, whereas an empty confidence set signals misspecification.

For both our main test and the weak-IV-robust extension, we provide heteroskedasticity- and cluster-robust variance estimators. A major innovation of our tests compared to existing approaches (see below) is that they employ user-chosen machine learning.
This leads to high power against a wide range of alternatives, which approaches relying on fixed transformations of instruments cannot achieve. Still, for asymptotic type I error control, we do not require the machine learning algorithm to be consistent, and for the test to have power, it is enough if the machine learning algorithm picks up some signal. This is an appealing property, since it allows to employ
a user-chosen machine learning algorithm in an off-the-shelf fashion, without worrying if the particular machine learning algorithm is consistent for the given data at hand.

Implementations of our tests can be found in the R package \texttt{RPIV}, available on CRAN and as part of the \texttt{ivmodels} software package for Python \citep{londschien2025statistician}, available at \url{https://github.com/mlondschien/ivmodels}.

\subsection{Additional Related Literature}
If one is not willing to impose any additional assumption to \eqref{eq_StandardLinearIV}, then specification testing is only possible in the overidentified case. In this setting, the J test by \cite{SarganEstimationEconomicRelationships}, later generalized by \cite{HansenLargeSampleProperties}, is extremely popular. The test also turns out to be equivalent to the idea by \cite{HausmanSpecificationTests}, where one compares the estimator that uses all instruments with an estimator using only a subset of the instruments \citep{WooldridgeEconometricAnalysis}. In addition to being applicable only in the overidentified case, it has already been noted by \cite{NeweyGMMSpecificationTesting} that there are certain directions of misspecification against which the J test does not have power.

With the stronger mean independence assumption, our approach fits into the framework of specification testing for models defined by conditional moment restrictions, i.e., models of the form $\E[g({\bm v}_i, {\bm{\beta}})|{\bm z}_i]=0$, where ${\bm v}_i$ and ${\bm z}_i$ can have overlapping components. Our setting of the linear instrumental variable model is a special case of this obtained by setting ${\bm v}_i = (y_i, {\bm x}_i)$ and $g({\bm v}_i, {\bm{\beta}}) = y_i-{\bm x}_i^T{\bm{\beta}}$. In Supplementary Appendix G, we sketch how our approach can be extended to general conditional moment restrictions.

For specification testing, by far the most attention has been given to the special case of testing the well-specification of a parametric regression function (i.e., without the need for instrumental variables), see for example \cite{GonzalesAnUpdatedReview} for a review. In this context, the idea of using nonparametric predictions of residuals is not new; see, for example, \cite{ZhengAConsistenTestOfFunctionalForm, HartNonparametricSmoothing, GuerreDataDrivenRateOptimal}. This idea has also been extended to models defined by arbitrary conditional moment restrictions, see \cite{DelgadoConsistentTestsOfConditionalMomentRestrictions}, and can hence be applied to the linear IV model. It should be noted that these approaches use kernel smoothing, which -- from a theoretical perspective -- restricts the instruments to low-dimensional continuous random variables and limits possible extensions to many instruments. Moreover, they rely on the choice of bandwidths for which no clear guidelines exist. For other works that consider general conditional moment restrictions, see, for example, \cite{WhangConsistentSpecificationTestingForCMR} for an approach based on weighting of the conditional moment restriction, \cite{LavergneAHausmanSpecificationTest} for a test using the Hausman idea of comparing two different consistent estimators, or \cite{EscancianoAGaussianProcessApproachToModelChecks} for a test using Gaussian processes and references therein.

Conceptually, our approach is most closely related to the idea of \cite{DieterleASimpleDiagnostic}, who also exploit the stronger assumption $\E[u_i | {\bm z}_i] = 0$ by additionally including the square of the instrument. However, our method differs in several important ways. Rather than relying on a fixed transformation such as ${z} \mapsto {z}^2$, we employ
user-chosen machine learning to learn a data-driven transformation of the instrument, which makes our test powerful against a wide range of alternatives and renders
our approach applicable beyond settings with a single continuous instrument.

Like our main test, all of these methods share the requirement of sufficient instrument strength or on a consistent estimator of the regression parameter. Hence, such tests are of limited use for the case of weak or many instruments. Our weak-instrument-robust extension is fundamentally related to the Anderson--Rubin test \citep{AndersonRubinTest} that jointly tests the value of the parameter and the specification. Like the J test, the Anderson--Rubin test is merely based on uncorrelatedness and not mean independence of the error term, and hence its ability to detect misspecification is limited, especially in the just-identified case. In the literature, less attention has been devoted to this setting. Notable exceptions are \cite{JunSemiparamtericTests, JunTestingUnderWeakIdentification} for approaches based on k-nearest neighbor regression (which limits the application to a small number of instruments) and \cite{AntoineIdentificationRobustNonparametricInference}, who propose a procedure based on the integrated conditional moments principle by \cite{BierensConsistentModelSpecification} and rely on kernel smoothing for the estimation of the variance, again limiting its use to low-dimensional instruments. In contrast, we employ
user-chosen machine learning to detect signal in the residuals, which naturally accommodates large-dimensional instruments.

While our tests use the residual prediction idea from \cite{ShahGoodnessOfFit} and, in particular, \cite{JankovaGOFTestingInHDGLM}, the way we analyze our test statistics is different, since their tests were aimed at high-dimensional (generalized) linear models, coming with different challenges than the linear instrumental variable model.

\subsection{Outline}
In Sections \ref{sec_MotivationMethod} and \ref{sec_TheoryIID}, we present the methodology and theory of our main test. In Section \ref{sec_ExtensionWeak}, we present the extension of the proposed testing principle to the weak- and many-IV setting. In Section \ref{sec_Experiments}, we evaluate the performance of our test on synthetic data and the well-known returns-to-education problem \citep{CardCollege}. Additional results, including cluster-robust inference as well as proofs, additional simulations, and an additional real-world dataset can be found in the Supplementary Appendix.

\subsection{Notation and Conventions}
We write $\Phi(\cdot)$ for the cumulative distribution function of a standard normal distribution. Moreover, we write $\sigmamin(\bm A)$ for the smallest singular value of a matrix $\bm A$ and $\|\bm A\|_{op}$ for the largest singular value of a matrix $\bm A$, i.e., the operator/spectral norm of $\bm A$. Finally, expressions involving conditional expectations are always to be understood almost surely, even if not stated explicitly.

\section{Motivation and Method}\label{sec_MotivationMethod}
We assume that we have data $({\bm x}_i, y_i, {\bm z}_i)_{i=1, \ldots, n}$ with responses $y_i\in \mathbb R$, explanatory variables ${\bm x}_i\in \mathbb R^p$ (potentially endogenous) and instruments ${\bm z}_i\in \mathbb R^d$ with $d\geq p$. 
As explained in the introduction, the null hypothesis \eqref{eq_H0Effective} is implied by the linear instrumental variable model with a mean independence assumption for the error term, and hence, our test will lead to a well-specification test for the linear instrumental variable model.
\begin{remark}
    Additional exogenous explanatory variables (``exogenous controls''), including an intercept, can be included by adding them to both ${\bm x}_i$ and ${\bm z}_i$.\footnote{For the weak- and many-IV extension in Section \ref{sec_ExtensionWeak}, we will introduce controls explicitly because there, low-dimensional components of $\bm \beta$ are targeted.}
\end{remark}

To test $H_0$, we use the following idea. If the model is well-specified, the two-stage least squares residuals $\hat r_i$ should not be predictable from the instruments. We therefore measure their empirical correlation with a learned nonlinear function $\hat w(\bm z_i)$ of the instruments and base a test statistic on a normalized version of $\sum_i \hat r_i \hat{w}({\bm z}_i)$.

For operationalization of this idea, we proceed as follows.
We consider splitting
our sample into two disjoint subsamples, $\{1,\ldots, n\} = \mathcal D_A\cup\mathcal D$, with the auxiliary sample $\mathcal D_A$ of size $n_A$ and the main sample $\mathcal D$ of size $n_0 = n-n_A$. The idea is to learn a function $\hat w(\cdot)$ on $\mathcal D_A$ and calculate the test statistic on $\mathcal D$. For the asymptotic results, we need that $n_0\to\infty$ as $n\to\infty$.

We first explain how to obtain the function $\hat w(\cdot)$. For the theory under $H_0$ presented below, it is not important how $\hat w$ was obtained but we only need that $|\hat w|$ is bounded by $1$ and that it is independent of the main sample $\mathcal D$ (with ``the sample $\mathcal D$'', we mean ``the sample $({\bm x}_i, y_i, {\bm z}_i)_{i\in \mathcal D}$''). For power, we will need $\hat w$ to capture some signal under alternatives. We suggest the following procedure.

\begin{procedure}[Estimation of weight function]\label{proc_EstW}
Input: auxiliary sample $({\bm x}_i, y_i, {\bm z}_i)_{i\in \mathcal D_A}$.
    \begin{enumerate}
    \item Let $\hat {\bm{\beta}}_A$ be the two-stage least squares estimator for ${\bm{\beta}}$ using the sample $\mathcal D_A$. Let $\hat r_i^A = y_i- {\bm x}_i^T\hat {\bm{\beta}}_A$, $i\in \mathcal D_A$ be the corresponding residuals.
    \item With a user-chosen machine learning method, regress $(\hat r_i^A)_{i\in \mathcal D_A}$ on $({\bm z}_i)_{i\in \mathcal D_A}$. Call the resulting function $\hat w_0(\cdot)$.
    \item For some $K>0$, define $\hat w(\cdot) = \frac{1}{K}\sign(\hat w_0(\cdot))\min(|\hat w_0(\cdot)|, K)$. Return $\hat w(\cdot)$.
\end{enumerate}
\end{procedure}

The third step ensures that $|\hat w(\cdot)|\leq 1$. In practice, we recommend to choose $K$ as some high (e.g., $80\%$- or $90\%$-) quantile of the absolute value of the in-sample fitted values $(|\hat w_0({\bm z}_i)|)_{i\in \mathcal D_A}$. In the second step, any machine learning algorithm can be used. In the applications in Section \ref{sec_Experiments}, we will use random forests \citep{BreimanRandomForest}.

In the following, we define our test statistic given the function $\hat w(\cdot)$. For this, we introduce the following notation: for any $f \colon \mathbb R^{p + 1 + d} \to V$ for some vector space $V$, define
\begin{equation}\label{eq_DefHatED}
    \hatED[f({\bm x}, y, {\bm z})] \coloneqq \frac{1}{n_0}\sum_{i\in \mathcal D}f({\bm x}_i, y_i, {\bm z}_i).
\end{equation}
That is, $\hatED$ is simply the sample average over the sample $\mathcal D$.
Then, we can write the two-stage least squares estimator with instruments $\bm z$ as \citep[see, e.g.,][Section 5.2]{WooldridgeEconometricAnalysis}
\begin{equation}
    \hat{\bm{\beta}}\coloneqq\hat{\bm M} \hatED[{\bm z}y]\label{eq_Sam2SLSGen},
\end{equation}
with 
\begin{equation}
    \hat {\bm M} \coloneqq \left(\hatED[{\bm x} {\bm z}^T]\hatED[{\bm z} {\bm z}^T]^{-1} \hatED[{\bm z} {\bm x}^T]\right)^{-1}\hatED[{\bm x} {\bm z}^T]\hatED[{\bm z} {\bm z}^T]^{-1},\label{eq_DefHatMGen}
\end{equation}
provided the inverses exist. Using $\hat{\bm{\beta}}$, we define the residuals
\begin{equation}
    \hat r_i \coloneqq y_i - {\bm x}_i^T\hat {\bm{\beta}}, \, i\in \mathcal D.\label{eq_Sam2SLSResGen}
\end{equation}
For any function $w:\mathbb R^{d}\to\mathbb R$, we can then define the quantity
\begin{equation}\label{eq_DefNStat}
    N(w) \coloneqq \frac{1}{\sqrt n_0}\sum_{i\in \mathcal D}w({\bm z}_i)\hat r_i
\end{equation}
and $N(\hat w)$ will be the numerator of our test statistic.

For a fixed bounded weight function $w$, the statistic $N(w)$ is asymptotically normal under $H_0$. Intuitively, under $H_0$, there exists $\bm \beta$ such that $y_i = \bm x_i^T\bm \beta + u_i$ with $\E[u_i|\bm z_i] = 0$, so $\E[w(\bm z_i)u_i]=0$ and a central limit theorem applies to $\sum_{i\in \mathcal D} w(\bm z_i)u_i/\sqrt{n_0}$. Replacing $u_i$ by the two-stage least squares residual $\hat r_i$ introduces an additional correction term due to estimating $\bm \beta$, which we need to account for in the variance estimator $\hat \sigma_w^2$ introduced below.

An additional complication arises since our weight $\hat w$ is learned on an auxiliary sample and is therefore random and changes with $n$. Under $H_0$, the learning step effectively fits noise, and we have no control over how the output of Procedure \ref{proc_EstW} looks like. Hence, it is possible that the implied asymptotic variance becomes very small for some realizations $\hat w$, which invalidates the Gaussian approximation. To prevent this degeneracy, we introduce an artificial lower bound $\gamma$ on the variance. This leads to the following procedure to obtain a p-value for $H_0$.

\begin{procedure}[p-value]\label{proc_Test}
    Input: Auxiliary sample $({\bm x}_i, y_i, {\bm z}_i)_{i\in \mathcal D_A}$, main sample $({\bm x}_i, y_i, {\bm z}_i)_{i\in \mathcal D}$, variance estimator $w\mapsto \hat \sigma_w^2$ for the variance of $N(w)$ defined in \eqref{eq_DefNStat} and a fixed lower bound $\gamma>0$.
\begin{enumerate}
    \item Obtain $\hat w$ on the auxiliary sample $\mathcal D_A$ using Procedure \ref{proc_EstW}.
    \item Return $p_{val}=p_{val}(\hat w) = 1-\Phi\left(N(\hat w)/\max(\hat\sigma_{\hat w}, \sqrt{\gamma})\right)$.
\end{enumerate}
\end{procedure}

Observe that we use a one-sided p-value, since $\hat w({\bm z}_i)$ is constructed to be positively correlated with the residuals. In practice, $\gamma$ should be taken proportionally to the noise level, which can be estimated by $\frac{1}{n_0}\sum_{i\in \mathcal D}\hat r_i^2$. Then, we found that in empirical examples, the lower truncation bound $\sqrt{\gamma}$ is rarely activated. Also note that if we treated the auxiliary sample $\mathcal D_A$ (and hence also $\hat w$) as fixed, the truncation with $\sqrt \gamma$ would not be needed.

If the data are i.i.d., the variance of $N(w)$ can be estimated by
\begin{equation}\label{eq_DefHatSigmaW}
    \hat\sigma_w^2\coloneqq \frac{1}{n_0}\sum_{i\in \mathcal D}(w({\bm z}_i)+\hat {\bm a}_w^T{\bm z}_i)^2\hat r_i^2 - \left(\frac{1}{n_0}\sum_{i\in \mathcal D}w({\bm z}_i)\hat r_i\right)^2, 
\end{equation}
where
\begin{equation}\label{eq_DefHatAw}
    \hat {\bm a}_w^T \coloneqq - \hatED[w({\bm z}){\bm x}^T] \hat {\bm M}
\end{equation}
is the correction term accounting for the estimation of $\bm \beta$.
We will see that $\hat \sigma_w^2$ defined in \eqref{eq_DefHatSigmaW} is a consistent estimator for the asymptotic variance of $N(w)$ both under the null and the alternative hypothesis. 
Moreover, $\hat\sigma_w^2$ is heteroskedasticity-robust \citep{WhiteAHeteroskedasticityConsistent}. If one wants to make the homoskedasticity assumption a part of the null hypothesis, one could use the variance estimator
\begin{equation}\label{eq_DefHatSigmaWHomoscedastic}
    \hat\sigma_{w, \text{hom}}^2\coloneqq \frac{1}{n_0}\sum_{i\in \mathcal D}(w({\bm z}_i)+\hat {\bm a}_w^T{\bm z}_i)^2\frac{1}{n_0}\sum_{i\in \mathcal D}\hat r_i^2,
\end{equation}
which will be consistent under the null hypothesis \eqref{eq_H0Effective} with $\E[u_i^2|{\bm z}_i]=\text{const.}$
\begin{remark}
To increase power, one may replace $\hat r_i$ in \eqref{eq_DefHatSigmaW} (and \eqref{eq_DefHatSigmaWHomoscedastic}) by the OLS residuals from regressing $\hat r_i$ on $(\hat w({\bm z}_i),{\bm z}_i)$, which is sometimes also done for the overidentifying restriction J test \citep{BasmannOnFiniteSampleDistributions, BaumIVAndGMM}. This does not affect asymptotic validity under $H_0$, and in our experiments, it yields only marginal power gains, so we do not pursue it further.
\end{remark}
Extensions to clustered data with many independent clusters and within-cluster dependence \citep{LiangLongitudinalDataAnalyisis, WhiteAsymptoticTheory} are provided in Supplementary Appendix B.

The test, as defined in this section, uses two-stage least squares residuals and therefore presumes the usual conditions under which two-stage least squares is well-behaved. Section \ref{sec_TheoryIID} establishes the formal validity theory for this baseline procedure. Section \ref{sec_ExtensionWeak} then removes the strong identification requirement by introducing an Anderson--Rubin-type extension that remains valid with weak or many instruments.

\section{Theory in the i.i.d.\ Case}\label{sec_TheoryIID}
In this section, we present the theory of our test assuming i.i.d. data. We show that, after appropriate normalization, the statistic is asymptotically normal uniformly over a collection of data-generating distributions and bounded weight functions. This yields valid specification testing under the null and consistency under alternatives. 

We use the following setup: for a distribution $P\in \mathcal P$ for $({\bm x}, y, {\bm z})\in\mathbb R^{p + 1 + d}$, we assume that $({\bm x}_i, y_i, {\bm z}_i)_{i=1,\ldots, n}$ are i.i.d.\ copies of $({\bm x}, y, {\bm z})$ under $P$.
We write $\EP[\cdot]$ for the expectation under $P$. We assume
the following conditions.

\begin{assumption}\label{ass_SigmaMinIID}
    There exists $c>0$ such that for all $P\in \mathcal P$, $\sigma_{\min}(\EP[{\bm z}{\bm z}^T])\geq c$ and $\sigma_{\min}(\EP[{\bm z}{\bm x}^T])\geq c$.
\end{assumption}

\begin{assumption}\label{ass_MomentsIID}
    There exist $\eta,C\in (0,\infty)$ such that for all $P\in \mathcal P$,
    \begin{enumerate}
        \item $\EP[\|{\bm z}\|_2^{2+\eta}]\leq C$,\label{ass_ZZIID}
        \item $\EP[\|{\bm x}\|_2^2\|{\bm z}\|_2^2]\leq C$,\label{ass_XZIID}
        \item $\EP[\|{\bm x}\|_2^2]\leq C$.\label{ass_XIID}
    \end{enumerate}
\end{assumption}
Note that assertions \ref{ass_ZZIID} and \ref{ass_XZIID} of Assumption \ref{ass_MomentsIID} imply that $\|\EP[{\bm z}{\bm z}^T]\|_{op}$ and $\|\EP[{\bm z}{\bm x}^T]\|_{op}$ are uniformly bounded in $P\in \mathcal P$.

To analyze the test statistic, we introduce population analogues of the two-stage least squares estimator \eqref{eq_Sam2SLSGen} and residual \eqref{eq_Sam2SLSResGen},
\begin{equation}\label{eq_Pop2SLS}
    {\bm{\beta}}^* \coloneqq {\bm M}\EP[{\bm z}y], \quad \epsilon \coloneqq y - {\bm x}^T{\bm{\beta}}^*
\end{equation}
with
\begin{equation}\label{eq_DefM}
    {\bm M} \coloneqq  \left(\EP[{\bm x} {\bm z}^T]\EP[{\bm z} {\bm z}^T]^{-1} \EP[{\bm z} {\bm x}^T]\right)^{-1}\EP[{\bm x} {\bm z}^T]\EP[{\bm z} {\bm z}^T]^{-1}.
\end{equation}
Moreover, define $\epsilon_i \coloneqq y_i-{\bm x}_i^T{\bm{\beta}}^*$, $i =1,\ldots, n$ accordingly.
By the law of large numbers, ${\bm{\beta}}^*$ is the almost-sure limit of $\hat{\bm{\beta}}$.
Note that ${\bm{\beta}}^*$ and $\epsilon$ are well-defined regardless of whether
the null hypothesis \eqref{eq_H0Effective} holds or not. Moreover, under $H_0$, $\epsilon$ is equal to the structural error: if $H_0$ holds, there exists ${\bm{\beta}_{H_0}}\in\mathbb R^p$ with $\E_P[y-{\bm x}^T{\bm{\beta}_{H_0}}|{\bm z}]=0$, i.e., we can write $y = \bm x^T\bm \beta_{H_0} + u$ with $\E_P[u|\bm z] = 0$. Then a straightforward calculation shows that ${\bm{\beta}}^* = {\bm{\beta}_{H_0}}$ and $\epsilon = u$. It follows that we can reformulate the null hypothesis \eqref{eq_H0Effective} as $H_0: \EP[\epsilon|{\bm z}]=0$. 
We note that ${\bm{\beta}}^*$, ${\bm M}$, and $\epsilon$ depend on $P$ even if this is not reflected in the notation.
We need the following assumption for $\epsilon$.
\begin{assumption}\label{ass_Epsilon}
    There exist $\eta, C\in (0, \infty)$ such that for all $P\in \mathcal P$ it holds that $\EP\left[|\epsilon|^{2+\eta}\right]\leq C$ and $\EP\left[\|{\bm z}\|_2^{2+\eta}|\epsilon|^{2+\eta}\right]\leq C$. 
\end{assumption}
Assumption \ref{ass_Epsilon} is needed to obtain a central limit theorem uniformly over a class of functions $w$ and over $\mathcal P$.
We define the class of (measurable) functions of $\bm z$ with absolute value bounded by $1$,
$$\mathcal W = \left\{w:\mathbb R^d\to [-1, 1]\right\}.$$
Moreover, it will be useful to define for $w\in \mathcal W$
\begin{equation}
    {\bm a}_w^T  \coloneqq -\EP[w({\bm z}){\bm x}^T]{\bm M}.\label{eq_DefAw}
\end{equation}
Finally, we define for $w\in \mathcal W$ 
\begin{align}
    \rho_w &\coloneqq \EP[w({\bm z})\epsilon]\label{eq_DefRhoW},\\
    \sigma_w^2 &\coloneqq \Var_P\left((w({\bm z}) + {\bm a}_w^T{\bm z})\epsilon\right),\label{eq_DefSigmaW2}\\
    {\bm{\tau}} &\coloneqq \EP[{\bm z}\epsilon].\label{eq_DefTau}
\end{align}
Note that ${\bm a}_w$, $\rho_w$, $\sigma_w^2$ and ${\bm{\tau}}$ depend on $P\in \mathcal P$.
Moreover, under the null hypothesis $\E_P[\epsilon|{\bm z}]=0$, both $\rho_w=0$ and ${\bm{\tau}}=0$. One may also check that $\bm \tau = 0$ automatically holds in the just-identified setting when $p = d$.

To obtain uniform results, we need a lower bound on $\sigma_w^2$. For $\zeta>0$, define the set
\begin{equation}
    \mathcal V_P(\zeta) = \left\{w\in \mathcal W \mid \sigma_w^2\geq \zeta\right\}.\label{eq_DefVZeta}
\end{equation}
The next theorem provides a uniform central limit theorem for the standardized statistic, which is the key technical result underlying the validity and power of the proposed test.
\begin{theorem}\label{thm_AsNormIID}
    Let $N(w)$ and $\hat\sigma_w^2$ be defined as in \eqref{eq_DefNStat} and \eqref{eq_DefHatSigmaW}.
    Let $\zeta>0$ be arbitrary but fixed and assume that Assumptions \ref{ass_SigmaMinIID}, \ref{ass_MomentsIID} and \ref{ass_Epsilon} hold. Then,
    \begin{align}
        \lim_{n\to\infty}\sup_{P\in\mathcal P}\sup_{w\in \VP(\zeta)}\sup_{t\in \mathbb R}\left|\Prob_P\left(\frac{N(w) - \sqrt{n_0}\left(\rho_w+(\hat {\bm a}_w- {\bm a}_w)^T{\bm{\tau}}\right)}{\hat\sigma_w}\leq t\right)-\Phi(t)\right| &= 0,\label{eq_GenTestNorm}\\
        \forall\delta>0:\lim_{n\to\infty}\sup_{P\in \mathcal P}\sup_{w\in \mathcal W}\Prob_P\left(|\sigma_w^2-\hat\sigma_w^2|>\delta\right)&=0.\label{eq_GenSigmaWCons}
    \end{align}
\end{theorem}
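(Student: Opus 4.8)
The plan is to reduce both assertions to weak laws of large numbers and a single uniform central limit theorem for normalized sums of i.i.d.\ summands whose moments are bounded uniformly over $P\in\mathcal P$ and, for each fixed $w$, over $w\in\mathcal W$. The key structural observation is that no uniform law over the (far too rich) class $\mathcal W$ is actually needed: in \eqref{eq_GenTestNorm} the suprema over $w$ and $P$ act on the Kolmogorov distance to $\Phi$, and in \eqref{eq_GenSigmaWCons} the supremum over $w$ sits \emph{outside} the probability. Hence it suffices to control each quantity pointwise in $w$ but at a rate uniform over $w$ and $P$, which I abbreviate $\oPW$.

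The algebraic backbone is \eqref{eq_NEquiv}, $N(w)=\frac{1}{\sqrt{n_0}}\sumD(w(Z_i)+\hat A_w^TZ_i)\epsilon_i$, valid with or without $H_0$ because $\hat\beta=\beta^*+\hat M\hatED[Z\epsilon]$ (the cross term drops since $\hat M\hatED[ZX^T]=I$). The crucial population counterpart is $M\tau=0$: from $\beta^*=M\EP[ZY]$ and $M\EP[ZX^T]=I$ one gets $\beta^*=\beta^*+M\tau$, so $M\tau=0$ and therefore $A_w^T\tau=B_w^TM\tau=0$ for every $w$. Writing $\hat A_w=A_w+(\hat A_w-A_w)$ and splitting each empirical average into mean plus centered part, this yields
\[
N(w)-\sqrt{n_0}\big(\rho_w+(\hat A_w-A_w)^T\tau\big)=T_n+S_n,\qquad T_n:=\tfrac{1}{\sqrt{n_0}}\sumD\big[(w(Z_i)+A_w^TZ_i)\epsilon_i-\rho_w\big],
\]
where $S_n:=(\hat A_w-A_w)^T\frac{1}{\sqrt{n_0}}\sumD(Z_i\epsilon_i-\tau)$ and the deterministic term $\sqrt{n_0}A_w^T\tau$ has been eliminated using $A_w^T\tau=0$. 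Note $T_n$ is a normalized sum of i.i.d.\ mean-zero variables with variance exactly $\sigma_w^2$ of \eqref{eq_DefSigmaW2}.

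For \eqref{eq_GenTestNorm} I would first prove a uniform Berry--Esseen bound for $T_n/\sigma_w$. Since $|w|\le1$, both $\|A_w\|\le\|M\|_{op}\,\EP\|X\|$ and $|\rho_w|\le\EP|\epsilon|$ are bounded by Assumptions \ref{ass_SigmaMinIID} and \ref{ass_MomentsIID}, so the summand satisfies $|(w(Z_i)+A_w^TZ_i)\epsilon_i-\rho_w|^{2+\eta}\lesssim|\epsilon_i|^{2+\eta}+\|Z_i\|^{2+\eta}|\epsilon_i|^{2+\eta}+1$, with expectation bounded uniformly in $w,P$ by Assumption \ref{ass_Epsilon}; together with $\sigma_w^2\ge\zeta$ on $\VP(\zeta)$ the $(2+\eta)$-moment Berry--Esseen inequality gives $\sup_t|\Prob_P(T_n/\sigma_w\le t)-\Phi(t)|\lesssim\zeta^{-(2+\eta)/2}n_0^{-\eta/2}\to0$ uniformly. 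It then remains to pass to the studentized statistic: I would show $S_n=\oPW(1)$ (its multiplier $\frac{1}{\sqrt{n_0}}\sumD(Z_i\epsilon_i-\tau)$ has uniformly bounded second moment, hence is $\OP(1)$, and $\hat A_w-A_w=\oPW(1)$), and, via \eqref{eq_GenSigmaWCons} with $\sigma_w^2\ge\zeta$, that $\hat\sigma_w/\sigma_w\to1$; a Slutsky/anti-concentration argument (the Lipschitz continuity of $\Phi$ makes an $\varepsilon$-shift cost $O(\varepsilon)$) transfers the uniform CLT from $T_n/\sigma_w$ to $(T_n+S_n)/\hat\sigma_w$.

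Assertion \eqref{eq_GenSigmaWCons} is a weak law with uniform rate: expanding \eqref{eq_DefHatSigmaW} against $\sigma_w^2=\EP[(w(Z)+A_w^TZ)^2\epsilon^2]-\rho_w^2$ (again using $A_w^T\tau=0$), I would make three $\oPW$-small replacements --- $\hat R_i\to\epsilon_i$ (justified by $\hat\beta-\beta^*=\hat M\hatED[Z\epsilon]\to M\tau=0$ and Cauchy--Schwarz against Assumptions \ref{ass_MomentsIID} and \ref{ass_Epsilon}), $\hat A_w\to A_w$, and the averages to their expectations, the latter because $(w(Z)+A_w^TZ)^2\epsilon^2\lesssim(1+\|Z\|^2)\epsilon^2$ has a uniformly bounded $(1+\eta/2)$-th moment (exactly why the exponent in Assumption \ref{ass_Epsilon} is $2+\eta$), so a truncation/Markov argument gives the needed rate. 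Both \eqref{eq_GenTestNorm} and \eqref{eq_GenSigmaWCons} thus hinge on the two supporting facts $\hat M-M=\oP(1)$ and $\hat A_w-A_w=\oPW(1)$, which I would isolate as a lemma, and this is the main obstacle: controlling $\hat M-M$ uniformly over $\mathcal P$ requires a matrix-perturbation argument ($A^{-1}-B^{-1}=-A^{-1}(A-B)B^{-1}$) applied to the two inversions defining $\hat M$, and it is uniform only because Assumption \ref{ass_SigmaMinIID} keeps $\sigma_{\min}(\EP[ZZ^T])$ and $\sigma_{\min}(\EP[ZX^T])$ --- and hence $\sigma_{\min}$ of the inner $p\times p$ matrix, via the operator-norm bounds noted after Assumption \ref{ass_MomentsIID} --- bounded away from zero, so that the sample inverses do not blow up with probability tending to one uniformly in $P$.
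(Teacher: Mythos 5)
Your proposal is correct, and its algebraic backbone coincides with the paper's: the identity \eqref{eq_NEquiv}, the observation that $M\tau=0$ (hence $A_w^T\tau=0$), the decomposition into a centered i.i.d.\ sum with variance exactly $\sigma_w^2$ plus the remainder $S_n=(\hat A_w-A_w)^T\frac{1}{\sqrt{n_0}}\sumD(Z_i\epsilon_i-\tau)$, the isolation of $\hat M-M=\oP(1)$ and $\hat A_w-A_w=\oPW(1)$ as a matrix-perturbation lemma (the paper's Lemma \ref{lem_AwBwCons} combined with Lemma \ref{lem_ProdMatrix}), and a uniform Slutsky step (the paper's Lemma \ref{lem_USlutsky}). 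Where you genuinely diverge is in scope and in the CLT technology. The paper never proves Theorem \ref{thm_AsNormIID} directly: it establishes Theorem \ref{thm_AsNormGen} under high-level assumptions (a uniform CLT, uniform LLNs, a consistent variance estimator) and then verifies these for clustered data in Proposition \ref{prop_AsNormCluster}, of which the i.i.d.\ statement is the singleton-cluster special case; there, the uniform CLT is obtained by extracting near-maximizing sequences $(P_n,w_n)$ and invoking the Lindeberg--Feller theorem, and the uniform LLNs come from a triangular-array truncation argument (Lemma \ref{lem_ULLN}). You instead work directly in the i.i.d.\ setting and replace Lindeberg--Feller by a $(2+\eta)$-moment Berry--Esseen bound, which is legitimate (the Katz--Petrov form of Berry--Esseen requires only $2+\eta$ moments, and your moment bound on the summand is exactly the paper's Lemma \ref{lem_BoundEp}) and buys an explicit uniform rate $\zeta^{-(2+\eta)/2}n_0^{-\eta/2}$, avoiding the subsequence-extraction step altogether; your truncation/Markov treatment of \eqref{eq_GenSigmaWCons} plays the role of the paper's Lemma \ref{lem_ULLN}. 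The trade-off is generality: the paper's detour through the high-level framework yields the cluster-robust variance result of Section \ref{sec_Cluster} essentially for free, whereas your direct argument is confined to i.i.d.\ data but is shorter, self-contained, and quantitative in the CLT step.
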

Under $H_0$, $\rho_w = 0$ and $\tau=0$, so the centering in \eqref{eq_GenTestNorm} vanishes.
Theorem \ref{thm_AsNormIID} follows from the more general Proposition B.1 in Supplementary Appendix B and its proof.
In the following sections, we will examine the consequences of Theorem \ref{thm_AsNormIID} for our test (Procedure \ref{proc_Test}) under the null and alternative hypotheses.

\subsection{Consequences under the Null Hypothesis $\EP[\epsilon|{z}]=0$}
We first show that the proposed procedure yields an asymptotically valid specification test under the null hypothesis (when testing on significance level $\alpha < 1/2$ -- a restriction due to the one-sidedness of the test).
\begin{theorem}\label{thm_PValNull}
    Consider a class of distributions $\mathcal P$ such that for all $P\in \mathcal P$ it holds that $\EP[\epsilon|{\bm z}]=0$. Consider the output $p_{val}(\hat w)$ of Procedure \ref{proc_Test} with fixed $\gamma >0$. If Assumptions \ref{ass_SigmaMinIID}, \ref{ass_MomentsIID} and \ref{ass_Epsilon} hold for $\mathcal P$, then for all $\alpha_0\in (0, 0.5)$,
    $$\limsup_{n\to\infty} \sup_{P\in \mathcal P}\sup_{\alpha\in (0,\alpha_0)}\left\{\Prob_P\left(p_{val}(\hat w)\leq \alpha\right)-\alpha\right\} \leq 0.$$
\end{theorem}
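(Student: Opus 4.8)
The plan is to remove the data-dependence of $\hat w$ by conditioning, reduce the one-sided $p$-value event to an upper-tail event for $N(\hat w)$, and then control this tail by splitting on the size of $\sigma_w^2$, using Theorem \ref{thm_AsNormIID} when $\sigma_w^2$ is not too small and the variance floor $\gamma$ together with a crude second-moment bound when it is. Since $\mathcal D$ and $\mathcal D_A$ are disjoint and the data are i.i.d., conditionally on $\mathcal D_A$ the observations $(X_i,Y_i,Z_i)_{i\in\mathcal D}$ are i.i.d.\ from $P$ while $\hat w$ is a fixed element of $\mathcal W$ (Procedure \ref{proc_EstW} enforces $|\hat w|\le 1$). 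Writing $T(w)\coloneqq N(w)/\max(\hat\sigma_w,\sqrt\gamma)$ and $q_\alpha\coloneqq\Phi^{-1}(1-\alpha)$, the event $\{p_{val}(\hat w)\le\alpha\}$ equals $\{T(\hat w)\ge q_\alpha\}$, so by the tower property $\Prob_P(p_{val}(\hat w)\le\alpha)=\EP[\Prob_P(T(\hat w)\ge q_\alpha\mid\mathcal D_A)]\le\sup_{w\in\mathcal W}\Prob_P(T(w)\ge q_\alpha)$, where the inner probability now refers to a \emph{deterministic} $w$ and fresh data $\mathcal D$. It therefore suffices to bound $\sup_{P}\sup_{w\in\mathcal W}\sup_{\alpha<\alpha_0}[\Prob_P(T(w)\ge q_\alpha)-\alpha]$. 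Two features of the restriction $\alpha<\alpha_0<1/2$ will be used: first, $q_\alpha>0$, which is what makes the one-sided test meaningful; and second, $q_\alpha\ge q_{\alpha_0}>0$, so the rejection threshold stays bounded away from $0$. Finally, under $H_0$ we have $\rho_w=\tau=0$, so the recentering term in \eqref{eq_GenTestNorm} vanishes identically.

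Fix $\delta>0$ and set $\zeta_0\coloneqq\delta\, q_{\alpha_0}^2\,\gamma>0$. I split according to whether $\sigma_w^2\ge\zeta_0$. When $\sigma_w^2\ge\zeta_0$, we have $w\in\VP(\zeta_0)$, so Theorem \ref{thm_AsNormIID} applies with $\zeta=\zeta_0$. Because $q_\alpha>0$ and $\max(\hat\sigma_w,\sqrt\gamma)\ge\hat\sigma_w$, inflating the denominator can only shrink a positive statistic, giving the inclusion $\{T(w)\ge q_\alpha\}\subseteq\{N(w)/\hat\sigma_w\ge q_\alpha\}$. Hence $\Prob_P(T(w)\ge q_\alpha)\le\Prob_P(N(w)/\hat\sigma_w\ge q_\alpha)$, and since the recentering vanishes, \eqref{eq_GenTestNorm} (which is uniform in $t\in\mathbb R$, $w\in\VP(\zeta_0)$ and $P$) yields $\Prob_P(N(w)/\hat\sigma_w\ge q_\alpha)\to 1-\Phi(q_\alpha)=\alpha$ uniformly over such $w$, over $P$, and over $\alpha$. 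Thus this regime contributes $\alpha+o(1)$; the uniformity in $t$ is what lets a single statement cover all $\alpha$, including $\alpha\to0$.

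The delicate regime, and the place where the floor $\gamma$ earns its keep, is $\sigma_w^2<\zeta_0$: here $N(w)/\hat\sigma_w$ need not be asymptotically normal (the long-tail phenomenon flagged in Section \ref{sec_MotivationMethod}), so Theorem \ref{thm_AsNormIID} is unavailable. Instead I use only that $\max(\hat\sigma_w,\sqrt\gamma)\ge\sqrt\gamma$, which gives $\{T(w)\ge q_\alpha\}\subseteq\{N(w)\ge q_\alpha\sqrt\gamma\}\subseteq\{N(w)\ge q_{\alpha_0}\sqrt\gamma\}$, a \emph{fixed} threshold bounded away from $0$. The moment computations underlying \eqref{eq_GenSigmaWCons} give $\EP[N(w)^2]\le\sigma_w^2+o(1)$ uniformly in $w\in\mathcal W$ and $P$, so Markov's inequality yields $\Prob_P(N(w)\ge q_{\alpha_0}\sqrt\gamma)\le (\sigma_w^2+o(1))/(q_{\alpha_0}^2\gamma)\le\delta+o(1)$ uniformly. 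Crucially, I do not need this to fall below $\alpha$ (which would be impossible uniformly as $\alpha\to0$ with only $2+\eta$ moments); since $\alpha\ge0$ is subtracted, the difference $\Prob_P(T(w)\ge q_\alpha)-\alpha$ is at most $\delta+o(1)$ throughout this regime, uniformly over $\alpha<\alpha_0$.

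Combining the two regimes gives $\limsup_{n\to\infty}\sup_P\sup_{w\in\mathcal W}\sup_{\alpha<\alpha_0}[\Prob_P(T(w)\ge q_\alpha)-\alpha]\le\delta$, and letting $\delta\downarrow0$ finishes the proof after undoing the conditioning on $\mathcal D_A$. I expect the main obstacle to be the small-variance regime $\sigma_w^2<\zeta_0$: the natural self-normalized statistic is not uniformly asymptotically normal there, and a Chebyshev bound at the nominal threshold $q_\alpha\sqrt\gamma$ cannot beat $\alpha$ uniformly as $\alpha\to0$. The resolution is structural rather than computational, namely that (i) the variance floor $\gamma$ caps how small the effective denominator can be, (ii) the constraint $\alpha<\alpha_0$ keeps the threshold away from $0$, and (iii) one only has to beat an arbitrary fixed $\delta$ rather than $\alpha$ itself, so that the crude second-moment bound suffices. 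Verifying the uniform estimate $\EP[N(w)^2]\le\sigma_w^2+o(1)$ over all $w\in\mathcal W$, including the effect of replacing $\hat A_w$ by its limit, is the remaining technical point and follows from the same arguments as the consistency statement \eqref{eq_GenSigmaWCons}.
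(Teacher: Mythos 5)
Your overall architecture is the same as the paper's: condition on $\mathcal D_A$ to reduce to a supremum over fixed $w\in\mathcal W$, split according to whether $\sigma_w^2$ is above or below a threshold, invoke Theorem \ref{thm_AsNormIID} (with vanishing recentering, since $\rho_w=\tau=0$ under $H_0$) in the large-variance regime via $\max(\hat\sigma_w,\sqrt\gamma)\ge\hat\sigma_w$, and use the floor $\sqrt\gamma$ plus a second-moment bound in the small-variance regime, with a free parameter ($\delta$ for you, $1/K$ for the paper) sent to zero at the end. Your two-regime split is in fact marginally cleaner than the paper's three-case split, since using $\max(\hat\sigma_w,\sqrt\gamma)\ge\sqrt\gamma$ unconditionally lets you skip the paper's intermediate case $\{\sigma_w \text{ small},\ \hat\sigma_w^2\ge\gamma\}$, which the paper disposes of via the consistency statement \eqref{eq_GenSigmaWCons}.

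However, there is one genuine gap, precisely at the step you flag as "the remaining technical point": the claimed uniform bound $\EP[N(w)^2]\le\sigma_w^2+o(1)$ is not available under Assumptions \ref{ass_SigmaMinIID}--\ref{ass_Epsilon}, and is generally false. By \eqref{eq_NEquiv}, $N(w)=\frac{1}{\sqrt{n_0}}\sum_{i\in\mathcal D}(w(Z_i)+\hat A_w^TZ_i)\epsilon_i$ contains the data-dependent $\hat A_w^T=-\hatED[w(Z)X^T]\hat M$, and $\hat M$ involves inverses of sample moment matrices. These inverses need not exist for finite $n$ (the paper assigns norm $\infty$ in that event), and even on the event where they exist their operator norms have no moment bounds under the stated assumptions; all the paper ever establishes for $\hat A_w-A_w$ is in-probability control (Lemma \ref{lem_AwBwCons}), never $L^2$ control. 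So $\EP[N(w)^2]$ can be infinite for every $n$, and Markov's inequality applied directly to $N(w)$ does not go through; this is not obtainable "by the same arguments" as \eqref{eq_GenSigmaWCons}, which is itself only an in-probability statement. The fix is exactly the paper's device: write $N(w)=N^*(w)+N'(w)$ with $N^*(w)=\frac{1}{\sqrt{n_0}}\sum_{i\in\mathcal D}(w(Z_i)+A_w^TZ_i)\epsilon_i$ and $N'(w)=(\hat A_w-A_w)^T\frac{1}{\sqrt{n_0}}\sum_{i\in\mathcal D}Z_i\epsilon_i$, bound $\{N(w)\ge q_{\alpha_0}\sqrt\gamma\}\subseteq\{N^*(w)\ge q_{\alpha_0}\sqrt\gamma/2\}\cup\{N'(w)\ge q_{\alpha_0}\sqrt\gamma/2\}$, apply Chebyshev only to $N^*(w)$ — whose second moment equals $\sigma_w^2$ exactly under $H_0$, giving a contribution of at most $4\delta$ in your notation — and kill the second event uniformly over $w\in\mathcal W$ and $P\in\mathcal P$ using \eqref{eq_RestTerm} (with $\tau=0$). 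With this substitution, the rest of your argument, including the final limit $\delta\downarrow0$ and the de-conditioning step, is correct as written.
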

The proof of Theorem \ref{thm_PValNull} can be found in Supplementary Appendix C.
The type I error control in Theorem \ref{thm_PValNull} does not require consistency of the machine learning method used in the construction of the test. This remark also applies to the extension to weak instruments in Section \ref{sec_ExtensionWeak} below.

\subsection{Consequences under the Alternative Hypothesis $\EP[\epsilon|{z}]\neq 0$}\label{sec_ConsequencesHA}
We next study power under alternatives. Because the weight function is learned via machine learning on an auxiliary sample, power analysis requires high-level conditions. While, again, we do not require consistency of the machine learning step, we need to assume that at least some predictive signal is captured by the learner.
We consider a sequence of distributions $(P_n)_{n\in\mathbb N}$ from the alternative hypothesis.
\begin{assumption}\label{ass_Power}
    There exist a sequence $(\xi_n)_{n\in\mathbb N}$ with $\sqrt{n_0}\xi_n\to\infty$ and $\kappa>0$ such that the sequence of random functions $(\hat w_n)_{n\in \mathbb N}$ (output of Procedure \ref{proc_EstW}) satisfies, as $n \to \infty$,
    \begin{align}
        \Prob_{P_n}\left(\hat w_n\in \mathcal V_{P_n}( \kappa)\right)&\to 1, \label{eq_CondPower1}\\
        \Prob_{P_n}(\rho_{\hat w_n} + (\hat {\bm a}_{\hat w_n}- {\bm a}_{\hat w_n})^T{\bm{\tau}}\geq\xi_n)&\to 1.\label{eq_CondPower2}
    \end{align}
\end{assumption}

Condition \eqref{eq_CondPower1} states that $\hat w_n$ is in some sense ``non-degenerate'' with high probability.
For \eqref{eq_CondPower2}, we will prove that $\hat {\bm a}_w - {\bm a}_w\to 0$ uniformly in $w\in \mathcal W$ and $P\in \mathcal P$ (see Lemma C.1 in Supplementary Appendix C). In general, for a fixed alternative, one may expect that $\rho_{\hat w_n}$ is lower-bounded by a constant, since $\hat w_n$ is constructed to be correlated with $\epsilon$ and with large $n$, the quality of $\hat w_n$ should get better. Then, \eqref{eq_CondPower2} is satisfied.
Under these assumptions, we have the following proposition.
\begin{proposition}\label{pro_Power}
    Let $\alpha\in (0,1)$ be fixed. If Assumptions \ref{ass_SigmaMinIID}, \ref{ass_MomentsIID}, \ref{ass_Epsilon}, and \ref{ass_Power} hold for $\mathcal P = (P_n)_{n\in\mathbb N}$, then
    $\lim_{n\to\infty}\Prob_{P_n}\left(p_{val}(\hat w_n)< \alpha\right)= 1.$
\end{proposition}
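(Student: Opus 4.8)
The plan is to prove that the one-sided test statistic
\[
    \frac{N(\hat w_n)}{\max(\hat\sigma_{\hat w_n}, \sqrt\gamma)}
\]
diverges to $+\infty$ in $P_n$-probability. By the definition of $p_{val}$ in Procedure \ref{proc_Test} and the strict monotonicity of $\Phi$, the event $\{p_{val}(\hat w_n) < \alpha\}$ coincides with $\{N(\hat w_n)/\max(\hat\sigma_{\hat w_n},\sqrt\gamma) > \Phi^{-1}(1-\alpha)\}$, so such divergence immediately gives $\Prob_{P_n}(p_{val}(\hat w_n) < \alpha)\to 1$. To exhibit the divergence I would split the numerator into a ``centered'' part and a ``drift'' part,
\[
    N(\hat w_n) = \underbrace{N(\hat w_n) - \sqrt{n_0}\big(\rho_{\hat w_n} + (\hat A_{\hat w_n} - A_{\hat w_n})^T\tau\big)}_{=:T_n} + \underbrace{\sqrt{n_0}\big(\rho_{\hat w_n} + (\hat A_{\hat w_n} - A_{\hat w_n})^T\tau\big)}_{=:D_n},
\]
and show that after dividing by the denominator the centered part stays bounded in probability while the drift part tends to $+\infty$.

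The main obstacle is controlling $T_n$, since $\hat w_n$ is random while Theorem \ref{thm_AsNormIID} only supplies a limit that is uniform over \emph{deterministic} $w \in \mathcal{V}_{P_n}(\zeta)$. I would handle this by conditioning on the auxiliary sample $\mathcal{D}_A$: because $\hat w_n$ is a function of $\mathcal{D}_A$ while, for a frozen $w$, the quantity $G_n(w):=(N(w)-\sqrt{n_0}(\rho_w+(\hat A_w-A_w)^T\tau))/\hat\sigma_w$ is a function of the independent main sample $\mathcal{D}$, the conditional law of $T_n/\hat\sigma_{\hat w_n}$ given $\mathcal{D}_A$ equals the law of $G_n(w)$ evaluated at the realized $w=\hat w_n$. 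On the event $\{\hat w_n \in \mathcal{V}_{P_n}(\kappa)\}$, which has probability tending to one by \eqref{eq_CondPower1}, the Kolmogorov-distance bound \eqref{eq_GenTestNorm} (applied with $\zeta = \kappa$) shows this conditional law is uniformly close to $\Phi$; integrating over $\mathcal{D}_A$ and using \eqref{eq_CondPower1} yields $G_n(\hat w_n) \Rightarrow N(0,1)$ and hence $G_n(\hat w_n) = T_n/\hat\sigma_{\hat w_n} = O_{P_n}(1)$. Since $\hat\sigma_{\hat w_n}/\max(\hat\sigma_{\hat w_n},\sqrt\gamma) \le 1$, the centered contribution $T_n/\max(\hat\sigma_{\hat w_n},\sqrt\gamma)$ is therefore bounded in probability.

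Finally I would bound the denominator from above and let the drift dominate. From \eqref{eq_DefSigmaW2}, together with Assumptions \ref{ass_MomentsIID} and \ref{ass_Epsilon} and the uniform boundedness of $\|A_w\|$ (via \eqref{eq_DefBw}--\eqref{eq_DefAw} and Lemma \ref{lem_AwBwCons}), one obtains a uniform upper bound $\sigma_w^2 \le \bar\sigma^2$ over $w\in\mathcal W$ and $P\in\mathcal P$; combined with the uniform variance consistency \eqref{eq_GenSigmaWCons} this gives $\max(\hat\sigma_{\hat w_n},\sqrt\gamma) \le \max(\bar\sigma,\sqrt\gamma) + o_{P_n}(1) = O_{P_n}(1)$. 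On the event of \eqref{eq_CondPower2}, which has probability tending to one, $D_n \ge \sqrt{n_0}\,\rho_n$, and since $\rho_n \gg 1/\sqrt{n_0}$ we have $\sqrt{n_0}\,\rho_n \to \infty$. Hence $D_n/\max(\hat\sigma_{\hat w_n},\sqrt\gamma) \to +\infty$ in probability, which together with the $O_{P_n}(1)$ bound on the centered term forces the entire statistic to $+\infty$ and completes the proof. The only genuinely delicate point is the conditioning argument that transfers the uniform-over-$w$ central limit theorem to the data-driven $\hat w_n$; the remaining steps are bookkeeping with the already-established uniform bounds.
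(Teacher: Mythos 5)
Your proposal is correct and takes essentially the same route as the paper's proof: the identical decomposition of $N(\hat w_n)$ into the centered term and the drift $\sqrt{n_0}\bigl(\rho_{\hat w_n}+(\hat A_{\hat w_n}-A_{\hat w_n})^T\tau\bigr)$, control of the centered term via Theorem \ref{thm_AsNormIID} with $\zeta=\kappa$, the uniform bound $\sigma_w\leq\bar\sigma$ together with \eqref{eq_GenSigmaWCons} for the denominator, and the independence of $\mathcal D_A$ and $\mathcal D$ to transfer uniform-in-$w$ statements to the data-driven $\hat w_n$. The only differences are presentational: the paper bounds the complement event $\{p_{val}(\hat w_n)\geq\alpha\}$ through a chain of set inclusions with the slowly diverging threshold $K_n=\sqrt{\sqrt{n_0}\rho_n}$ and sup-over-$w$ probability bounds, whereas you show the test statistic diverges to $+\infty$ using an $O_{P_n}(1)$ bound on the centered term obtained by explicit conditioning on $\mathcal D_A$.
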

Proposition \ref{pro_Power} states that if Assumptions \ref{ass_SigmaMinIID}, \ref{ass_MomentsIID}, \ref{ass_Epsilon}, and \ref{ass_Power} hold, then for fixed significance level, the power of the test converges to $1$. The proof of Proposition \ref{pro_Power} can be found in Supplementary Appendix C, and further justification for Assumption \ref{ass_Power} can be found in Supplementary Appendix A.

\subsection{When is $\EP[\epsilon|{z}]= 0$?}\label{sec_WhenIs}
The null hypothesis of the proposed test is $\E[\epsilon|\bm z] = 0$, where $\epsilon$ is the population two-stage least squares residual. To understand which misspecifications of the linear IV model our test cannot detect, we need to characterize situations where the IV model is misspecified but $\EP[\epsilon|{\bm z}]=0$.
For this, we consider a fairly general model with additive error,
\begin{equation}\label{eq_GeneralModel}
    y = f({\bm x}) + \eta.
\end{equation}
Two important examples of model misspecification fall into this setting.
\begin{enumerate}
    \item \textit{Linear model with invalid IV:} There exists ${\bm{\beta}}_0\in \mathbb R^p$ such that $f({\bm x})={\bm x}^T{\bm{\beta}}_0$ but $\E[\eta|{\bm z}]\neq 0$.
    \item \textit{Nonlinear model with valid IV:} The function $f$ is nonlinear but $\E[\eta|{\bm z}]=0$.
\end{enumerate}
From the definitions \eqref{eq_Pop2SLS} and \eqref{eq_DefM}, it follows that
$$\epsilon = y-{\bm x}^T{\bm{\beta}}^* = y-{\bm x}^T{\bm M}\EP[{\bm z}y] = f({\bm x}) + \eta - {\bm x}^T{\bm M}\EP[{\bm z}(f({\bm x})+\eta)]$$
and hence,
$$\EP[\epsilon|{\bm z}]=\EP[f({\bm x})+\eta|{\bm z}]-\EP[{\bm x}|{\bm z}]^T{\bm M}\EP[{\bm z}\EP[f({\bm x})+\eta|{\bm z}]].$$
This leads to a straightforward characterization of when $\EP[\epsilon|{\bm z}]=0.$
\begin{lemma}\label{lem_Power}
    For the model \eqref{eq_GeneralModel}, we have $\EP[\epsilon|{\bm z}]=0$ if and only if there exists ${\bm{\lambda}}\in \mathbb R^p$ such that $\EP[f({\bm x})+\eta|{\bm z}] = \EP[{\bm x}|{\bm z}]^T{\bm{\lambda}}$.
\end{lemma}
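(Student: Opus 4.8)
The plan is to build directly on the explicit formula for $\EP[\epsilon|Z]$ derived in the lines immediately preceding the statement. Abbreviating $g(Z) := \EP[f(X)+\eta\mid Z]$, that formula reads
$$\EP[\epsilon|Z] = g(Z) - \EP[X|Z]^T M\, \EP[Z\, g(Z)],$$
so the whole lemma reduces to characterizing when this expression vanishes almost surely. The forward implication is then immediate: if $\EP[\epsilon|Z]=0$ a.s., one simply sets $\lambda := M\,\EP[Z\, g(Z)]\in\mathbb R^p$, and the displayed formula rearranges to $g(Z) = \EP[X|Z]^T\lambda$, which is exactly the claimed representation. The only thing to check here is that $\lambda$ is a well-defined finite vector, which follows from Assumptions \ref{ass_SigmaMinIID} and \ref{ass_MomentsIID} (these guarantee invertibility of the matrices defining $M$ and finiteness of the relevant moments).

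For the reverse implication, I would suppose $g(Z) = \EP[X|Z]^T\lambda$ for some $\lambda\in\mathbb R^p$ and substitute into the formula. The key computation is to evaluate $\EP[Z\, g(Z)]$: by the tower property, $\EP[Z\, g(Z)] = \EP\big[Z\,\EP[X|Z]^T\big]\lambda = \EP[ZX^T]\lambda$, and therefore $M\,\EP[Z\, g(Z)] = M\,\EP[ZX^T]\lambda$. The crucial algebraic fact is that $M\,\EP[ZX^T]=I_p$: unfolding the definition \eqref{eq_DefM} of $M$,
$$M\,\EP[ZX^T] = \big[\EP[XZ^T]\EP[ZZ^T]^{-1}\EP[ZX^T]\big]^{-1}\EP[XZ^T]\EP[ZZ^T]^{-1}\EP[ZX^T] = I_p,$$
where the inner factor $\EP[XZ^T]\EP[ZZ^T]^{-1}\EP[ZX^T]$ cancels against its inverse. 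Consequently $M\,\EP[Z\, g(Z)]=\lambda$, and plugging back yields $\EP[\epsilon|Z] = g(Z) - \EP[X|Z]^T\lambda = 0$ a.s.

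The proof therefore contains essentially no analytic difficulty; the work is purely algebraic and concentrates on two points: (i) justifying that all expectations appearing are finite and that the matrix inverses in $M$ exist, for which Assumptions \ref{ass_SigmaMinIID} and \ref{ass_MomentsIID} suffice, and (ii) the identity $M\,\EP[ZX^T]=I_p$, which reflects the fact that population two-stage least squares reproduces any coefficient vector whose induced regression function already lies in the span of $\EP[X|Z]$. I expect (ii) to be the only step requiring a moment of care, and even this is a one-line cancellation once the definition of $M$ is expanded; no central limit argument or control over $\hat w$ enters, so the standing moment assumptions are all that is needed.
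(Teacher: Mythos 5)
Your proposal is correct and follows essentially the same route as the paper: both directions rest on the displayed formula for $\EP[\epsilon|Z]$, with $\lambda := M\,\EP[Z\,\EP[f(X)+\eta|Z]]$ for the forward implication and the identity $M\,\EP[ZX^T]=I_p$ (via the tower property $\EP[Z\,\EP[X|Z]^T]=\EP[ZX^T]$) for the converse. Your write-up merely spells out the cancellation inside $M$ and the well-definedness of $\lambda$ in more detail than the paper's one-line argument.
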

Lemma \ref{lem_Power} is proven in Supplementary Appendix C.

Hence, $\E[\epsilon|{\bm z}]= 0$ holds precisely when the conditional expectation of $f({\bm x}) + \eta$ given ${\bm z}$ is a linear function of the conditional expectation of ${\bm x}$ given ${\bm z}$. To make matters more concrete, we consider the previously mentioned special cases of \eqref{eq_GeneralModel}.
\begin{enumerate}
    \item \textit{Linear model with invalid IV:} In this case, the requirement for $\EP[\epsilon|{\bm z}]=0$ is that there exists ${\bm{\lambda}}\in  \mathbb R^p$ such that
    $\EP[{\bm x}|{\bm z}]^T{\bm{\beta}}_0 +\EP[\eta|{\bm z}]=\EP[{\bm x}^T|{\bm z}]{\bm{\lambda}},$
    i.e., there exists ${\bm{\lambda}}'\in \mathbb R^p$ such that
    $\EP[\eta|{\bm z}]=\EP[{\bm x}|{\bm z}]^T{\bm{\lambda}}'.$
    Intuitively, this means that the violation of the instrument validity should not lie in the span of the conditional expectation of the explanatory variables given the instruments. In particular, if $\EP[{\bm x}|{\bm z}]$ is linear in ${\bm z}$, i.e., there exists $\bm\Pi\in \mathbb R^{p\times d}$ such that $\EP[{\bm x}|{\bm z}] = \bm{\Pi} {\bm z}$, then a linear violation $\E[\eta|{\bm z}]={\bm z}^T\bm \mu$ of instrument validity can only be detected if $\bm\mu$ is not in the span of $\bm{\Pi}^T$. Hence, in this case, our test can only detect ``linear violations'' in the overidentified case $d>p$, as it is the case with the standard overidentifying restriction test \citep{SarganEstimationEconomicRelationships}.
    \item \textit{Nonlinear model with valid IV:} In this case, the requirement for $\EP[\epsilon|{\bm z}]=0$ is that there exists ${\bm{\lambda}} \in \mathbb R^p$ such that $\E[f({\bm x})|{\bm z}] = \E[{\bm x}|{\bm z}]^T{\bm{\lambda}}$. If we assume that ${\bm x}$ and ${\bm z}$ satisfy the completeness condition -- a common assumption in the nonlinear IV literature \citep{NeweyIVEstimationNonparametric} -- then $\E[f({\bm x})-{\bm x}^T{\bm{\lambda}}|{\bm z}]=0$ implies that $f({\bm x})-{\bm x}^ T{\bm{\lambda}} = 0.$
    Hence, under completeness, our test can detect arbitrary nonlinear misspecifications.
\end{enumerate}

\section{Extension to Weak and Many Instruments}\label{sec_ExtensionWeak}
The testing procedure in Sections \ref{sec_MotivationMethod} and \ref{sec_TheoryIID} is based on two-stage least squares residuals and therefore relies on strong identification under which two-stage least squares is well-behaved. In settings with weak or many instruments, however, two-stage least squares may be unreliable, which can invalidate the inference. To address this limitation, we introduce an extension in the spirit of the Anderson--Rubin test \citep{AndersonRubinTest} that remains valid without strong identification and allows joint specification testing and parameter inference without strong prior causal model assumptions (see below).

The underlying idea is the following: for each candidate $\bm\beta_0\in \mathbb R^p$, we consider the null hypothesis
$H_0(\bm\beta_0): \EP[y- \bm x^T\bm \beta_0|\bm z] = 0.$
Fixing $\bm\beta_0$ removes the need for consistent estimation of $\bm\beta$ and therefore does not rely on strong identification for validity. Testing $H_0(\bm\beta_0)$ amounts to testing whether the residual $r(\bm\beta_0)=y-\bm x^T\bm\beta_0$ has conditional mean zero given the instruments. We can apply the residual-prediction principle to $r(\bm\beta_0)$ and obtain a p-value $p_{val}(\bm\beta_0)$. By inverting the associated test, we can obtain a confidence set for $\bm \beta_0$ with all the parameter values that are compatible with well-specification at a given significance level.

This approach is of limited 
use when the dimension $p$ of $\bm x$ is larger than $1$ or $2$, since the confidence set for $\bm \beta_0$ becomes difficult to visualize and compute. However, in many applications, only a small number of regressors are endogenous, whereas the number of exogenous control variables may be substantially larger. In the following, we therefore focus on inference for the low-dimensional subvector of endogenous coefficients while partialling out the exogenous control variables.

\subsection{Method}
From now on, we distinguish notationally between endogenous regressors and exogenous control variables. Let $\bm c\in \mathbb R^q$ denote the additional exogenous control variables (potentially including an intercept). A linear IV model with controls corresponds to $y = \bm x^T\bm \beta + \bm c^T\bm \theta + u$ with $\EP[u|\bm z, \bm c] = 0$. Note that the setup from Sections \ref{sec_MotivationMethod} and \ref{sec_TheoryIID} can be recovered by replacing $\bm x \gets (\bm x, \bm c)$ and $\bm z \gets (\bm z, \bm c)$. We then consider the null hypothesis
\begin{equation}\label{eq_DefH0WeakPartial}
    H_0(\bm \beta_0): \exists \bm \theta \in \mathbb R^q \text{ s.t. } \EP[y - \bm x^T\bm \beta_0 - \bm c^T\bm \theta|\bm z, \bm c] = 0,
\end{equation}
which is indexed only by the endogenous coefficient $\bm \beta_0$.

Write $r(\bm\beta_0)=y-\bm x^T\bm\beta_0$ (and $r_i(\bm\beta_0)=y_i-\bm x_i^T\bm\beta_0$). Reusing the notation from Section \ref{sec_MotivationMethod}, for any $w:\mathbb R^{d}\times\mathbb R^q\to\mathbb R$, define the (sample) least squares coefficients from regressing $r(\bm \beta_0)$ and $w(\bm z, \bm c)$ on $\bm c$ on the main sample $\mathcal D$,
\begin{equation}\label{eq_DefLambdaHat}
    \hat{\bm\lambda}_r(\bm\beta_0) \coloneqq \hatED[\bm c\bm c^T]^{-1}\hatED[\bm c\,r(\bm\beta_0)],
    \quad
    \hat{\bm\lambda}_w \coloneqq \hatED[\bm c\bm c^T]^{-1}\hatED[\bm c\,w(\bm z,\bm c)].
\end{equation}
and the partialled-out quantities
\begin{equation}\label{eq_DefTildeRW}
    \tilde r_i(\bm\beta_0)\coloneqq r_i(\bm\beta_0)-\bm c_i^T\hat{\bm\lambda}_r(\bm\beta_0),
    \quad
    \tilde w_i \coloneqq w(\bm z_i,\bm c_i)-\bm c_i^T\hat{\bm\lambda}_w.
\end{equation}
The partialled-out test statistic and variance estimator are then\footnote{A variance estimator assuming homoskedasticity in the spirit of \eqref{eq_DefHatSigmaWHomoscedastic} can be found in Supplementary Appendix A.}
\begin{align}
    N(w,\bm\beta_0) &\coloneqq \frac{1}{\sqrt{n_0}}\sum_{i\in\mathcal D}\tilde w_i\,\tilde r_i(\bm\beta_0),\label{eq_WeakNPartial}\\
    \hat\sigma_w^2(\bm\beta_0) &\coloneqq 
    \frac{1}{n_0}\sum_{i\in\mathcal D}\tilde w_i^{\,2}\,\tilde r_i(\bm\beta_0)^2
    -\left(\frac{1}{n_0}\sum_{i\in\mathcal D}\tilde w_i\,\tilde r_i(\bm\beta_0)\right)^2.\label{eq_WeakVarPartial}
\end{align}
As in Procedure \ref{proc_Test}, we need to lower-bound the variance by some fixed $\gamma >0$ to define the p-value
\begin{equation}\label{eq_PValueWeak}
    p_{val}(\bm\beta_0) = p_{val}(\bm \beta_0, \hat w)\coloneqq 1-\Phi\!\left(\frac{N(\hat w,\bm\beta_0)}{\max\!\left(\hat\sigma_{\hat w}(\bm\beta_0),\sqrt\gamma\right)}\right),
\end{equation}
where $\hat w$ is learned on the auxiliary sample $\mathcal D_A$ and depends on $\bm \beta_0$ (see below).

Inverting the associated test, we obtain the confidence set
\begin{equation}\label{eq_DefCAlpha}
    \mathcal C_\alpha=\left\{\bm \beta_0\in \mathbb R^p\mid p_{val}(\bm\beta_0) \geq\alpha\right\}.
\end{equation}
In addition to being robust to weak identification, the set $\mathcal C_\alpha$ enables joint specification testing and parameter inference: if the model is well-specified, the set $\mathcal C_\alpha$ asymptotically has coverage at least $1-\alpha$. If $\mathcal C_\alpha$ is empty, well-specification can be rejected. In particular, inference using the confidence set $\mathcal C_\alpha$ avoids the problem of selective inference
that arises when one considers parameter inference only if well-specification was not rejected in an initial step. Alternatively, overall well-specification can be assessed by considering $\sup_{\bm \beta_0\in \mathbb R^p} p_{val}(\bm \beta_0)$, which is smaller than $\alpha$ if and only if $\mathcal C_\alpha$ is empty.

It remains to discuss how the function $\hat w$ should be constructed in the presence of exogenous controls.
A naive approach would regress the residuals $y_i - \bm x_i^T\bm\beta_0$ on $(\bm z_i,\bm c_i)$ using the auxiliary sample $\mathcal D_A$. However, this may lead the learned weight to capture substantial linear signal in $\bm c$, which is subsequently removed in the partialling out step of the variance estimator, potentially resulting in a very small variance estimate and undesirable truncation with $\gamma$.
To avoid this issue, we instead first partial out the linear effect of $\bm c$ from the residuals and then apply residual prediction to the resulting quantity. This leads to the following procedure.
\begin{procedure}[Estimation of weight function, weak-instrument-robust, partialling out]\label{proc_EstWPartial}
Input: auxiliary sample $({\bm x}_i, y_i, {\bm z}_i, {\bm c}_i)_{i\in \mathcal D_A}$, candidate $\bm \beta_0$.
    \begin{enumerate}
    \item Let $r_i(\bm \beta_0) = y_i - \bm x_i^T \bm\beta_0$, $i\in \mathcal D_A$ and let $\hat {\bm \lambda}_r^{A}(\bm \beta_0)$ be as in \eqref{eq_DefLambdaHat} but with the data $\mathcal D_A$ instead of $\mathcal D$.
    \item With a user-chosen machine learning method, regress $(r_i(\bm \beta_0) - \bm c_i^T\hat {\bm \lambda}_r^{A}(\bm\beta_0))_{i\in \mathcal D_A}$ on $({\bm z}_i, \bm c_i)_{i\in \mathcal D_A}$. Call the resulting function $\hat w_0(\cdot, \cdot)$.
    \item For some $K>0$, define $\hat w(\cdot, \cdot) = \frac{1}{K}\sign(\hat w_0(\cdot, \cdot))\min(|\hat w_0(\cdot, \cdot)|, K)$. Return $\hat w(\cdot, \cdot)$.
\end{enumerate}
\end{procedure}
Note that in step 2, we regress the partialled-out residuals both on $\bm z$ and on $\bm c$, so the learner can capture nonlinearities and interactions involving controls, but linear control effects are removed beforehand.
\begin{remark}
    When evaluating $H_0(\bm \beta_0)$ over a grid of values for $\bm \beta_0$ (e.g., in order to determine confidence sets), obtaining a new $\hat w$ for every candidate $\bm \beta_0$ might be computationally expensive. As a computational shortcut, we recommend tuning hyperparameters only once (for example, at the two-stage least squares residuals) and refit the machine learner at the other values of $\bm \beta$ using those hyperparameters. This leads to a significant speedup.
\end{remark}

\subsection{Theoretical Results}
We now provide validity results for the procedure introduced in the previous section under i.i.d.\ sampling. Extensions to non-i.i.d. data are provided in the Supplementary Appendix.

Let $\mathcal P$ be a class of distributions for $(\bm x,y,\bm z,\bm c)\in\mathbb R^{p+1+d+q}$, and assume that $(\bm x_i,y_i,\bm z_i,\bm c_i)_{i=1,\ldots,n}$ are i.i.d.\ copies of $(\bm x,y,\bm z,\bm c)$ under $P\in\mathcal P$. We consider distributions under the null hypothesis:
\begin{assumption}\label{ass_H0Weak}
For all $P\in\mathcal P$, there exist $\bm\beta=\bm\beta(P)\in\mathbb R^p$ and $\bm\theta=\bm\theta(P)\in\mathbb R^q$ such that
$$\EP\left[y-\bm x^T\bm\beta(P)-\bm c^T\bm\theta(P)| \bm z,\bm c\right]=0.$$
\end{assumption}
Assumption \ref{ass_H0Weak} formalizes that we consider distributions $P$, where the null hypothesis \eqref{eq_DefH0WeakPartial} holds at $\bm \beta_0 = \bm \beta(P)$.
Define $\nu\coloneqq y-\bm x^T\bm\beta(P)-\bm c^T\bm\theta(P)$. We need the following moment conditions.
\begin{assumption}\label{ass_MomentsH0Weak}
There exist $\eta,c,C\in(0,\infty)$ such that for all $P\in\mathcal P$ it holds that $\sigma_{\min}(\EP[\bm c\bm c^T])\geq c$, $\EP[|\nu|^{2+\eta}]\leq C$, $\EP[\|\bm c\|_2^{2+\eta}|\nu|^{2+\eta}]\leq C$, and $\EP[\|\bm c\|_2^{4 + \eta}]\leq C$.
\end{assumption}
Note that there are no additional moment conditions on the instruments $\bm z$ and the endogenous variables $\bm x$, in particular, no assumption on the IV strength.
For $P\in\mathcal P$ and $\zeta>0$, define
\begin{align}
\bar\sigma_w^2 &\coloneqq 
\EP\!\left[\left(w(\bm z,\bm c)-\bm c^T\EP[\bm c\bm c^T]^{-1}\EP[\bm c\,w(\bm z,\bm c)]\right)^2\nu^2\right],\label{eq_DefBarSigmaW}\\
\mathcal U_P(\zeta) &\coloneqq 
\left\{w:\mathbb R^{d}\times\mathbb R^{q}\to[-1,1]\, \vert\, \bar\sigma_w^2\ge \zeta\right\}.\label{eq_DefUZeta}
\end{align}
Then, we have the following analogue of Theorem \ref{thm_AsNormIID}, which follows from Proposition B.2 in Supplementary Appendix B and its proof.
\begin{theorem}\label{thm_WeakAsNormIID}
    Assume that Assumptions \ref{ass_H0Weak} and \ref{ass_MomentsH0Weak} hold, let $N(w, \bm \beta(P))$ and $\hat \sigma_w^2(\bm \beta(P))$ be according to \eqref{eq_WeakNPartial} and \eqref{eq_WeakVarPartial} (plugging in $\bm \beta(P)$ for $\bm \beta_0$) and let $\zeta>0$ be arbitrary but fixed. Then,
    \begin{align}
        \lim_{n\to\infty}\sup_{P\in \mathcal P}\sup_{w\in \mathcal U_P(\zeta)}\sup_{t\in \mathbb R}\left|\Prob_P\left(\frac{N(w, \bm \beta(P))}{\hat \sigma_w(\bm\beta(P))}\leq t\right)-\Phi(t)\right| &= 0, \label{eq_AsNormWeak}\\
        \forall\delta>0:\lim_{n\to\infty}\sup_{P\in \mathcal P}\sup_{|w|\leq 1}\Prob_P\left(\left|\bar\sigma_w^2-\hat\sigma_w^2(\bm \beta(P))\right| > \delta\right) & = 0.
    \end{align}
\end{theorem}
Similarly to before, the following result follows (proven in Supplementary Appendix C), stating that our proposed approach leads to asymptotically valid p-values.
\begin{theorem}\label{thm_WeakPvalNull}
Assume that the class of distributions $\mathcal P$ satisfies Assumptions \ref{ass_H0Weak} and \ref{ass_MomentsH0Weak}. Let $p_{val}(\cdot)$ be defined as in \eqref{eq_PValueWeak} with $N(\cdot, \cdot)$ and $\hat\sigma_\cdot(\cdot)$ defined in \eqref{eq_WeakNPartial} and \eqref{eq_WeakVarPartial}, $\hat w$ the output of Procedure \ref{proc_EstWPartial} and let $\gamma > 0$ be fixed. Then, it holds that for all $\alpha_0 \in (0, 0.5)$,
$$\limsup_{n\to\infty}\sup_{P\in \mathcal P}\sup_{\alpha \in (0, \alpha_0)}\left\{\Prob_P\left(p_{val}\left(\bm\beta(P)\right)\leq \alpha\right)-\alpha\right\}\leq 0.$$
\end{theorem}
\begin{remark}
    Since $\Prob_P\left(\bm \beta(P)\in \mathcal C_\alpha\right) = \Prob_P\left(p_{val}(\bm \beta(P)) \geq \alpha\right)$ it follows from Theorem \ref{thm_WeakPvalNull} that the inverted set $\mathcal C_\alpha$ has asymptotic coverage at least $1-\alpha$ for $\bm\beta(P)$, uniformly over $\mathcal P$ and $\alpha\in(0,\alpha_0)$. If the conditional moment restriction \eqref{eq_DefH0WeakPartial} fails for every $\bm\beta_0\in\mathbb R^p$, so that no parameter value is compatible with well-specification, the confidence set $\mathcal C_\alpha$ is expected to become empty as the sample size increases. A formal proof of such emptiness would require additional power conditions on the learned weight function, analogous to those discussed in Sections \ref{sec_ConsequencesHA} and \ref{sec_WhenIs}, which we do not pursue here.
\end{remark}

\section{Experiments}\label{sec_Experiments}
We apply our method to synthetic data and a real-world dataset. The method is implemented in the R package \texttt{RPIV} available on CRAN and the R code to replicate all the experiments can be found on GitHub (\url{https://github.com/cyrillsch/RPIV_Application}).
\subsection{Simulations}\label{sec_Simulations}
A detailed simulation study can be found in Supplementary Appendix D. In the following, we present a selection of the simulation results.
\subsubsection{Simulation design}
We evaluate the finite-sample performance of our procedure under a flexible data-generating process, which we sketch next. The precise data-generating process is described in Supplementary Appendix D.

Let $\bm z \in \mathbb{R}^{n_{IV}}$ denote the instruments and $\bm c \in \mathbb{R}^{n_C}$ additional exogenous controls, both generated as standard normal variables which are partially correlated with each other. The (univariate) endogenous regressor is generated as
$x = \pi \tanh\!\left(\frac{1}{\sqrt{n_{IV}}}\sum_{k=1}^{n_{IV}} z_k \right)
      + 0.3 c_1
      + \delta,$
where $\pi \geq 0$ controls instrument strength and $\delta$ is correlated with the structural error through a latent confounder, so that instrumental variables are required for identification. The outcome is given by
$y = 2 - x + 0.5 \sum_{j=1}^{n_C} c_j + \epsilon,$
where the structural error $\epsilon$ may be homoskedastic or heteroskedastic. In the heteroskedastic setting, we scale $\epsilon$ by $|z_1|.$
Under the alternative, we add a violation term $s_{viol} v$ to $y$, where $s_{viol} \geq 0$ controls the violation strength and $v$ corresponds to one of four deviations: nonlinear dependence on the instrument ($ z_1^2$ or $\text{sign}(z_1)$) or nonlinear misspecification of the structural equation ($( -x + 0.5\sum_{j = 1}^{n_C}  c_j)^2$  or $\sign ( -x + 0.5\sum_{j=1}^{n_C}  c_j)$), which we will refer to as \textit{``z squared''}, \textit{``sign(z)''},  \textit{``misspec. squared''} and \textit{``misspec. sign''} in the following.

For each configuration, we compute rejection frequencies over $1000$ Monte Carlo repetitions at level $\alpha = 0.05$.
We compare the following procedures (implementation details are given in Supplementary Appendix D):
\begin{description}
    \item[RP hom./RP het.:] Our residual prediction test from Sections \ref{sec_MotivationMethod} and \ref{sec_TheoryIID} with homoskedastic and heteroskedastic variance estimators (controls $\bm c$ and the intercept are added to both $\bm x$ and $\bm z$).
    \item[weak RP hom./weak RP het.:] The weak-identification-robust extension from Section \ref{sec_ExtensionWeak} (also with homoskedastic and heteroskedastic variance estimators), where p-values are obtained by taking the maximum $p_{val}(\beta_0)$ over a grid covering $[-10, 10]$ of candidate $\beta_0$ (note that $\beta_0$ is univariate and its true value under $H_0$ is equal to -1).
    \item[overid. J:] The classical overidentifying restrictions test if the number of instruments is larger than 1. For $n_{IV}=1$, we add $z^2$ as additional instrument, which corresponds to the approach by \citet{DieterleASimpleDiagnostic}.
    \item[smooth asymp./smooth boot.:] The smooth test of \cite{DelgadoConsistentTestsOfConditionalMomentRestrictions} applied to the linear IV model with either asymptotic or bootstrapped p-values (see Supplementary Appendix D for details).
    \item[ICM:] The integrated conditional moment (ICM) test of \cite{AntoineIdentificationRobustNonparametricInference}, where p-values are obtained by taking maximum p-values over a grid of candidate $\beta_0$ values (see Supplementary Appendix D for details).
\end{description}
For the residual prediction based procedures, we use a random forest tuned using out-of-bag error and use $n_A = \min(n/2, \exp(1)\cdot n/\log n)$ of the $n$ samples as the training set and the rest as the test set.\footnote{This choice is somewhat arbitrary, but from a theoretical point of view, it is appealing to make the size of the auxiliary sample of slightly smaller order than $n$ in order not to lose power asymptotically.}

\subsection{Size under the Null Hypothesis}
Figure \ref{fig_H0Main} reports rejection frequencies under $H_0$ (i.e., $s_{viol} = 0$) as the sample size increases. 
We fix $n_C = 2$ and $\pi = 1$ and consider $n_{IV} \in \{1,2\}$ under both homoskedastic and heteroskedastic errors.
\begin{figure}[t]
\centering
\includegraphics[width=0.9\textwidth]{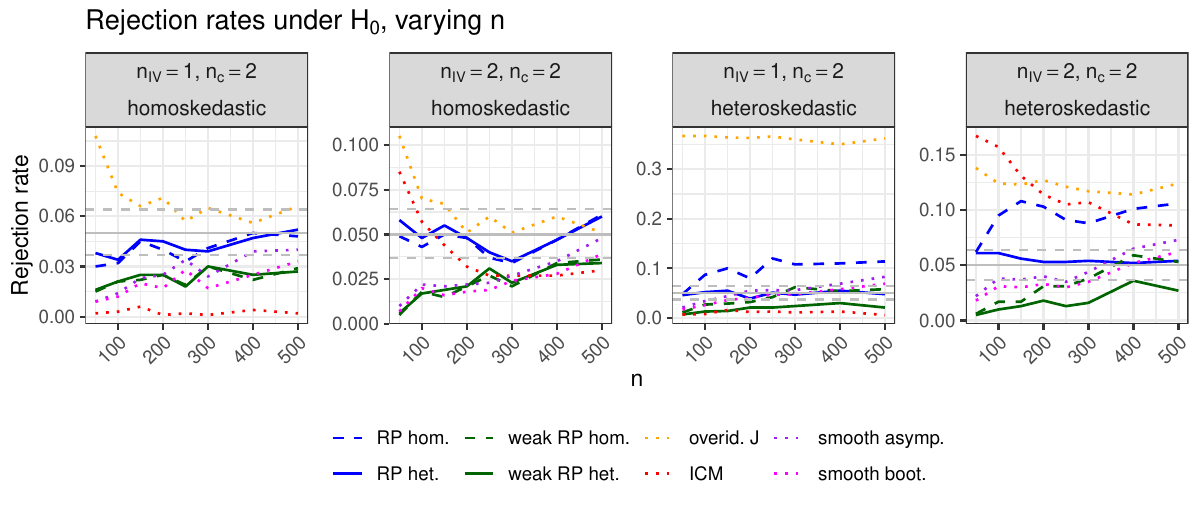}
\caption{Simulation results under $H_0$ with varying $n$. The gray solid line marks $\alpha = 0.05$, the gray dashed lines show pointwise 95\% Monte Carlo bounds for rejection rates based on 1000 replications. Dashed curves coincide with same-colored solid curves where not visible.} 
\label{fig_H0Main}
\end{figure}
Across all configurations, the rejection rates of the proposed residual prediction tests are close to (RP) or below (weak RP) the nominal level of $\alpha = 0.05$.
The heteroskedasticity-robust variants remain correctly sized under heteroskedastic errors, whereas procedures relying on homoskedasticity exhibit noticeable size distortions. Moreover, \textit{overid. J} rejects too often for both heteroskedastic settings and \textit{ICM} rejects too often for one of the heteroskedastic settings, whereas the \textit{smooth} tests seem to remain correctly sized.

In Figure \ref{fig_H0Main_vary_n_iv}, we fix $n = 300$ and $n_C = 2$ and report rejection frequencies under $H_0$ as the number of instruments $n_{IV}$ increases.
\begin{figure}[t]
\centering
\includegraphics[width=0.65\textwidth]{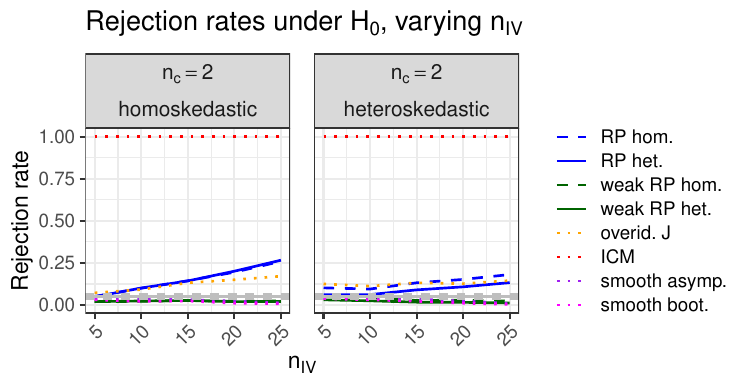}
\caption{Simulation results under $H_0$ with varying $n_{IV}$. The gray solid line marks $\alpha = 0.05$, the gray dashed lines show pointwise 95\% Monte Carlo bounds for rejection rates based on 1000 replications. Dashed curves coincide with same-colored solid curves where not visible.} 
\label{fig_H0Main_vary_n_iv}
\end{figure}
We see that \textit{weak RP hom./het.} control size well, whereas \textit{RP hom./het} reject too often for large $n_{IV}$. Moreover, also \textit{ICM} and \textit{overid. J} suffer from overrejection.

\subsection{Power under the Alternative Hypothesis}
Figures \ref{fig_HAMain} and \ref{fig_HAMain_vary_iv_strength} report rejection rates as a function of the violation strength $s_{viol}$ and the IV strength $\pi$, respectively, in the homoskedastic setting with $n = 300$, $n_C = 2$, and $n_{IV} = 1$. There is no uniformly dominant procedure across all the simulation settings. The residual prediction tests are competitive with the alternatives and yield higher power than the overidentifying J test (augmented with $z^2$) except in the \textit{z squared} setting. All procedures have limited power in the \textit{misspec. sign} setting.
In many settings, \textit{RP hom./het.} exhibit larger power than \textit{weak RP hom./het.}, but for small IV strength $\pi$ and violation \textit{sign(z)}, \textit{weak RP hom./het.} outperforms \textit{RP hom./het} in terms of power (second panel of Figure \ref{fig_HAMain_vary_iv_strength}).
\begin{figure}[t]
\centering
\includegraphics[width=0.9\textwidth]{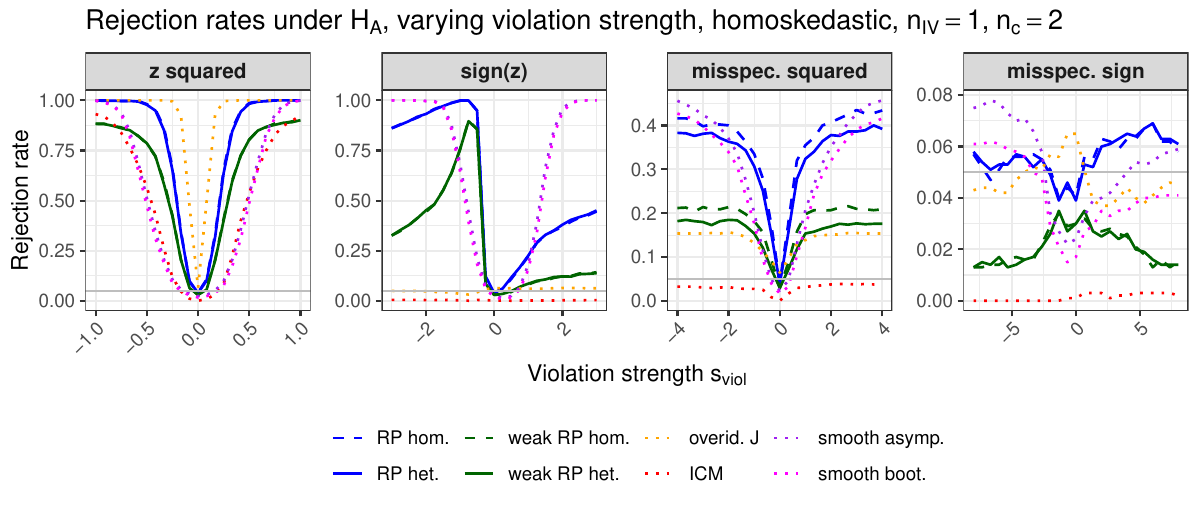}
\caption{Simulation results under $H_A$ with varying $s_{viol}$. The gray solid line marks $\alpha = 0.05$. Dashed curves coincide with same-colored solid curves where not visible.} 
\label{fig_HAMain}
\end{figure}
\begin{figure}[t]
\centering
\includegraphics[width=0.9\textwidth]{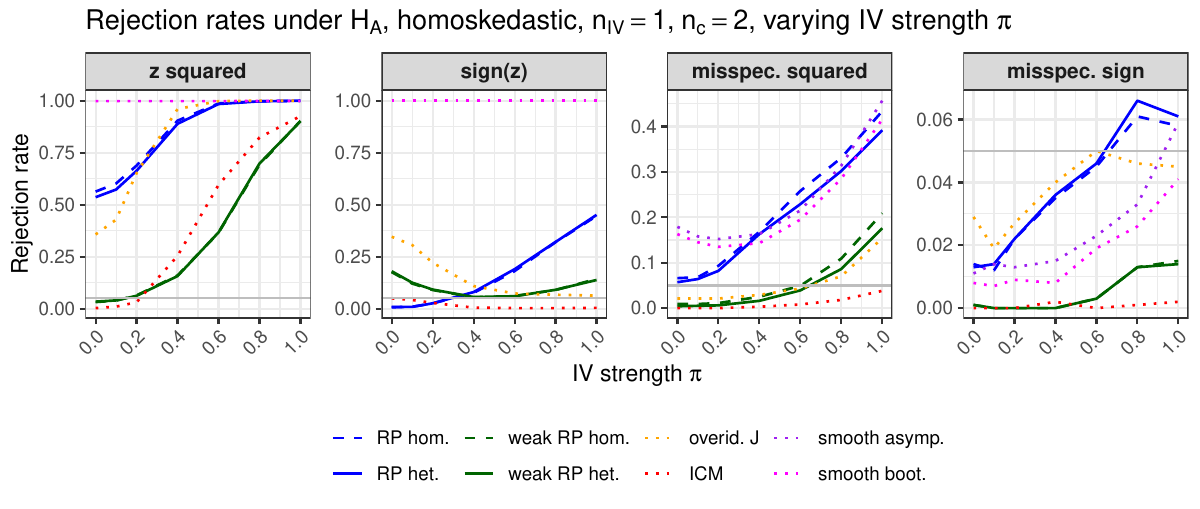}
\caption{Simulation results under $H_A$ with varying $\pi$. The gray solid line marks $\alpha = 0.05$. Dashed curves coincide with same-colored solid curves where not visible.} 
\label{fig_HAMain_vary_iv_strength}
\end{figure}

Additional simulations reported in Supplementary Appendix D investigate further simulation settings.

\subsection{Empirical Illustration -- Returns to Education}
We revisit the study of \citet{CardCollege} on the effect of education on wages. 
Following \citet{WooldridgeIntroductoryEconometrics}, the outcome is log wage, the endogenous regressor is years of education, and the instrument is an indicator for growing up near a four-year college. 
Additional exogenous controls include experience, experience squared, and demographic variables. The dataset is available in the R package \texttt{wooldridge} \citep{WooldridgeRPackage}.

To reduce variability due to sample splitting, the residual prediction procedures are repeated 100 times, and we report the doubled median p-value, which yields a conservative p-value \citep{MeinshausenPValuesForHDRegression}.

Figure \ref{fig_Card} plots p-values as a function of the candidate coefficient $\beta_0$. 
The left panel corresponds to the standard specification including experience squared, while the right panel omits this quadratic term. 
P-values are shown on a log scale. 
The residual prediction tests \textit{RP hom./het.} and the smooth tests appear as horizontal lines, whereas the weak-identification robust versions \textit{weak RP hom./het.} and the ICM procedure vary with $\beta_0$.

For the standard specification, \textit{RP hom./het.} do not reject at level $\alpha=0.05$, while the \textit{weak RP hom./het.} yields p-values smaller than $0.05$ across all candidate values of $\beta_0$. 
The first-stage $F$-statistic equals 13.26, indicating moderate instrument strength, and we have seen in the previous section that \textit{weak RP hom./het.} can sometimes outperform \textit{RP hom./het.} in terms of power, which might explain the differing behavior.
The p-values of the \textit{smooth} tests are close to the 5\% significance level, and the \textit{ICM} test produces a nonempty set of candidate coefficients compatible with well-specification. Note that the classical \textit{overid. J} test is not applicable because only one instrument is available, and the approach by \cite{DieterleASimpleDiagnostic} is not applicable, because $z = z^2$ (instrument is binary).

For the specification omitting experience squared, all procedures except the \textit{ICM} test reject at level $\alpha=0.05$, providing strong evidence against well-specification.
\begin{figure}[t]
\centering
\includegraphics[width=0.6\textwidth]{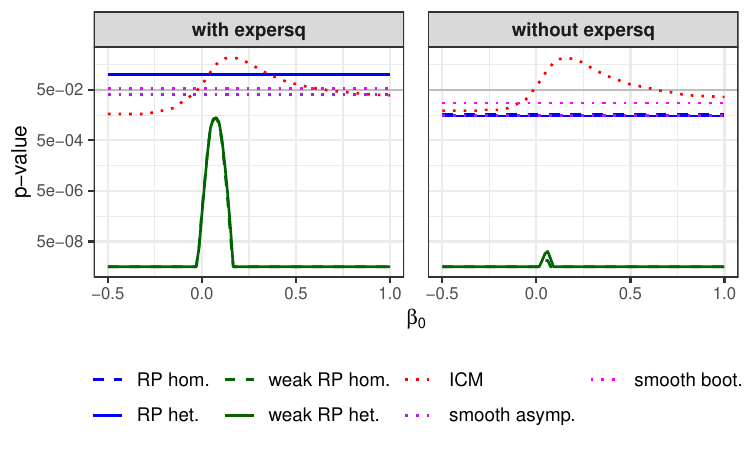}
\caption{P-values for two specifications of the Card dataset as a function of $\beta_0$. Dashed curves coincide with same-colored solid curves where not visible. The gray horizontal line indicates the significance level $\alpha = 0.05$.} 
\label{fig_Card}
\end{figure}

An additional real-world data illustration can be found in Supplementary Appendix E.

\section{Conclusion}
We introduced a new approach for specification testing in linear instrumental
variable (IV) models based on the principle of residual prediction. By leveraging modern machine learning methods to
search for predictive structure, the proposed test is able to detect a wide range of
violations of the IV assumptions.
A key advantage of our approach is that it applies even in the just-identified case, where classical tests such as the Sargan--Hansen J test are not available. We further extended the methodology with an Anderson--Rubin-type construction that remains valid in the presence of weak or many instruments, thereby addressing an important practical limitation of residual-based diagnostics relying on two-stage least squares. The resulting procedures come with asymptotic guarantees for type I error control and consistency under broad classes of alternatives, while allowing practitioners to flexibly incorporate user-chosen machine learning algorithms.
We demonstrated their performance 
through experiments on synthetic data and 
real-world datasets, illustrating their practical relevance.

An obvious limitation of our approach is that it tests only a superset of the well-specification, i.e., the existence of a vector ${\bm{\beta}}$ such that $\E[y-{\bm x}^T{\bm{\beta}}|{\bm z}]=0$. As discussed in Section \ref{sec_WhenIs}, there can be scenarios where the well-specification is violated, but such a vector ${\bm{\beta}}$ still exists. This limitation parallels concerns raised by \citet{ParenteACautionaryNote} about the classical overidentification test.

Extensions of our methodology to dependent data are, in principle, possible. In Supplementary Appendix B, we present high-level conditions that are needed for our test statistic to be asymptotically normal (a central limit theorem, a law of large numbers, and a consistent variance estimator) and use them to construct extensions of our tests to clustered data. Time series data would pose greater challenges, primarily because the tests require the auxiliary sample used to estimate the weight function to be independent of the main sample -- a condition difficult to meet under serial dependence.

Whereas we focused on the special case of the linear instrumental variable model, an extension to arbitrary models defined by conditional moment restrictions is
possible. We sketch this extension in Supplementary Appendix G.

Overall, our methodology provides flexible and theoretically well-founded tools for assessing model specification in linear IV models, with advantages compared to existing approaches due to the generality and power of machine learning. 
They are easy to use, thanks to our publicly available implementation in the R package \texttt{RPIV} and the Python package \texttt{ivmodels}.

\section*{Acknowledgments}
CS received funding from the Swiss National Science Foundation, grant no. 214865. ML was
partially supported by ETH Foundations of Data Science.

\bibliographystyle{abbrvnat}

\bibliography{reference}

@article{BierensConsistentModelSpecification,
  title={Consistent model specification tests},
  author={Bierens, Herman J},
  journal={Journal of Econometrics},
  volume={20},
  number={1},
  pages={105--134},
  year={1982},
  publisher={Elsevier}
}

@article{AntoineIdentificationRobustNonparametricInference,
	author = {Bertille Antoine and Pascal Lavergne},
	journal = {Journal of Econometrics},
	keywords = {Weak instruments, Hypothesis testing, Semiparametric model},
	number = {1},
	pages = {1–24},
	title = {Identification-robust nonparametric inference in a linear {IV} model},
	volume = {235},
	year = {2023}
}

@article{ShahGoodnessOfFit,
  title={Goodness-of-fit tests for high dimensional linear models},
  author={Shah, Rajen D and B{\"u}hlmann, Peter},
  journal={Journal of the Royal Statistical Society Series B: Statistical Methodology},
  volume={80},
  number={1},
  pages={113--135},
  year={2018},
  publisher={Oxford University Press}
}

@article{JankovaGOFTestingInHDGLM,
    author = {Janková, Jana and Shah, Rajen D. and Bühlmann, Peter and Samworth, Richard J.},
    title = {Goodness-of-fit Testing in High Dimensional Generalized Linear Models},
    journal = {Journal of the Royal Statistical Society Series B: Statistical Methodology},
    volume = {82},
    number = {3},
    pages = {773-795},
    year = {2020},
    month = {05},
}

@article{NeweyGMMSpecificationTesting,
	author = {Newey, Whitney K.},
	journal = {Journal of Econometrics},
	number = {3},
	pages = {229–256},
	title = {Generalized method of moments specification testing},
	volume = {29},
	year = {1985}
}

@ARTICLE{DelgadoConsistentTestsOfConditionalMomentRestrictions,
title = {Consistent Tests of Conditional Moment Restrictions},
author = {Delgado, Miguel and Dominguez, Manuel A. and Lavergne, Pascal},
year = {2006},
journal = {Annals of Economics and Statistics},
number = {81},
pages = {33-67}
}

@book{WooldridgeEconometricAnalysis,
  title={Econometric Analysis of Cross Section and Panel Data},
  author={Wooldridge, Jeffrey M.},
  year={2002},
  publisher={MIT Press}
}

@article{ShahPetersHardness,
	author = {Rajen D. Shah and Jonas Peters},
	journal = {The Annals of Statistics},
	number = {3},
	pages = {1514 – 1538},
	title = {{The hardness of conditional independence testing and the generalised covariance measure}},
	volume = {48},
	year = {2020}
}

@book{DurrettProb,
	author = {Rick Durrett},
	collection = {Cambridge Series in Statistical and Probabilistic Mathematics},
	edition = {5},
	place = {Cambridge},
	publisher = {Cambridge University Press},
	series = {Cambridge Series in Statistical and Probabilistic Mathematics},
	title = {Probability: Theory and Examples},
	year = {2019}
}

@article{DieterleASimpleDiagnostic,
	author = {Steven G. Dieterle and Andy Snell},
	journal = {Labour Economics},
	pages = {76–86},
	title = {A simple diagnostic to investigate instrument validity and heterogeneous effects when using a single instrument},
	volume = {42},
	year = {2016}
}

@article{ParenteACautionaryNote,
	author = {Paulo M.D.C. Parente and J.M.C. {Santos Silva}},
	journal = {Economics Letters},
	keywords = {GMM, Hansen’s -test, Instrumental variables, Sargan test},
	number = {2},
	pages = {314–317},
	title = {A cautionary note on tests of overidentifying restrictions},
	volume = {115},
	year = {2012}
}

@article{SarganEstimationEconomicRelationships,
 author = {J. D. Sargan},
 journal = {Econometrica},
 number = {3},
 pages = {393--415},
 publisher = {[Wiley, Econometric Society]},
 title = {The Estimation of Economic Relationships using Instrumental Variables},
 urldate = {2025-05-06},
 volume = {26},
 year = {1958}
}

@article{HansenLargeSampleProperties,
 author = {Lars Peter Hansen},
 journal = {Econometrica},
 number = {4},
 pages = {1029--1054},
 publisher = {[Wiley, Econometric Society]},
 title = {Large Sample Properties of Generalized Method of Moments Estimators},
 volume = {50},
 year = {1982}
}

@article{BasmannOnFiniteSampleDistributions,
 author = {R. L. Basmann},
 journal = {Journal of the American Statistical Association},
 number = {292},
 pages = {650--659},
 publisher = {[American Statistical Association, Taylor & Francis, Ltd.]},
 title = {On Finite Sample Distributions of Generalized Classical Linear Identifiability Test Statistics},
 volume = {55},
 year = {1960}
}

@article{LavergneAHausmanSpecificationTest,
author = {Pascal Lavergne and Pierre Nguimkeu},
title = {A {Hausman} Specification Test of Conditional Moment Restrictions},
journal = {TSE Working Paper n. 16-743},
year = {2016},
}

@article{WhangConsistentSpecificationTestingForCMR,
	author = {Yoon-Jae Whang},
	journal = {Economics Letters},
	keywords = {Bootstrap, Conditional moment restriction, Consistent specification test, GMM},
	number = {3},
	pages = {299–306},
	title = {Consistent specification testing for conditional moment restrictions},
	volume = {71},
	year = {2001}
}

@article{EscancianoAGaussianProcessApproachToModelChecks,
	author = {Juan Carlos Escanciano},
	journal = {The Annals of Statistics},
	keywords = {Conditional moment restrictions, Gaussian processes, Goodness-of-fit, kernels, Lasso},
	number = {5},
	pages = {2456 – 2481},
	publisher = {Institute of Mathematical Statistics},
	title = {{A Gaussian process approach to model checks}},
	volume = {52},
	year = {2024}
}

@article{StaigerIVRegressionWithWeakInstruments,
 author = {Douglas Staiger and James H. Stock},
 journal = {Econometrica},
 number = {3},
 pages = {557--586},
 publisher = {[Wiley, Econometric Society]},
 title = {Instrumental Variables Regression with Weak Instruments},
 volume = {65},
 year = {1997}
}

@article{HausmanSpecificationTests,
 author = {J. A. Hausman},
 journal = {Econometrica},
 number = {6},
 pages = {1251--1271},
 publisher = {[Wiley, Econometric Society]},
 title = {Specification Tests in Econometrics},
 volume = {46},
 year = {1978}
}

@article{BreimanRandomForest,
	author = {Breiman, Leo},
	journal = {Machine Learning},
	number = {1},
	pages = {5–32},
	title = {Random Forests},
	volume = {45},
	year = {2001}
}

@article{NeweyIVEstimationNonparametric,
author = {Newey, Whitney K. and Powell, James L.},
title = {Instrumental Variable Estimation of Nonparametric Models},
journal = {Econometrica},
volume = {71},
number = {5},
pages = {1565-1578},
year = {2003}
}

@article{WhiteAHeteroskedasticityConsistent,
 author = {Halbert White},
 journal = {Econometrica},
 number = {4},
 pages = {817--838},
 publisher = {[Wiley, Econometric Society]},
 title = {A Heteroskedasticity-Consistent Covariance Matrix Estimator and a Direct Test for Heteroskedasticity},
 volume = {48},
 year = {1980}
}

@article{BaumIVAndGMM,
author = {Christopher F. Baum and Mark E. Schaffer and Steven Stillman},
title ={Instrumental Variables and {GMM}: Estimation and Testing},
journal = {The Stata Journal},
volume = {3},
number = {1},
pages = {1-31},
year = {2003}
}

@article{LiangLongitudinalDataAnalyisis,
    author = {Liang, Kung-Yee and Zeger, Scott L.},
    title = {Longitudinal data analysis using generalized linear models},
    journal = {Biometrika},
    volume = {73},
    number = {1},
    pages = {13-22},
    year = {1986}
}

@book{WhiteAsymptoticTheory,
  title={Asymptotic Theory for Econometricians},
  author={White, Halbert},
  series={Economic Theory, Econometrics, and Mathematical Economics},
  year={1984},
  publisher={Elsevier Science}
}

@incollection{CardCollege,
    title = {Using geographic variation in college proximity to estimate the return to
schooling},
    author = {David Card},
    booktitle = {Aspects of Labor Market Behavior: Essays in Honor of John Vanderkamp},
    pages = {201-222},
    publisher = {National Longitudinal Survey of Young Men},
    year = {1995}
}

@book{WooldridgeIntroductoryEconometrics,
  author = {Wooldridge, Jeffrey M.},
  publisher = {South-Western Cengage Learning},
  title = {Introductory Econometrics: A Modern Approach},
  edition = {5},
  address = {Mason},
  year = {2013}
}

@Manual{WooldridgeRPackage,
    title = {wooldridge: 115 Data Sets from "Introductory Econometrics: A Modern
Approach, 7e" by Jeffrey M. Wooldridge},
    author = {Justin M. Shea},
    year = {2024},
    note = {R package version 1.4-4},
    url = {https://CRAN.R-project.org/package=wooldridge}
  }

@article{MeinshausenPValuesForHDRegression,
	author = {Nicolai Meinshausen and Lukas Meier and Peter Bühlmann},
	journal = {Journal of the American Statistical Association},
	number = {488},
	pages = {1671–1681},
	title = {p-Values for High-Dimensional Regression},
	volume = {104},
	year = {2009}
}

@article{BeckerWasWeberWrong,
    author = {Becker, Sascha O. and Woessmann, Ludger},
    title = {Was {Weber} Wrong? {A} Human Capital Theory of Protestant Economic History},
    journal = {The Quarterly Journal of Economics},
    volume = {124},
    number = {2},
    pages = {531-596},
    year = {2009}
}

@Manual{IVDOCTRPackage,
    title = {ivdoctr: Ensures Mutually Consistent Beliefs When Using IVs},
    author = {Frank DiTraglia and Mallick Hossain},
    year = {2021},
    note = {R package version 1.0.1},
    url = {https://CRAN.R-project.org/package=ivdoctr},
  }

@Article{RangerPackage,
    title = {{ranger}: A Fast Implementation of Random Forests for High Dimensional Data in {C++} and {R}},
    author = {Marvin N. Wright and Andreas Ziegler},
    journal = {Journal of Statistical Software},
    year = {2017},
    volume = {77},
    number = {1},
    pages = {1--17},
  }

@article{AndersonRubinTest,
 author = {T. W. Anderson and Herman Rubin},
 journal = {The Annals of Mathematical Statistics},
 number = {1},
 pages = {46--63},
 publisher = {Institute of Mathematical Statistics},
 title = {Estimation of the Parameters of a Single Equation in a Complete System of Stochastic Equations},
 urldate = {2024-04-17},
 volume = {20},
 year = {1949}
}

@article{AndersonEstimatingLinearRestrictions,
	author = {T. W. Anderson},
	journal = {The Annals of Mathematical Statistics},
	number = {3},
	pages = {327 – 351},
	publisher = {Institute of Mathematical Statistics},
	title = {{Estimating Linear Restrictions on Regression Coefficients for Multivariate Normal Distributions}},
	volume = {22},
	year = {1951}
}

@book{HartNonparametricSmoothing,
  author       = {Hart, Jeffrey D.},
  title        = {Nonparametric Smoothing and Lack-of-Fit Tests},
  series       = {Springer Series in Statistics},
  publisher    = {Springer},
  address      = {New York},
  year         = {1997}
}

@article{ZhengAConsistenTestOfFunctionalForm,
	author = {Zheng, John Xu},
	journal = {Journal of Econometrics},
	number = {2},
	pages = {263–289},
	title = {A consistent test of functional form via nonparametric estimation techniques},
	volume = {75},
	year = {1996}
}

@article{GuerreDataDrivenRateOptimal,
 author = {Emmanuel Guerre and Pascal Lavergne},
 journal = {The Annals of Statistics},
 number = {2},
 pages = {840--870},
 title = {Data-Driven Rate-Optimal Specification Testing in Regression Models},
 volume = {33},
 year = {2005}
}

@article{GonzalesAnUpdatedReview,
  author    = {Gonz{\'a}lez{-}Manteiga, Wenceslao and Crujeiras, Rosa M.},
  title     = {An updated review of Goodness-of-Fit tests for regression models},
  journal   = {TEST},
  volume    = {22},
  number    = {3},
  pages     = {361--411},
  year      = {2013}
}

@article{JunTestingUnderWeakIdentification,
 author = {Sung Jae Jun and Joris Pinkse},
 journal = {Econometric Theory},
 number = {6},
 pages = {1229--1282},
 publisher = {Cambridge University Press},
 title = {Testing under weak identification with conditional moment restrictions},
 urldate = {2025-12-19},
 volume = {28},
 year = {2012}
}

@article{JunSemiparamtericTests,
	author = {Sung Jae Jun and Joris Pinkse},
	journal = {Journal of Econometrics},
	keywords = {Conditional moments, Nonparametric testing, Specification tests, Weak identification, -nearest neighbors},
	note = {Recent Advances in Nonparametric and Semiparametric Econometrics: A Volume Honouring Peter M. Robinson},
	number = {1},
	pages = {3–18},
	title = {Semiparametric tests of conditional moment restrictions under weak or partial identification},
	volume = {152},
	year = {2009}
}

@book{WassermannAllOfNonparametricStatistics,
  title={All of Nonparametric Statistics},
  author={Wasserman, Larry},
  series={Springer Texts in Statistics},
  year={2006},
  publisher={Springer New York}
}

@book{SilvermanDensityEstimation,
      author        = "Silverman, Bernard W.",
      title         = "{Density estimation for statistics and data analysis}",
      publisher     = "Chapman and Hall",
      address       = "London",
      series        = "Chapman \& Hall/CRC monographs on statistics and applied
                       probability",
      year          = "1986"
}

@incollection{NeweyLargeSampleEstimation,
	author = {Whitney K. Newey and Daniel McFadden},
	pages = {2111–2245},
	publisher = {Elsevier},
	series = {Handbook of Econometrics},
	title = {Chapter 36 Large sample estimation and hypothesis testing},
	volume = {4},
	year = {1994}
}

@article{londschien2025statistician,
  title={A statistician's guide to weak-instrument-robust inference in instrumental variables regression with illustrations in {Python}},
  author={Londschien, Malte},
  journal={arXiv preprint arXiv:2508.12474},
  year={2025}
}

\newpage
\begin{center}
    {\LARGE Supplementary Appendix for:\\
    Machine-Learning-Powered Specification Testing in Linear Instrumental Variable Models}
\end{center}

\appendix

\numberwithin{equation}{section}
\numberwithin{figure}{section}

\counterwithin{assumption}{section}
\counterwithin{theorem}{section}
\counterwithin{lemma}{section}
\counterwithin{proposition}{section}
\counterwithin{definition}{section}
\counterwithin{remark}{section}

\section{Additional Material}
\subsection{Further Motivation of Assumption 4 in the Main Text}\label{sec_FurtherPower}
In this section, we further motivate Assumption 4 in the main text. For this, consider a fixed alternative, i.e., $P_n = P$, $n\in \mathbb N$. Then, for eq. (22) in the main text, it is enough to assume that $\rho_{\hat w_n}$ is lower-bounded by a constant, since $\hat {\bm a}_w - {\bm a}_w\to 0$ uniformly in $\mathcal W$ (see Lemma \ref{lem_AwBwCons} below). This means that the function $\hat w_n$ returned by Procedure 1 in the main text should at least capture some non-vanishing signal in $\epsilon$. The following Lemma states that if $\hat w_n$ converges to some $w^*\in \mathcal W$, then $\rho_{\hat w_n}\to\rho_{w^*}$ and $\sigma_{\hat w_n}\to\sigma_{w^*}$.
\begin{lemma}\label{lem_WnWStar}
    Assume that Assumptions 1-3 in the main text hold for $\mathcal P = \{P\}$. Additionally, assume that there exists $C>0$ such that $\EP[\epsilon^2|{\bm z}]\leq C$ a.s., and there exists $w^*\in \mathcal W$ such that $\|\hat w_n-w^*\|_{L_2}=o_P(1)$, where $\|w\|_{L_2}^2 = \EP[w({\bm z})^2]$. Then,
    \begin{align}
        \rho_{\hat w_n} - \rho_{w^*} &= o_P(1),\label{eq_RhoWEst}\\
        \sigma_{\hat w_n}^2 - \sigma_{w^*}^2 &=o_P(1).\label{eq_SigmaWEst}
    \end{align}
\end{lemma}
Lemma \ref{lem_WnWStar} is proven in Section \ref{sec_ProofWnWStar} below. From Lemma \ref{lem_WnWStar}, it follows that if eq. (21) and (22) from the main text hold for $w^*$, and if $\|w^*- \hat w_n\|_{L_2}\to 0$, then Assumption 4 (from the main text) also holds.
Let $w_0(\cdot) = \EP[\epsilon|{\bm z}=\cdot]$. In view of Procedure 1, a candidate for $w^*$ is $w^*_K(\cdot) = \frac{1}{K}\sign( w_0(\cdot))\min(| w_0(\cdot)|, K)$ for some $K>0$.
Note that $\rho_{w_0} = \EP[w_0({\bm z})\epsilon]=\EP\left[\EP[\epsilon|{\bm z}]^2\right]>0$. Moreover, if $w_0({\bm z})$ is not linear in ${\bm z}$, then $\sigma_{w_0}^2 = \Var_P((w_0({\bm z}) + {\bm a}_{w_0}^T {\bm z})\epsilon)\neq 0$ (assuming that $\EP[w_0({\bm z})^2\epsilon^2]$ exists). By dominated convergence, it follows that for $K$ large enough, $\rho_{w^*_K}>0$ and $\sigma_{w^*_K}^2>0$, implying that (21) and (22) in the main text hold for $w^*_K$.

\subsection{Homoskedastic Variance Estimator for the Weak-Instrument-Robust Test}
A variant of $\hat\sigma_w^2(\bm \beta_0)$ given in eq. (28) in the main text under the assumption of homoskedastic errors is given by
\begin{equation}\label{Def_WeakSigmaHatHomo}
    \hat\sigma_{w, hom}^2(\bm \beta_0) = \frac{1}{n_0}\sum_{i\in \mathcal D} \tilde w_i^2\frac{1}{n_0}\sum_{i\in \mathcal D} \tilde r_i(\bm \beta_0)^2.
\end{equation}

\section{General Asymptotic Theory and Clustered Data}\label{sec_GeneralTheory}
We provide high-level conditions under which the residual-prediction statistics are asymptotically normal beyond the i.i.d. setting, and specialize them to clustered dependence. We treat both the baseline test (based on two-stage least squares residuals) and the weak-identification robust extension (based on testing $H_0(\bm\beta_0)$).

\subsection{Baseline Residual Prediction Test: Theory in More General Form}\label{sec_TheoryMoreGeneralForm}
Let $P\in \mathcal P$ be a probability measure that determines the law of the sequence $({\bm x}_i, y_i, {\bm z}_i)_{i\in \mathbb N}$. We do not necessarily assume that the $({\bm x}_i, y_i, {\bm z}_i)_{i\in \mathbb N}$ are identically distributed or independent. 
We consider a subset $\mathcal D\subset\{1,\ldots, n\}$ and assume that if $n\to\infty$, also $n_0 = |\mathcal D|\to \infty$. For a function $f:\mathbb R^{p+1+d}\to\mathbb R^l$, define the notation
\begin{align}
    \EDP[f({\bm x}, y, {\bm z})] &\coloneqq \EP\left[\hatED[f({\bm x}, y, {\bm z})]\right]=\frac{1}{n_0}\sum_{i\in \mathcal D}\EP[f({\bm x}_i, y_i, {\bm z}_i)],\label{eq_DefEDP}
\end{align}
where 
the notation $\hatED$ was defined in eq. (4) in the main text.
That is, $\EDP$ is the expectation under $P$ of the average over $\mathcal D$.

We can now define ${\bm{\beta}}^*$, $\bm M$, $\epsilon$, ${\bm a}_w$, $\rho_w$ and ${\bm{\tau}}$ as in (12), (13), (14), (15), and (17) in the main text by replacing all the expectations $\EP[\cdot]$ with $\EDP[\cdot]$. As an important difference to the i.i.d.\ case, we define
\begin{equation}
    \sigma_w^2\coloneqq\Var_P\left(\frac{1}{\sqrt{n_0}}\sum_{i\in \mathcal D}(w({\bm z}_i) + {\bm a}_w^T{\bm z}_i)\epsilon_i\right),\label{eq_DefSigmaWGen}
\end{equation}
which also depends on $\mathcal D$ (but notationally suppressed here). As before, let $\mathcal W=\{w:\mathbb R^d\to[-1,1]\}$ and define $\VDP(\zeta) = \left\{w\in \mathcal W|\sigma_w^2\geq \zeta\right\}$, which now also depends on $\mathcal D$ (because of the dependence of $\sigma^2_w$ on $\mathcal{D}$).
Let us introduce the following notations for uniform convergence in probability.
\begin{definition}\label{def_StochOrder}
    For a (scalar-, vector- or matrix-valued) random variable $S_\mathcal D = S\left(({\bm x}_i, y_i, {\bm z}_i)_{i\in \mathcal D}\right)$ depending on the sample $\mathcal D$, we say that 
    $$S_\mathcal D = \oP(1)$$
    if
    $$\forall\delta>0: \lim_{n\to\infty}\sup_{P\in \mathcal P}\Prob_P(\|S_\mathcal D\|>\delta)=0.$$
    If the random variable additionally depends on $w\in \mathcal W$, $S_{\mathcal D, w}  =  S\left(({\bm x}_i, y_i, {\bm z}_i)_{i\in \mathcal D}, w\right)$, we say that
    $$S_{\mathcal D, w} = \oPW(1)$$
    if
    $$\forall\delta>0: \lim_{n\to\infty}\sup_{P\in \mathcal P}\sup_{w\in \mathcal W}\Prob_P(\|S_{\mathcal D, w}\|>\delta)=0.$$ 
    We say that 
    $$S_{\mathcal D} = \OP(1)$$
    if
    $$\lim_{K\to\infty}\sup_{P\in \mathcal P}\sup_{n\in \mathbb N}\Prob_P(\|S_{\mathcal D}\|\geq K) = 0.$$
    Here, the norm $\|\cdot\|$ is the absolute value for scalars, the euclidean norm $\|\cdot\|_2$ for vectors and the operator/spectral norm $\|\cdot\|_{op}$ for matrices.
\end{definition}

We are now ready to present the high-level conditions that we need for the asymptotic normality in Theorem 1 in the main text to hold. The first one essentially states that the population two-stage least squares estimator ${\bm{\beta}}^*$ is well-defined uniformly in $n$ and $P \in \mathcal{P}$.
\begin{assumption}\label{ass_MomentsGen}
    There exist $c, C\in (0,\infty)$ such that for all $n\in \mathbb N$ and for all $P\in \mathcal P$,
    \begin{enumerate}
        \item $\sigmamin\left(\EDP[{\bm z}{\bm z}^T]\right)\geq c$,\label{ass_SigmaminZZ}
        \item $\sigmamin\left(\EDP[{\bm z}{\bm x}^T]\right)\geq c$,\label{ass_SigmaminZX}
        \item $\left\|\EDP[{\bm z}{\bm z}^T]\right\|_{op}\leq C$, \label{ass_NormZZ}
        \item $\left\|\EDP[{\bm z}{\bm x}^T]\right\|_{op}\leq C$,\label{ass_NormZX}
        \item $\EDP[\|{\bm x}\|_2]\leq C$,\label{ass_NormX}
        \item $\left\|\EDP[{\bm z}\epsilon]\right\|_{2}\leq C$.\label{ass_NormZEps}
    \end{enumerate}
\end{assumption}
Next, we need some kind of uniform central limit theorem and law of large numbers.
\begin{assumption}\label{ass_CLTGen}
It holds that
$$\lim_{n\to\infty}\sup_{P\in \mathcal P}\sup_{w\in \VDP(\zeta)}\sup_{t\in \mathbb R}\left|\Prob_P\left(\frac{1}{\sigma_w\sqrt{n_0}}\sum_{i\in \mathcal D}\left(u_i^w - \EP[u_i^w]\right)\leq t\right)-\Phi(t)\right| = 0$$
with $u_i^w = (w({\bm z}_i) + {\bm a}_w^T{\bm z}_i)\epsilon_i$.
\end{assumption}
Due to the boundedness of $w$, this assumption can often be motivated using the Lindeberg--Feller central limit Theorem \citep[e.g.,][Thm 3.4.10.]{DurrettProb}. For example, we will see that it holds, when assuming dependence within small clusters but independence between clusters (as considered in Assumption \ref{ass_Cluster} and Proposition \ref{prop_AsNormCluster} below), and uniformly bounded moments, namely for some $\eta >0$, $\sup_{P \in \mathcal{P}} \sup_{i\in \mathbb N} \EP[|\epsilon_i|^{2 + \eta}] \le C < \infty$ and $\sup_{P \in \mathcal{P}}\sup_{i\in \mathbb N} \EP[ |\epsilon_i|^{2+\eta} \|{\bm z}_i\|_2^{2 + \eta}] \leq C < \infty$ (see Section \ref{sec_Cluster} below for the details).

\begin{assumption}\label{ass_LLNGen}
It holds that
\begin{enumerate}
    \item $\sqrt{n_0}\|\hatED[{\bm z}\epsilon]-\EDP[{\bm z}\epsilon]\|_2=\OP(1),$\label{ass_LLNZEpsilon}
    \item $\|\hatED[{\bm x}{\bm z}^T]-\EDP[{\bm x}{\bm z}^T]\|_{op} = \oP(1),$\label{ass_LLNXZ}
    \item $\|\hatED[{\bm z}{\bm z}^T]-\EDP[{\bm z}{\bm z}^T]\|_{op} = \oP(1),$\label{ass_LLNZZ}
    \item $\|\hatED[w({\bm z}){\bm x}]-\EDP[w({\bm z}){\bm x}]\|_2 = \oPW(1).$\label{ass_LLNwX}
\end{enumerate}
\end{assumption}
Finally, we also need a consistent estimator for the asymptotic variance $\sigma_w^2$.
\begin{assumption}\label{ass_SigmaHatGen}
    We have an estimator $\hat\sigma_w^2$ of $\sigma_w^2$ that satisfies
    $$\sigma_w^2-\hat\sigma_w^2 = \oPW(1).$$
\end{assumption}
Under Assumptions \ref{ass_MomentsGen}, \ref{ass_CLTGen}, \ref{ass_LLNGen} and \ref{ass_SigmaHatGen}, we have uniform asymptotic normality of the test statistic $\frac{1}{\sqrt{n_0}\hat\sigma_w}\sum_{i\in \mathcal D} w({\bm z}_i) \hat r_i$.

\begin{theorem}\label{thm_AsNormGen}
Let $N(w)$ be defined in eq. (8) in the main text and let $\hat\sigma_w^2$ be according to Assumption \ref{ass_SigmaHatGen}. Let $\zeta >0$ be arbitrary but fixed and assume that Assumptions \ref{ass_MomentsGen}, \ref{ass_CLTGen}, \ref{ass_LLNGen}, and \ref{ass_SigmaHatGen} hold. Then,
\begin{equation}\label{eq_AsNormGen}
        \lim_{n\to\infty}\sup_{P\in \mathcal P}\sup_{w\in \VDP(\zeta)}\sup_{t\in \mathbb R}\left|\Prob_P\left(\frac{N(w) - \sqrt{n_0}\left(\rho_w+(\hat {\bm a}_w- {\bm a}_w)^T{\bm{\tau}}\right)}{\hat\sigma_w}\leq t\right)-\Phi(t)\right| = 0.
\end{equation}
\end{theorem}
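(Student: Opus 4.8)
The plan is to reduce \eqref{eq_AsNormGen} to the uniform central limit theorem of Assumption \ref{ass_CLTGen} via an exact stochastic expansion of $N(w)$ followed by a uniform Slutsky argument. First I would record that the algebraic identity behind \eqref{eq_NEquiv} holds verbatim in this general setting, since it does not use independence or identical distribution. By construction of $\hat M$ in \eqref{eq_DefHatMGen} one has $\hat M\hatED[ZX^T]=I_p$, hence $\hat\beta=\beta^*+\hat M\hatED[Z\epsilon]$ and $\hat R_i=\epsilon_i-X_i^T\hat M\hatED[Z\epsilon]$, which yields $N(w)=\frac{1}{\sqrt{n_0}}\sumD\big(w(Z_i)+\hat A_w^TZ_i\big)\epsilon_i$ with $\hat A_w$ as in \eqref{eq_DefHatAw}. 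Splitting $\hat A_w=A_w+(\hat A_w-A_w)$ decomposes $N(w)$ into a leading term $S_w+\frac{1}{\sqrt{n_0}}\sumD\EP[U_i^w]$, where $U_i^w=(w(Z_i)+A_w^TZ_i)\epsilon_i$ and $S_w=\frac{1}{\sqrt{n_0}}\sumD\big(U_i^w-\EP[U_i^w]\big)$, plus a remainder $(\hat A_w-A_w)^T\sqrt{n_0}\hatED[Z\epsilon]$.

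The key deterministic simplification is that $A_w^T\tau=0$ for every $w$. Indeed, using $M\EDP[ZX^T]=I_p$ (the same algebra as for $\hat M$) together with $\beta^*=M\EDP[ZY]$ and $\tau=\EDP[Z\epsilon]=\EDP[ZY]-\EDP[ZX^T]\beta^*$, one gets $M\tau=M\EDP[ZY]-M\EDP[ZX^T]\beta^*=\beta^*-\beta^*=0$, so $A_w^T\tau=B_w^TM\tau=0$ by \eqref{eq_DefAw}. Consequently $\frac{1}{\sqrt{n_0}}\sumD\EP[U_i^w]=\sqrt{n_0}(\rho_w+A_w^T\tau)=\sqrt{n_0}\rho_w$. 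For the remainder I would write $\sqrt{n_0}\hatED[Z\epsilon]=\sqrt{n_0}\tau+\OP(1)$ by assertion \ref{ass_LLNZEpsilon} of Assumption \ref{ass_LLNGen}, and invoke $\hat A_w-A_w=\oPW(1)$ (which follows from Assumptions \ref{ass_MomentsGen} and \ref{ass_LLNGen}; this is Lemma \ref{lem_AwBwCons}). Since $\|\tau\|$ is bounded by assertion \ref{ass_NormZEps} of Assumption \ref{ass_MomentsGen}, the remainder equals $\sqrt{n_0}(\hat A_w-A_w)^T\tau+\oPW(1)$. Collecting terms gives the expansion $N(w)-\sqrt{n_0}\big(\rho_w+(\hat A_w-A_w)^T\tau\big)=S_w+\oPW(1)$, which exactly matches the numerator appearing in \eqref{eq_AsNormGen}.

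It then remains to divide by $\hat\sigma_w$ and pass to the limit uniformly. On $\VDP(\zeta)$ we have $\sigma_w^2\geq\zeta$, so Assumption \ref{ass_SigmaHatGen} gives $\hat\sigma_w^2=\sigma_w^2+\oPW(1)$, which is bounded away from $0$ with probability tending to $1$ uniformly in $w$ and $P$; hence $\sigma_w/\hat\sigma_w=1+\oPW(1)$ and $\oPW(1)/\hat\sigma_w=\oPW(1)$. Assumption \ref{ass_CLTGen} states precisely that $S_w/\sigma_w$ is uniformly asymptotically $N(0,1)$, and in particular that it is uniformly tight for large $n$; therefore $(S_w/\sigma_w)(\sigma_w/\hat\sigma_w-1)=\oPW(1)$ and the studentized statistic equals $S_w/\sigma_w+\oPW(1)$. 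Finally I would use the elementary fact that a uniform $N(0,1)$ limit is preserved under an additive $\oPW(1)$ perturbation, bounding $\sup_t|\Phi(t+s)-\Phi(t)|\le |s|/\sqrt{2\pi}$ to absorb the perturbation; this yields \eqref{eq_AsNormGen}.

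The main obstacle is making this last Slutsky step genuinely uniform in both $w\in\VDP(\zeta)$ and $P\in\mathcal P$ simultaneously: one must guarantee that the events on which $\hat\sigma_w^2$ is close to $\sigma_w^2$, on which $|S_w/\sigma_w|$ is bounded, and on which the various remainders are small all have probability tending to $1$ at a rate independent of $w$ and $P$, and then combine these with the uniform anti-concentration of the Gaussian limit. This is exactly where the uniform stochastic-order notation of Definition \ref{def_StochOrder} and the uniform lower bound $\zeta$ on $\sigma_w^2$ over $\VDP(\zeta)$ are indispensable; everything else is a routine rearrangement.
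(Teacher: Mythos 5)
Your proof is correct and follows essentially the same route as the paper's: the identity $N(w)=\frac{1}{\sqrt{n_0}}\sum_{i\in\mathcal D}(w(Z_i)+\hat A_w^TZ_i)\epsilon_i$, the key algebraic fact $M\,\EDP[Z\epsilon]=0$ (so the centering of $U_i^w$ is exactly $\rho_w$), the remainder bound $(\hat A_w-A_w)^T\sqrt{n_0}\bigl(\hatED[Z\epsilon]-\tau\bigr)=\oPW(1)$ via Assumption \ref{ass_LLNGen}.\ref{ass_LLNZEpsilon} and Lemma \ref{lem_AwBwCons}, and a uniform Slutsky step. The only cosmetic difference is that the paper invokes the ready-made uniform Slutsky result (Lemma \ref{lem_USlutsky}) where you reprove it inline via Gaussian anti-concentration, and your appeal to the boundedness of $\tau$ is superfluous since the term $\sqrt{n_0}(\hat A_w-A_w)^T\tau$ cancels exactly against the recentering.
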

The proof of Theorem \ref{thm_AsNormGen} can be found in Section \ref{sec_ProofAsNormGen}.

\subsection{Baseline Residual Predicton Test: Application to Clustered Data}\label{sec_Cluster}
As an application of Theorem \ref{thm_AsNormGen}, we consider clustered data. As before, we have distributions $P\in \mathcal P$ determining the law of the sequence $(y_i, {\bm x}_i, {\bm z}_i)_{i\in \mathbb N}$, which now exhibits clustered dependence. As before, $\mathcal D \subset \{1,\ldots, n\}$ and $n_0 = |\mathcal D|\to\infty$ as $n\to\infty$. We assume that we can write $\mathcal D$ as a disjoint union $\mathcal D = \dot\bigcup_{g = 1}^G I_g$, where $G$ and $(I_g)_{g=1}^G$ depend on $\mathcal D$. For $g=1,\ldots, G$, let $n_g = |I_g|$. We assume independence across clusters and bounded cluster size. In particular, the number of clusters needs to grow as fast as order $n$. However, we allow for arbitrary dependence structure within the clusters. More formally, we have the following assumption.
\begin{assumption}\label{ass_Cluster}
    The collections $({\bm x}_i, y_i, {\bm z}_i)_{i\in I_g}$, are independent across $g \in \mathbb N$. Moreover,  $\bar n\coloneqq \sup_{g\in \mathbb N} n_g<\infty$.
\end{assumption}

We need a consistent estimator for $\sigma_w^2$ defined in \eqref{eq_DefSigmaWGen}. This is slightly more subtle. Using basic manipulations, one can derive the expression 
\begin{equation}\label{eq_SigmaWCluster}
    \sigma_w^2 = \frac{1}{n_0}\sum_{g=1}^G\E_P\left[s_g(w)^2\right]-\frac{1}{n_0}\sum_{g=1}^G\EP\left[s_g(w)\right]^2
\end{equation}
with
$$s_g(w)\coloneqq \sum_{i\in I_g}(w({\bm z}_i) + {\bm a}_w^T {\bm z}_i)\epsilon_i.$$
Note that $s_g(w)$ also depends on $P$ through ${\bm a}_w$ and $\epsilon$.
The second part $\frac{1}{n_0}\sum_{g=1}^G\EP[s_g(w)]^2$ can in general only be consistently estimated when $\E_P[s_g(w)]$ does not depend on $g$. Most importantly, this is the case under $H_0: \E[\epsilon_i|{\bm z}_i]= 0$ for $i=1,\ldots, n$. That is, under the null hypothesis, the variance of our test statistic can be consistently estimated. But more generally, the assumption that $\EP[s_g(w)]$ does not depend on $g$ is also satisfied if the clusters are identically distributed.
Define the variance estimator
\begin{equation}\label{eq_SigmaHatCluster}
    \hat\sigma_w^2 \coloneqq \frac{1}{n_0}\sum_{g=1}^G\hat s_g(w)^2-\frac{n_0}{G}\left(\frac{1}{n_0}\sum_{g=1}^G\sum_{i\in I_g}w({\bm z}_i)\hat r_i)\right)^2
\end{equation}
with
$$\hat s_g(w) = \sum_{i\in I_g}(w({\bm z}_i) + \hat {\bm a}_w^T {\bm z}_i)\hat r_i.$$
We now give the assumptions that are needed for Theorem \ref{thm_AsNormGen} to hold in the clustered data setting.
\begin{assumption}\label{ass_SigmaMinCluster}
    There exists $c>0$ such that for all $n\in \mathbb N$ and for all $P\in \mathcal P$, $\sigma_{\min}(\EDP[{\bm z}{\bm z}^T])\geq c$ and $\sigma_{\min}(\EDP[{\bm z}{\bm x}^T])\geq c$.
\end{assumption}

\begin{assumption}\label{ass_MomentsCluster}
    There exist $\eta, C\in (0,\infty)$ such that for all $i\in \mathbb N$ and $P\in \mathcal P$,
    \begin{enumerate}
        \item $\E_P[|\epsilon_i|^{2+\eta}]\leq C$, \label{ass_EpsilonCluster}
        \item $\E_P[\|{\bm z}_i\|_2^{2+\eta}|\epsilon_i|^{2+\eta}]\leq C$, \label{ass_EpsilonZCluster}
        \item $\EP[\|{\bm z}_i\|_2^{2+\eta}]\leq C$,\label{ass_ZZCluster}
        \item $\EP[\|{\bm x}_i\|_2^2\|{\bm z}_i\|_2^2]<C$,\label{ass_XZCluster}
        \item $\EP[\|{\bm x}_i\|_2^2]<C$.\label{ass_XCluster}
    \end{enumerate}
\end{assumption}
\begin{proposition}\label{prop_AsNormCluster}
    If Assumptions \ref{ass_Cluster}, \ref{ass_SigmaMinCluster} and \ref{ass_MomentsCluster} hold and if for all $w\in\mathcal W$ and $P\in \mathcal P$, $\EP[s_1(w)]=\ldots=\EP[s_G(w)]$, then \eqref{eq_AsNormGen}, the statement of Theorem \ref{thm_AsNormGen}, holds. In particular, \eqref{eq_AsNormGen} holds if $\E[\epsilon_i|{\bm z}_i] = 0$ for all $i\in \mathbb N$ which is the null-hypothesis of interest.
\end{proposition}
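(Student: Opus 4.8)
The plan is to deduce Proposition~\ref{prop_AsNormCluster} from the general Theorem~\ref{thm_AsNormGen} by verifying that the clustered-data hypotheses (Assumptions~\ref{ass_Cluster}, \ref{ass_SigmaMinCluster}, \ref{ass_MomentsCluster} together with the equal-means condition $\EP[S_1(w)]=\cdots=\EP[S_G(w)]$) imply the four high-level conditions, Assumptions~\ref{ass_MomentsGen}, \ref{ass_CLTGen}, \ref{ass_LLNGen}, and~\ref{ass_SigmaHatGen}. As a preliminary I would record the uniform bound $\sup_{P\in\mathcal P}\sup_{w\in\mathcal W}\|A_w\|_2\le C'$: since $|w|\le 1$ we have $\|B_w\|_2=\|\EDP[w(Z)X]\|_2\le \EDP[\|X\|_2]\le C$, and $\|M\|_{op}$ is bounded using Assumption~\ref{ass_SigmaMinCluster} together with the operator-norm bounds on $\EDP[ZZ^T]$ and $\EDP[ZX^T]$. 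The same bookkeeping gives Assumption~\ref{ass_MomentsGen}: parts~\ref{ass_SigmaminZZ}--\ref{ass_SigmaminZX} are exactly Assumption~\ref{ass_SigmaMinCluster}, while parts~\ref{ass_NormZZ}--\ref{ass_NormZEps} reduce to bounding $\EP[\|Z_i\|_2^2]$, $\EP[\|Z_i\|_2\|X_i\|_2]$, $\EP[\|X_i\|_2]$, and $\EP[\|Z_i\|_2|\epsilon_i|]$, each controlled by Assumption~\ref{ass_MomentsCluster} via Jensen's and the Cauchy--Schwarz inequalities since $2+\eta>2$.

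For the law of large numbers (Assumption~\ref{ass_LLNGen}) I would exploit the cluster structure: each average $\hatED[\cdot]$ is a sum over the $G$ independent blocks of Assumption~\ref{ass_Cluster}, and bounded cluster size $\bar n<\infty$ means each block contributes at most $\bar n$ within-cluster cross terms, so the variance of $\sqrt{n_0}\,\hatED[\cdot]$ stays $O(1)$. Part~\ref{ass_LLNZEpsilon} then follows from Chebyshev's inequality together with $\EP[\|Z_i\|_2^2\epsilon_i^2]\le C$, and part~\ref{ass_LLNXZ} from the analogous second-moment bound using Assumption~\ref{ass_MomentsCluster}\ref{ass_XZCluster}. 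Part~\ref{ass_LLNZZ} is slightly more delicate, since only $(1+\eta/2)$ moments of $\|Z_i\|_2^2$ are available and no fourth moment of $Z_i$ is assumed; here I would use a truncation/uniform-integrability argument rather than a variance bound. The key subtlety is part~\ref{ass_LLNwX}, which must hold uniformly over $w\in\mathcal W$: the definition of $\oPW$ places $\sup_{w}$ \emph{outside} the probability, so it suffices that $\Var_P(\hatED[w(Z)X])$ be bounded by a quantity not depending on $w$, which holds because $|w|\le1$ gives $\|w(Z_i)X_i\|_2\le\|X_i\|_2$ and $\EP[\|X_i\|_2^2]\le C$; Chebyshev then yields the bound uniformly in $w$ and $P$.

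The central limit theorem (Assumption~\ref{ass_CLTGen}) I would obtain by writing $\frac{1}{\sqrt{n_0}}\sum_{i\in\mathcal D}(U_i^w-\EP[U_i^w])=\frac{1}{\sqrt{n_0}}\sum_{g=1}^G(S_g(w)-\EP[S_g(w)])$ as a normalized sum of independent cluster terms and applying the Lindeberg--Feller theorem. Uniformity over $w\in\VDP(\zeta)$ and over $P\in\mathcal P$ comes from checking a Lyapunov condition with exponent $\eta$: the denominator is bounded below by $\sigma_w^2\ge\zeta$, while $\EP[|S_g(w)|^{2+\eta}]$ is bounded above using $\bar n<\infty$, the uniform bound on $\|A_w\|_2$, and Assumption~\ref{ass_MomentsCluster}\ref{ass_EpsilonCluster}--\ref{ass_EpsilonZCluster}.

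The main obstacle is the last condition, Assumption~\ref{ass_SigmaHatGen}, namely $\hat\sigma_w^2-\sigma_w^2=\oPW(1)$ for the cluster-robust estimator~\eqref{eq_SigmaHatCluster}. The difficulty is that $\hat\sigma_w^2$ replaces $A_w$ by the estimate $\hat A_w$ and $\epsilon_i$ by the residual $\hat R_i=\epsilon_i-X_i^T\hat M\,\hatED[Z\epsilon]$, so one must propagate the estimation errors $\hat A_w-A_w$ and $\hat\beta-\beta^*$ through $\frac{1}{n_0}\sum_g\hat S_g(w)^2$ while keeping every bound uniform in $w$ (again leaning on $|w|\le1$ and the sup-outside-probability structure of $\oPW$). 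Crucially, the equal-means hypothesis $\EP[S_1(w)]=\cdots=\EP[S_G(w)]$ is exactly what makes the subtracted term $\frac{n_0}{G}(\cdots)^2$ in~\eqref{eq_SigmaHatCluster} a consistent estimator of $\frac{1}{n_0}\sum_g\EP[S_g(w)]^2$ in the decomposition~\eqref{eq_SigmaWCluster}; without it, that term cannot be identified from a single realization per cluster. Once all four high-level assumptions are verified, \eqref{eq_AsNormGen} follows immediately from Theorem~\ref{thm_AsNormGen}, and the final assertion is the special case in which $\E[\epsilon_i|Z_i]=0$ forces $\EP[S_g(w)]=0$ for every $g$, so the equal-means condition holds automatically.
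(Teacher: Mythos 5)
Your proposal is correct and takes essentially the same route as the paper: it reduces Proposition \ref{prop_AsNormCluster} to Theorem \ref{thm_AsNormGen} by verifying Assumptions \ref{ass_MomentsGen}, \ref{ass_CLTGen}, \ref{ass_LLNGen}, and \ref{ass_SigmaHatGen}, using cluster independence with bounded cluster size $\bar n$, a Lindeberg--Feller argument with a Lyapunov-type $(2+\eta)$-moment bound (uniform via $\sigma_w^2\geq\zeta$ and the uniform bound on $\|A_w\|_2$) for the CLT, error propagation of $\hat A_w - A_w$ and $\hat\beta-\beta^*$ for the variance estimator, and the equal-means condition exactly where the paper uses it, namely to make the subtracted term in \eqref{eq_SigmaHatCluster} consistently estimate $\frac{1}{n_0}\sum_g \EP[S_g(w)]^2$, with $H_0$ as the special case $\EP[S_g(w)]=0$. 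The only cosmetic deviation is that you invoke Chebyshev for parts \ref{ass_LLNXZ} and \ref{ass_LLNwX} of Assumption \ref{ass_LLNGen} where the paper uniformly applies its truncation-based weak law (Lemma \ref{lem_ULLN}); both are valid given the stated moment bounds, and you correctly flag that part \ref{ass_LLNZZ} genuinely requires the truncation argument since only $1+\eta/2$ moments of $\|Z_i\|_2^2$ are available.
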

\begin{remark}
    It is important to emphasize that the validity of p-values in our procedure relies on the independence between the auxiliary sample used to estimate the weight function and the main sample. In the case of clustered data, this requirement is straightforward to satisfy by respecting the cluster structure during sample splitting, i.e., taking a random subset of entire clusters as the auxiliary sample. However, applying the general theory from Section \ref{sec_TheoryMoreGeneralForm} to other forms of dependent data, such as time series, can present challenges. In such settings, it may not be feasible to split the data into two genuinely independent subsamples, potentially compromising the validity of the test.
\end{remark}

\subsection{Weak-IV-Robust Extension: Theory in More General Form}
We consider the analogous setup as in Section \ref{sec_TheoryMoreGeneralForm} with the difference that we now consider distributions $P\in \mathcal P$ determining the law of the sequence $(\bm x_i, y_i, \bm z_i, \bm c_i)_{i\in \mathbb N}$. Definition \ref{def_StochOrder} and \eqref{eq_DefEDP} can be analogously formulated for this setup. The following assumption ensures that distributions in $\mathcal P$ satisfy the null hypothesis $H_0(\bm \beta_0)$ for some $\bm \beta_0$.
\begin{assumption}\label{ass_BetaP}
For all $P\in \mathcal P$, there exist $\bm \beta = \bm \beta(P)\in \mathbb R^p$ and $\bm \theta = \bm \theta(P)\in \mathbb R^q$ such that
$$\E_P\left[y_i - \bm x_i^T\bm \beta(P) - \bm c_i^T\bm \theta(P)|\bm z_i, \bm c_i\right] = 0,\, i \in \mathcal D.$$
\end{assumption}
Define $\nu_i \coloneqq y_i - \bm x_i^T\bm \beta(P) - \bm c_i^T\bm \theta(P)$. Similarly to (25) and (26) in the main text, define for any $w:\mathbb R^d\times \mathbb R^q\to \mathbb R$,
\begin{equation}\label{eq_DefLambdaW}
    \bm \lambda_w\coloneqq\EDP[\bm c\bm c^T]^{-1}\EDP[\bm c w(\bm z, \bm c)], \quad \bar w_i = w(\bm z_i, \bm c_i) - \bm c_i^T\bm \lambda_w.
\end{equation}
Generalizing (31) and (32) in the main text, let
\begin{align}
    \bar\sigma_w^2 &\coloneqq \Var\left(\frac{1}{\sqrt n_0}\sum_{i\in \mathcal D} \nu_i \bar w_i\right),\\
    \mathcal U_{\mathcal D, P}(\zeta)&\coloneqq \left\{w:\mathbb R^d\times \mathbb R^q\to [-1,1]\,\vert\, \bar \sigma_w^2\geq \zeta\right\}.
\end{align}
We need similar assumptions to \ref{sec_TheoryMoreGeneralForm}.
\begin{assumption}\label{ass_GenMomentsWeak}
    There exist $c, C\in (0, \infty)$ such that for all $n\in \mathbb N$ and for all $P\in \mathcal P$,
    \begin{enumerate}
        \item $\sigma_{\min}(\EDP[\bm c\bm c^T])\geq c,$
        \item $\EDP[\|\bm c\|_2]\leq C$.
    \end{enumerate}
\end{assumption}
\begin{assumption}\label{ass_GenNormWeak}
    It holds that
$$\lim_{n\to\infty}\sup_{P\in \mathcal P}\sup_{w\in \mathcal U_{\mathcal D, P}(\zeta)}\sup_{t\in \mathbb R}\left|\Prob_P\left(\frac{1}{\bar\sigma_w\sqrt{n_0}}\sum_{i\in \mathcal D}\nu_i \bar w_i\leq t\right)-\Phi(t)\right| = 0.$$

\end{assumption}
\begin{assumption}\label{ass_GenULLNWeak}
    It holds that
    \begin{enumerate}
        \item $\sqrt{n_0}\hatED[\bm c \nu] = \OP(1)$,\label{ass_GenULLNWeak1}
        \item $\|\hatED[\bm c\bm c^T]-\EDP[\bm c\bm c^T]\|_{op} = \oP(1)$,\label{ass_GenULLNWeak2}
        \item $\|\hatED[w(\bm z, \bm c)\bm c]-\EDP[w(\bm z, \bm c) \bm c]\|_2 = \oPW(1)$.\label{ass_GenULLNWeak3}
    \end{enumerate}
\end{assumption}
For assertion \ref{ass_GenULLNWeak1}, recall that $\EDP[\bm c \nu] = 0$ by Assumption \ref{ass_BetaP}.
\begin{assumption}\label{ass_GenVarWeak}
    We have a variance estimator function $\hat\sigma_w(\cdot)$ that satisfies
    $$\bar\sigma_w^2 - \hat\sigma_w^2(\bm \beta(P))=\oPW(1).$$
\end{assumption}
With these assumptions, we get uniform asymptotic normality.
\begin{theorem}\label{thm_GenAsympWeak}
    Let $N(w, \bm \beta(P))$ be defined as in eq. (27) in the main text (plugging in $\bm \beta(P)$ for $\bm \beta_0$, and let $\hat \sigma_w^2(\cdot)$ be defined according to Assumption \ref{ass_GenVarWeak}. Let $\zeta >0$ be arbitrary but fixed and assume that Assumptions \ref{ass_GenMomentsWeak}, \ref{ass_GenNormWeak}, \ref{ass_GenULLNWeak}, and \ref{ass_GenVarWeak} hold. Then,
    \begin{equation}\label{eq_GenAsympWeak}
        \lim_{n\to\infty}\sup_{P\in \mathcal P}\sup_{w\in \mathcal U_{\mathcal D, P}(\zeta)}\sup_{t\in \mathbb R}\left|\Prob_P\left(\frac{N(w, \bm \beta(P))}{\hat\sigma_w(\bm \beta(P))}\leq t\right)-\Phi(t)\right| = 0.
    \end{equation}
\end{theorem}

\subsection{Weak-IV-Robust Extension: Application to Clustered Data}
We now specialize the general weak-identification robust theory to the case of clustered data. As in Section \ref{sec_Cluster}, let $P\in \mathcal P$ determine the law of the sequence $(\bm x_i, y_i, \bm z_i, \bm c_i)_{i\in \mathbb N}$, and let $\mathcal D \subset \{1,\ldots,n\}$ with $n_0 = |\mathcal D|\to\infty$ as $n\to\infty$. We assume that $\mathcal D$ can be written as a disjoint union
$$
\mathcal D = \dot\bigcup_{g=1}^G I_g,
$$
where $G$ and $(I_g)_{g=1}^G$ may depend on $\mathcal D$, and $n_g = |I_g|$. As before, we assume independence across clusters and bounded cluster size.

\begin{assumption}\label{ass_ClusterWeak}
    The collections $(\bm x_i, y_i, \bm z_i, \bm c_i)_{i\in I_g}$ are independent across $g\in\mathbb N$. Moreover, $\bar n \coloneqq \sup_{g\in\mathbb N} n_g < \infty$.
\end{assumption}

Throughout this section, we consider distributions $P\in\mathcal P$ satisfying Assumption \ref{ass_BetaP}. That is, for each $P\in\mathcal P$, there exist $\bm \beta(P)\in\mathbb R^p$ and $\bm \theta(P)\in\mathbb R^q$ such that $\E_P\!\left[y_i - \bm x_i^T\bm \beta(P) - \bm c_i^T\bm \theta(P)\,\middle|\, \bm z_i, \bm c_i\right] = 0$
for all $i\in\mathcal D$. Define $\nu_i \coloneqq y_i - \bm x_i^T\bm \beta(P) - \bm c_i^T\bm \theta(P)$.

Similarly to \eqref{eq_SigmaWCluster}, we can write
\begin{equation}\label{eq_SigmaWeakCluster}
    \bar\sigma_w^2 
    = \frac{1}{n_0}\sum_{g=1}^G \EDP\!\left[\bar s_g(w)^2\right]
\end{equation}
where
$$
\bar s_g(w) \coloneqq \sum_{i\in I_g} \nu_i \bar w_i.
$$

We use the cluster robust variance estimator
\begin{equation}\label{eq_HatSigmaWeakCluster}
    \hat\sigma_w^2(\bm \beta_0) = \frac{1}{n_0}\sum_{g = 1}^G\tilde s_g(w, \bm \beta_0)^2- \frac{n_0}{G}\left(\frac{1}{n_0}\sum_{g = 1}^G\tilde s_g(w, \bm \beta_0)\right)^2,
\end{equation}
where
$$\tilde s_g(w, \bm \beta_0)  \coloneqq \sum_{i\in I_g} \tilde w_i\tilde r_i(\bm \beta_0)$$
and $\tilde w_i$ and $\tilde r_i(\bm \beta_0)$ are defined in eq. (26) in the main text.

We impose the following moment conditions.

\begin{assumption}\label{ass_MomentsWeakCluster}
    There exist $\eta, c, C\in (0,\infty)$ such that for all $i\in\mathbb N$ and $P\in\mathcal P$,
    \begin{enumerate}
        \item $\sigma_{\min}(\EDP[\bm c\bm c^T])\geq c,$
        \item $\E_P[|\nu_i|^{2+\eta}] \le C,$
        \item $\E_P[\|\bm c_i\|_2^{2+\eta}|\nu_i|^{2+\eta}] \le C,$
        \item $\E_P[\|\bm c_i\|_2^{4+\eta}] \le C.$
    \end{enumerate}
\end{assumption}

\begin{proposition}\label{prop_AsNormWeakCluster}
    If Assumptions \ref{ass_ClusterWeak} and \ref{ass_MomentsWeakCluster} hold and if Assumption \ref{ass_BetaP} is satisfied, then \eqref{eq_GenAsympWeak}, the statement of Theorem \ref{thm_GenAsympWeak} holds.
\end{proposition}

As in the baseline case, independence between the auxiliary sample and the main sample is essential for the validity of the p-values. In the clustered setting, this can be ensured by performing sample splitting at the cluster level, i.e., assigning entire clusters either to the auxiliary or to the main sample.

\section{Proofs}
\subsection{Proof of Theorem \ref{thm_AsNormGen}}\label{sec_ProofAsNormGen}
By definition, $y = \bm x^T\bm \beta^* + \epsilon$ and we have that $\hat{\bm{\beta}} = \hat {\bm M}\hatED[{\bm z}y] = {\bm{\beta}^*} +\hat {\bm M} \hatED[{\bm z}\epsilon]$. It follows that $\hat r_i =\epsilon_i +{\bm x}_i^T({\bm{\beta}^*}-\hat{\bm{\beta}}) = \epsilon_i - {\bm x}_i^T\hat {\bm M}\hatED[{\bm z}\epsilon]$. Hence, we can rewrite
\begin{align}
    \frac{1}{\sqrt{n_0}}\sum_{i\in \mathcal D}w({\bm z}_i)\hat r_i 
    &= \frac{1}{\sqrt{n_0}}\sum_{i\in \mathcal D}w({\bm z}_i)\epsilon_i - \frac{1}{\sqrt{n_0}}\sum_{i\in \mathcal D} w({\bm z}_i) {\bm x}_i^T \hat {\bm M}\hat \E_\mathcal D[{\bm z}\epsilon]\nonumber\\
    &=\frac{1}{\sqrt{n_0}}\sum_{i\in \mathcal D}w({\bm z}_i)\epsilon_i - \hat\E_\mathcal D[w({\bm z}){\bm x}^T] \hat {\bm M}\frac{1}{\sqrt{n_0}}\sum_{i\in \mathcal D}{\bm z}_i\epsilon_i\nonumber\\
    &=\frac{1}{\sqrt{n_0}}\sum_{i\in \mathcal D} (w({\bm z}_i)+ \hat {\bm a}_w^T {\bm z}_i)\epsilon_i\label{eq_NEquiv}
\end{align}
with $\hat {\bm a}_w$ defined in eq. (10) in the main text.

Next, observe that from the definitions of $\epsilon$, ${\bm{\beta}}^*$ and ${\bm M}$,
\begin{align}
    {\bm M} \EDP[{\bm z}\epsilon]
    &={\bm M}\EDP[{\bm z}(y-{\bm x}^T{\bm{\beta}}^*)]\nonumber\\
    &={\bm M}\EDP[{\bm z}(y-{\bm x}^T{\bm M}\EDP[{\bm z}y])]\nonumber\\
    &={\bm M}(I-\EDP[{\bm z}{\bm x}^T]{\bm M})\EDP[{\bm z}y]\nonumber\\
    &=({\bm M}- {\bm M}\EDP[{\bm z}{\bm x}^T]{\bm M})\EDP[{\bm z}y]\nonumber\\
    &=0,\label{eq_MEZEpsilon}
\end{align}
since ${\bm M}\EDP[{\bm z}{\bm x}^T] = I_p$.
Using the definition  of ${\bm a}_w$, we have that
$$\EDP[{\bm a}_w^T{\bm z}\epsilon]=-\EDP[w(\bm z) \bm x^T]{\bm M}\EDP[{\bm z}\epsilon]=0$$
and
\begin{equation}\label{eq_IdentRhoW}
    \frac{1}{n_0}\sum_{i\in \mathcal D}\E_P[(w({\bm z}_i)+{\bm a}_w^T{\bm z}_i)\epsilon_i] = \EDP[(w({\bm z})+{\bm a}_w^T {\bm z})\epsilon]=\EDP[w({\bm z})\epsilon]=\rho_w.
\end{equation}

With \eqref{eq_NEquiv}, we can write the numerator of the quantity from  Theorem \ref{thm_AsNormGen} as
\begin{align*}
    &\frac{1}{\sqrt{n_0}}\sumD w({\bm z}_i) \hat r_i - \sqrt{n_0}\left(\rho_w + (\hat {\bm a}_w - {\bm a}_w)^T {\bm{\tau}}\right)\\
    &=\frac{1}{\sqrt{n_0}}\sumD (w({\bm z}_i) + \hat {\bm a}_w^T{\bm z}_i)\epsilon_i - \sqrt{n_0}\left(\rho_w + (\hat {\bm a}_w - {\bm a}_w)^T {\bm{\tau}}\right)\\
    & = \underbrace{\frac{1}{\sqrt{n_0}}\sumD (w({\bm z}_i) + {\bm a}_w^T {\bm z}_i)\epsilon_i - \sqrt{n_0}\rho_w}_{(*)}+(\hat {\bm a}_w- {\bm a}_w)^T\left(\frac{1}{\sqrt {n_0}}\sumD  {\bm z}_i\epsilon_i-\sqrt{n_0}{\bm{\tau}}\right).
\end{align*}
From \eqref{eq_IdentRhoW}, we know that $(*)$ is equal to $\sigma_w$ times the quantity from Assumption \ref{ass_CLTGen}.
Since for $w\in \VDP(\zeta)$, we have $\sigma_w^2\geq \zeta$, we can use Lemma \ref{lem_USlutsky} below (uniform version of Slutsky's lemma) together with Assumption \ref{ass_SigmaHatGen}, and it is enough to show that
\begin{equation}
    (\hat {\bm a}_w-{\bm a}_w)^T\left(\frac{1}{\sqrt {n_0}}\sumD  {\bm z}_i\epsilon_i-\sqrt{n_0}{\bm{\tau}}\right) = \oPW(1). \label{eq_RestTerm}
\end{equation}
By assertion \ref{ass_LLNZEpsilon} of Assumption \ref{ass_LLNGen} and the definition of ${\bm{\tau}}$, we have that
$$\frac{1}{\sqrt {n_0}}\sumD  {\bm z}_i\epsilon_i-\sqrt{n_0}{\bm{\tau}}=\sqrt{n_0}(\hatED[{\bm z}\epsilon]-\EDP[{\bm z}\epsilon])=\OP(1).$$
Using Lemma \ref{lem_ProdOrder} and \eqref{eq_ConsAw} from Lemma \ref{lem_AwBwCons} below, \eqref{eq_RestTerm} follows, which concludes the proof.
\begin{lemma}\label{lem_AwBwCons}
    Under Assumptions \ref{ass_MomentsGen} and \ref{ass_LLNGen}, we have that $\|\bm M\|_{op}$ and $\|{\bm a}_w\|_2$ are uniformly bounded in $n\in \mathbb N$, $P\in \mathcal P$ and $w\in \mathcal W$. Moreover
    \begin{align}
        \|\hat {\bm M}- \bm M\|_{op}&=\oP(1),\label{eq_ConsM}\\
        \|\hat {\bm a}_w - {\bm a}_w\|_2&=\oPW(1)\label{eq_ConsAw},\\
        \|\hat {\bm{\beta}} - {\bm{\beta}}^*\|_2 &=\oP(1)\label{eq_ConsBeta},
    \end{align}
     where the corresponding norm is defined to be $\infty$ if $\hat {\bm M}$, or $\hat {\bm a}_w$ does not exist.
\end{lemma}
\begin{proof}
    For \eqref{eq_ConsM}, we use Assumption \ref{ass_MomentsGen} to observe that
    $$\|\EDP[{\bm z}{\bm z}^T]^{-1}\|_{op}= \sigmamin(\EDP[{\bm z}{\bm z}^T])^{-1}\leq 1/c$$
    is uniformly bounded in $n\in \mathbb N$ and $P\in \mathcal P$. Moreover, 
    \begin{align*}
        &\left\|\left[\EDP[{\bm x} {\bm z}^T]\EDP[{\bm z} {\bm z}^T]^{-1} \EDP[{\bm z} {\bm x}^T]\right]^{-1}\right\|_{op}\\
        &= \sigmamin\left(\EDP[{\bm x} {\bm z}^T]\EDP[{\bm z} {\bm z}^T]^{-1} \EDP[{\bm z} {\bm x}^T]\right)^{-1}\\
        &\leq \sigmamin(\EDP[{\bm z}{\bm x}^T])^{-2}\|\EDP[{\bm z}{\bm z}^T]\|_{op}\\
        &\leq C/c^2.
    \end{align*}
    Hence, \eqref{eq_ConsM} follows from the definitions $\bm M$ and $\hat {\bm M}$, Assumptions \ref{ass_MomentsGen} and \ref{ass_LLNGen} and successive applications of Lemma \ref{lem_ProdMatrix} below.
    For \eqref{eq_ConsAw}, note that for $w\in \mathcal W$ by Assumption \ref{ass_MomentsGen}
    $$\|\EDP[w(\bm z) \bm x^T]\|_2\leq\EDP[\|\bm x\|_2]\leq C$$
    and hence, \eqref{eq_ConsAw} follows from \eqref{eq_ConsM} and Assumption \ref{ass_LLNGen} with Lemma \ref{lem_ProdMatrix} below.

   For \eqref{eq_ConsBeta}, observe that from the definition of $\hat {\bm{\beta}}$ and using \eqref{eq_MEZEpsilon}
   \begin{equation*}
    \|{\bm{\beta}}^*-\hat{\bm{\beta}}\|_2= \|\hat {\bm M}\hat\E_\mathcal D[{\bm z}\epsilon]\|_2=\|\hat {\bm M}\hat\E_\mathcal D[{\bm z}\epsilon]-{\bm M}\EDP [{\bm z}_i\epsilon_i]\|_2,
\end{equation*}
which is $\oP(1)$ by \eqref{eq_ConsM}, Assumption \ref{ass_MomentsGen}, Assumption \ref{ass_LLNGen}, and Lemma \ref{lem_ProdMatrix} below.
\end{proof}
\begin{lemma}\label{lem_ProdMatrix}
Let ${\bm A_n}, {\bm B_n}$ be two sequences of fixed matrices that depend on $P\in \mathcal P$ and let $\bar {\bm A}_n, \bar {\bm B}_n$ be sequences of random matrices that also depend on $P\in \mathcal P$. Assume that $\|\bar {\bm A}_n - {\bm A_n}\|_{op} = \oP(1)$ and $\|\bar {\bm B}_n - {\bm B_n} \|_{op} = \oP(1)$.
\begin{enumerate}
    \item Assume that the matrices have conformable dimensions, $\sup_{n\in \mathbb N}\sup_{P\in \mathcal P}\|{\bm A_n}\|_{op}<\infty$ and $\sup_{n\in \mathbb N}\sup_{P\in \mathcal P}\|{\bm B_n}\|_{op}<\infty$. Then, it holds that $\|\bar {\bm A}_n\bar {\bm B}_n- {\bm A_n}{\bm B_n}\|_{op} = \oP(1)$.\label{sublem_Prod}
    \item Assume that ${\bm A}_n^{-1}$ exists for all $n\in \mathbb N$ and $P\in \mathcal P$ and that $\sup_{n\in \mathbb N}\sup_{P\in \mathcal P}\|{\bm A}_n^{-1}\|_{op}<\infty$. Then, it holds that $\|\bar {\bm A}_n^{-1} - {\bm A}_n^{-1}\|_{op} = \oP(1)$ (where we set $\|\bar {\bm A}_n^{-1} - {\bm A}_n^{-1}\|_{op}=\infty$ if $\bar {\bm A}_n$ is not invertible).\label{sublem_Inv}

\end{enumerate}
\end{lemma}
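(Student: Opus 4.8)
The plan is to prove both parts by the standard add-and-subtract decompositions, but carrying every constant bound uniformly in $P\in\mathcal P$ and $n\in\mathbb N$, so that the conclusions hold in the strong $\oP(1)$ sense of Definition \ref{def_StochOrder} rather than merely pointwise in $P$.

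For part \ref{sublem_Prod}, I would write
\[
\bar A_n\bar B_n - A_nB_n = \bar A_n(\bar B_n - B_n) + (\bar A_n - A_n)B_n
\]
and bound the operator norm by submultiplicativity,
\[
\|\bar A_n\bar B_n - A_nB_n\|_{op} \le \|\bar A_n\|_{op}\,\|\bar B_n - B_n\|_{op} + \|\bar A_n - A_n\|_{op}\,\|B_n\|_{op}.
\]
The second summand is the product of the $\oP(1)$ quantity $\|\bar A_n-A_n\|_{op}$ with the uniformly bounded deterministic factor $\|B_n\|_{op}$, hence $\oP(1)$. For the first summand, $\|\bar A_n\|_{op}\le \|\bar A_n-A_n\|_{op}+\|A_n\|_{op}$ is the sum of an $\oP(1)$ term and a uniformly bounded term, so $\|\bar A_n\|_{op}=\OP(1)$; multiplying this by the $\oP(1)$ factor $\|\bar B_n-B_n\|_{op}$ and invoking Lemma \ref{lem_ProdOrder} (the uniform rule that $\OP(1)\cdot\oP(1)=\oP(1)$) gives $\oP(1)$. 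This step is essentially bookkeeping; the only thing to watch is that the bounds on $\|A_n\|_{op}$ and $\|B_n\|_{op}$ are uniform over $\mathcal P$ and $n$, which is exactly what the hypotheses provide.

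For part \ref{sublem_Inv}, the first move is to translate the inverse-norm bound into a singular-value bound: since $\|A_n^{-1}\|_{op}=\sigmamin(A_n)^{-1}$, the hypothesis $\sup_{n,P}\|A_n^{-1}\|_{op}<\infty$ is equivalent to $c_0\coloneqq\inf_{n,P}\sigmamin(A_n)>0$. Weyl's perturbation inequality for singular values (which is $P$-free and follows from $\sigmamin$ being $1$-Lipschitz in operator norm) gives $\sigmamin(\bar A_n)\ge \sigmamin(A_n)-\|\bar A_n-A_n\|_{op}\ge c_0-\|\bar A_n-A_n\|_{op}$. On the event $E_n=\{\|\bar A_n-A_n\|_{op}\le c_0/2\}$, whose complement has probability tending to $0$ uniformly in $P$ because $\|\bar A_n-A_n\|_{op}=\oP(1)$, the matrix $\bar A_n$ is invertible with $\|\bar A_n^{-1}\|_{op}\le 2/c_0$. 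On $E_n$ I then use the resolvent identity $\bar A_n^{-1}-A_n^{-1}=-\bar A_n^{-1}(\bar A_n-A_n)A_n^{-1}$ together with submultiplicativity to obtain
\[
\|\bar A_n^{-1}-A_n^{-1}\|_{op}\le \|\bar A_n^{-1}\|_{op}\,\|\bar A_n-A_n\|_{op}\,\|A_n^{-1}\|_{op}\le \frac{2}{c_0^2}\,\|\bar A_n-A_n\|_{op}.
\]
To assemble the uniform conclusion, for any $\delta>0$ I split $\sup_P\Prob_P(\|\bar A_n^{-1}-A_n^{-1}\|_{op}>\delta)$ into a contribution on $E_n$, bounded by $\sup_P\Prob_P\big((2/c_0^2)\|\bar A_n-A_n\|_{op}>\delta\big)\to 0$, and a contribution on $E_n^c$, bounded by $\sup_P\Prob_P(E_n^c)\to 0$; the convention that the norm is $\infty$ off the invertibility event is harmless, since that event is contained in $E_n^c$.

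The routine part is the algebra; the only genuine point requiring care is the uniformity. The argument works precisely because the Weyl/Lipschitz bound on $\sigmamin$ is deterministic, so the lower bound $c_0$ is one fixed constant valid for all $P$, and because both the control of $E_n^c$ and the bound on $E_n$ are driven by the single uniform quantity $\|\bar A_n-A_n\|_{op}=\oP(1)$. I therefore expect no substantive obstacle beyond the discipline of taking $\sup_{P\in\mathcal P}$ before letting $n\to\infty$ at each stage and of invoking the uniform (rather than pointwise-in-$P$) product rule of Lemma \ref{lem_ProdOrder}.
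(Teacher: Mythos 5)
Your proof is correct, and the conclusions match the paper's, but the route differs in both parts, so a comparison is in order. For part \ref{sublem_Prod}, the paper uses the three-term expansion
$\|A_nB_n-\bar A_n\bar B_n\|_{op}\leq \|A_n\|_{op}\|B_n-\bar B_n\|_{op} + \|A_n-\bar A_n\|_{op}\|B_n\|_{op} + \|A_n-\bar A_n\|_{op}\|\bar B_n- B_n\|_{op}$,
in which every stochastic factor is either $\oP(1)$ times a deterministic uniformly bounded quantity or a product of two $\oP(1)$ quantities; no stochastic boundedness of $\bar A_n$ is ever needed. Your two-term decomposition instead routes through the claim $\|\bar A_n\|_{op}=\OP(1)$, and here lies a small technical wrinkle: the paper's Definition \ref{def_StochOrder} of $\OP(1)$ demands tightness uniformly over \emph{all} $n\in\mathbb N$ and $P\in\mathcal P$, and this does not literally follow from $\|\bar A_n-A_n\|_{op}=\oP(1)$, which only controls large $n$ (for small $n$ nothing uniform in $P$ is assumed). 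The slip is harmless, since the $\oP(1)$ conclusion itself only concerns large $n$: for $n\geq N(\delta)$ one has $\sup_{P\in\mathcal P}\Prob_P\bigl(\|\bar A_n\|_{op}> 1+\sup_{m,P}\|A_m\|_{op}\bigr)<\delta$, and the product argument of Lemma \ref{lem_ProdOrder} can be run on that range directly; still, as written the appeal to Lemma \ref{lem_ProdOrder} is not fully licensed, and the paper's cross-term decomposition sidesteps the issue entirely. For part \ref{sublem_Inv}, your argument is genuinely different: you get invertibility and the uniform bound $\|\bar A_n^{-1}\|_{op}\leq 2/c_0$ from Weyl's perturbation inequality for $\sigmamin$ on the event $E_n=\{\|\bar A_n-A_n\|_{op}\leq c_0/2\}$, then conclude via the resolvent identity and a union bound; the paper instead writes $\bar A_n=A_n(I+A_n^{-1}(\bar A_n-A_n))$, invokes the Neumann series for invertibility, and derives the self-bounding inequality leading to
$\|\bar A_n^{-1}-A_n^{-1}\|_{op}\leq \|A_n^{-1}\|_{op}^2\|\bar A_n-A_n\|_{op}/\bigl(1-\|A_n^{-1}\|_{op}\|\bar A_n-A_n\|_{op}\bigr)$.
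Both produce a deterministic bound proportional to $\|\bar A_n-A_n\|_{op}$ with constants uniform in $P$ and $n$; your version makes the uniformity especially transparent through the single constant $c_0$ and an explicit good event, while the paper's avoids singular-value machinery and keeps everything at the level of norm inequalities.
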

\begin{proof}
    For ease of notation, we sometimes omit the dependence on $n$ in the following.
    For \ref{sublem_Prod}, note that by the triangle inequality and the submultiplicativity of the operator norm,
    $$\|{\bm A}{\bm B}-\bar {\bm A}\bar {\bm B}\|_{op}\leq \|{\bm A}\|_{op}\|{\bm B}-\bar {\bm B}\|_{op} + \|{\bm A}-\bar {\bm A}\|_{op}\|{\bm B}\|_{op} + \|{\bm A}-\bar {\bm A}\|_{op}\|\bar {\bm B}- {\bm B}\|_{op},$$
    from which we can easily conclude using the uniform boundedness of $\|{ {\bm A}_n}\|_{op}$ and $\|{{\bm B}_n}\|_{op}.$

    For \ref{sublem_Inv}, note that $\bar {\bm A} = {\bm A}(I+{\bm A}^{-1}(\bar {\bm A}-{\bm A}))$. That means if $\|{\bm A}^{-1}\|_{op}\|\bar {\bm A}-{\bm A}\|_{op}<1$, then $\bar {\bm A}$ is invertible using the Neumann series. Moreover,
    $$\|\bar {\bm A}^{-1}\|_{op} = \|\bar {\bm A}^{-1}(\bar {\bm A}-{\bm A}){\bm A}^{-1} - {\bm A}^{-1}\|_{op}\leq\|\bar {\bm A}^{-1}\|_{op}\|\bar {\bm A}-{\bm A}\|_{op}\|{\bm A}^{-1}\|_{op} + \|{\bm A}^{-1}\|_{op},$$
    hence if $\|{\bm A}^{-1}\|_{op}\|\bar {\bm A}-{\bm A}\|_{op}<1$, then
    $$\|\bar {\bm A}^{-1}\|_{op}\leq \frac{\|{\bm A}^{-1}\|_{op}}{1-\|{\bm A}^{-1}\|_{op}\|\bar {\bm A}-{\bm A}\|_{op}}.$$
    Consequently,
    \begin{align*}
       \|\bar {\bm A}^{-1}-{\bm A}^{-1}\|_{op} &= \|\bar {\bm A}^{-1}(\bar {\bm A}-{\bm A}){\bm A}^{-1}\|_{op}\\
       &\leq \|\bar {\bm A}^{-1}\|_{op}\|\bar {\bm A}-{\bm A}\|_{op}\|{\bm A}^{-1}\|_{op}\\
       &\leq \frac{\|{\bm A}^{-1}\|_{op}^2\|\bar {\bm A}-{\bm A}\|_{op}}{1-\|{\bm A}^{-1}\|_{op}\|\bar {\bm A}-{\bm A}\|_{op}}. 
    \end{align*}
    Since $\sup_{n\in \mathbb N}\sup_{P\in \mathcal P}\|{\bm A}_n^{-1}\|_{op}<\infty$ and $\|\bar {\bm A}_n-{\bm A_n}\|_{op} = \oP(1)$, the result follows.
\end{proof}
\subsection{Proof of Proposition \ref{prop_AsNormCluster}}
We can apply Theorem \ref{thm_AsNormGen} and need to verify Assumptions \ref{ass_MomentsGen}-\ref{ass_SigmaHatGen}.
We will repeatedly use the following fact.
\begin{lemma}\label{lem_BoundEp}
    Under Assumption \ref{ass_MomentsGen}, there exists a constant $C_1>0$ such that for all $s>1$, for all $w\in \mathcal W$ and all $P\in \mathcal P$,
    $$\EP\left[\left|(w({\bm z}_i)+{\bm a}_w^T{\bm z}_i)\epsilon_i-\E_P[(w({\bm z}_i) + {\bm a}_w^T{\bm z}_i)\epsilon_i]\right|^s\right]\leq C_1^s \left(\EP[|\epsilon_i|^{s}] +\EP[\|{\bm z}_i\|_2^s|\epsilon_i|^s]\right).$$
\end{lemma}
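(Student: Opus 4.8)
The plan is to reduce the claimed $L^p$ bound to two ingredients: a deterministic pointwise bound on the integrand $U_i^w\coloneqq(w(Z_i)+A_w^TZ_i)\epsilon_i$ (the quantity appearing in Assumption~\ref{ass_CLTGen}) in terms of $|\epsilon_i|$ and $\|Z_i\|_2|\epsilon_i|$, together with the uniform boundedness of $\|A_w\|_2$. All remaining steps are elementary inequalities (the $L^p$ triangle inequality, Jensen, and convexity of $t\mapsto t^p$), and the only genuine task is to track constants so that the final $C_1$ depends on neither $p$, $w$, nor $P$.

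First I would establish the pointwise bound. Since $w\in\mathcal W$ forces $|w(Z_i)|\le 1$, and Cauchy--Schwarz gives $|A_w^TZ_i|\le\|A_w\|_2\|Z_i\|_2$, we obtain
$$|U_i^w|\le\bigl(|w(Z_i)|+|A_w^TZ_i|\bigr)|\epsilon_i|\le|\epsilon_i|+\|A_w\|_2\,\|Z_i\|_2|\epsilon_i|.$$
By the boundedness assertion of Lemma~\ref{lem_AwBwCons} there is a finite $a\coloneqq\sup_{n,P,w}\|A_w\|_2$, so $|U_i^w|\le|\epsilon_i|+a\,\|Z_i\|_2|\epsilon_i|$ holds uniformly over $w\in\mathcal W$ and $P\in\mathcal P$. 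Note that $A_w$ and $\epsilon_i$ are deterministic given $P$, so $U_i^w$ is a genuine function of $(X_i,Y_i,Z_i)$ whose moments are well defined.

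Next I would center and take moments. For $p>1$, Minkowski's inequality together with Jensen's inequality gives $\bigl(\EP|U_i^w-\EP[U_i^w]|^p\bigr)^{1/p}\le\bigl(\EP|U_i^w|^p\bigr)^{1/p}+|\EP[U_i^w]|\le2\bigl(\EP|U_i^w|^p\bigr)^{1/p}$, hence $\EP|U_i^w-\EP[U_i^w]|^p\le2^p\,\EP|U_i^w|^p$. Feeding the pointwise bound into the convexity inequality $(x+y)^p\le2^{p-1}(x^p+y^p)$ yields $\EP|U_i^w|^p\le2^{p-1}\bigl(\EP[|\epsilon_i|^p]+a^p\,\EP[\|Z_i\|_2^p|\epsilon_i|^p]\bigr)$. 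Combining the two bounds gives $\EP|U_i^w-\EP[U_i^w]|^p\le2^{2p-1}\max(1,a^p)\bigl(\EP[|\epsilon_i|^p]+\EP[\|Z_i\|_2^p|\epsilon_i|^p]\bigr)$, so the choice $C_1\coloneqq4\max(1,a)$ works, since then $C_1^p=2^{2p}\max(1,a^p)\ge2^{2p-1}\max(1,a^p)$.

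The argument is routine; the only point requiring attention is the uniformity of $C_1$. Independence of $p$ is automatic because every combinatorial factor is already of the form $C^p$, and independence of $w$ and $P$ is precisely what the uniform bound $a<\infty$ provides. The one mild subtlety is that Lemma~\ref{lem_AwBwCons} is stated under both Assumptions~\ref{ass_MomentsGen} and~\ref{ass_LLNGen}, whereas here only Assumption~\ref{ass_MomentsGen} is available; I would therefore note that the boundedness half of that lemma (the bound on $\|A_w\|_2$ via $\|M\|_{op}\le C/c^2$ and $\|B_w\|_2\le\EDP[\|X\|_2]\le C$) uses Assumption~\ref{ass_MomentsGen} alone, so $a<\infty$ is indeed available.
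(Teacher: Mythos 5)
Your proof is correct and takes essentially the same route as the paper's (which simply cites Cauchy--Schwarz, the triangle inequality, and the uniform boundedness of $A_w$ from Lemma \ref{lem_AwBwCons}): you make the pointwise bound $|(w(Z_i)+A_w^TZ_i)\epsilon_i|\le|\epsilon_i|+\|A_w\|_2\|Z_i\|_2|\epsilon_i|$ explicit, handle the centering via Minkowski--Jensen, and track the constant so that $C_1$ is independent of $p$, $w$, and $P$. Your closing observation --- that the boundedness half of Lemma \ref{lem_AwBwCons} (via $\|M\|_{op}\le C/c^{2}\cdot C\cdot 1/c$ and $\|B_w\|_2\le\EDP[\|X\|_2]\le C$) needs only Assumption \ref{ass_MomentsGen}, not Assumption \ref{ass_LLNGen} --- is a point the paper leaves implicit, and it is exactly right.
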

\begin{proof}
    Follows from the Cauchy--Schwarz inequality, the triangle inequality, and the fact that ${\bm a}_w$ is uniformly bounded (see  Lemma \ref{lem_AwBwCons}).
\end{proof}

\subsubsection{Verification of Assumption \ref{ass_MomentsGen}}
Assertions \ref{ass_SigmaminZZ} and \ref{ass_SigmaminZX} are identical to Assumption \ref{ass_SigmaMinCluster}. For assertions \ref{ass_NormZZ}, \ref{ass_NormZX}, \ref{ass_NormX}, and \ref{ass_NormZEps} we can use assertions \ref{ass_ZZCluster}, \ref{ass_XZCluster}, \ref{ass_XCluster}, and \ref{ass_EpsilonZCluster} of Assumption \ref{ass_MomentsCluster} and use linearity of expectation and the triangle inequality.

\subsubsection{Verification of Assumption \ref{ass_CLTGen}}\label{sec_VerifyAssCLTGen}
Define
    $$r_g(w) = \frac{1}{\sigma_w\sqrt{n_0}}\sum_{i\in I_g}\left\{(w({\bm z}_i) + {\bm a}_w^T{\bm z}_i)\epsilon_i - \E_P[(w({\bm z}_i) + {\bm a}_w^T{\bm z}_i)\epsilon_i]\right\}.$$
    Since $\mathcal D = \dot\bigcup_{g = 1}^G I_g$, we have to show that
    $$\lim_{n\to\infty}\sup_{P\in \mathcal P}\sup_{w\in \VDP(\gamma)}\sup_{t\in \mathbb R}\left|\Prob_P\left(\sum_{g=1}^G r_g(w)\leq t\right)-\Phi(t)\right|=0.$$
     Let $P_n\in \mathcal P$ and $w_n\in \mathcal V_{\mathcal D, P_n}(\gamma)$ be such that
     \begin{align*}
         &\sup_{P\in \mathcal P}\sup_{w\in \VDP(\gamma)}\sup_{t\in \mathbb R}\left|\Prob_P\left(\sum_{g=1}^G r_g(w)\leq t\right)-\Phi(t)\right|\\
         &\leq \sup_{t\in \mathbb R} \left|\Prob_{P_n}\left(\sum_{g=1}^G r_g(w_n)\leq t\right)-\Phi(t)\right|+ \frac{1}{n}.
     \end{align*}
    Since $\EP[r_g(w_n)] = 0$ and by the definition of $\sigma_w^2$, $\sum_{g=1}^G\E_{P_n}[r_g(w_n)^2] = 1$, we can use the Lindeberg--Feller Theorem \citep[see for example][Thm. 3.4.10.]{DurrettProb}, that is, it is enough to show that for all $\delta >0$,
    $ \lim_{n\to\infty}\sum_{g=1}^G\E_{P_n}\left[r_g(w_n)^2; |r_g(w_n)|>\delta\right]=0.$
    This, in turn, follows from
    \begin{equation}\label{eq_Lindeberg}
        \lim_{n\to\infty}\sup_{P\in \mathcal P}\sup_{w\in \VDP(\gamma)}\sum_{g=1}^G\E_{P}\left[|r_g(w)|^{2+\eta}\right] = 0.
    \end{equation}
    To verify \eqref{eq_Lindeberg}, recall that $n_g = |I_g|$. From H\"older's inequality and Jensen's inequality, the fact that $w\in \VDP(\zeta)$ and Lemma \ref{lem_BoundEp},
    \begin{align*}
        \EP\left[|r_g(w)|^{2+\eta}\right]&\leq\frac{1}{\sigma_w^{2+\eta}n_0^{1+\eta/2}}\left|\sum_{i\in I_g}\left\{(w({\bm z}_i) + {\bm a}_w^T{\bm z}_i)\epsilon_i - \E_P[(w({\bm z}_i) + {\bm a}_w^T{\bm z}_i)\epsilon_i]\right\}\right|^{2+\eta}\\
        &\leq\frac{n_g^{1+\eta}}{\sigma_w^{2+\eta}n_0^{1+\eta/2}}\sum_{i\in I_g}\left|(w({\bm z}_i) + {\bm a}_w^T{\bm z}_i)\epsilon_i - \E_P[(w({\bm z}_i) + {\bm a}_w^T{\bm z}_i)\epsilon_i]\right|^{2+\eta}\\
        &\leq \frac{n_g^{1+\eta}}{\zeta^{1+\eta/2}n_0^{1+\eta/2}}C_1^{2+\eta}\sum_{i\in I_g}\left(\EP[|\epsilon_i|^{2+\eta}] +\EP[\|{\bm z}_i\|_2^{2+\eta}|\epsilon_i|^{2+\eta}]\right)\\
        &\leq \frac{2n_g^{2+\eta}}{\zeta^{1+\eta/2}n_0^{1+\eta/2}}C_1^{2+\eta}C,
    \end{align*}
    where we used assertions \ref{ass_EpsilonCluster} and \ref{ass_EpsilonZCluster} of Assumption \ref{ass_MomentsCluster}. Hence, \eqref{eq_Lindeberg} follows since
    $$\sum_{g=1}^G \frac{n_g^{2+\eta}}{n_0^{1+\eta/2}}\leq \frac{\bar n^{1+\eta}}{n_0^{\eta/2}}\to 0.$$
    by Assumption \ref{ass_Cluster}.

\subsubsection{Verification of Assumption \ref{ass_LLNGen}}\label{sec_VeriAssLLNGen}
We start with assertion \ref{ass_LLNZEpsilon}. For simplicity, assume that $d=1$ (i.e., ${\bm z}$ is univariate). Otherwise, one can apply the same argument across the components of ${\bm z}$.
    Using Markov's inequality, it is enough to show that 
    $$\sup_{P\in \mathcal P}\E_P\left[n_0\left(\hatED[{\bm z}\epsilon]-\EDP[{\bm z}\epsilon]\right)^2\right]<\infty.$$
    Since by Assumption \ref{ass_Cluster}, the clusters are independent, and using the Cauchy--Schwarz inequality,
    \begin{align*}
        \E_P\left[n_0\left(\hatED[{\bm z}\epsilon]-\EDP[{\bm z}\epsilon]\right)^2\right] &= \frac{1}{n_0}\E_P\left[\left(\sum_{g=1}^G\sum_{i\in I_g} ({\bm z}_i\epsilon_i-\EP[{\bm z}_i\epsilon_i])\right)^2\right]\\
        &=\frac{1}{n_0}\sum_{g=1}^G\EP\left[\left(\sum_{i\in I_g}({\bm z}_i\epsilon_i-\EP[{\bm z}_i\epsilon_i])\right)^2\right]\\
        &\leq \frac{1}{n_0}\sum_{g=1}^Gn_g\sum_{i\in I_g}\EP[{\bm z}_i^2\epsilon_i^2]\\
        &\leq \bar n\sup_{P\in \mathcal P}\sup_{i\in \mathbb N} \E_P[{\bm z}_i^2\epsilon_i^2],
    \end{align*}
    which is bounded by assertion \ref{ass_EpsilonZCluster} of Assumption \ref{ass_MomentsCluster} and Assumption \ref{ass_Cluster}.
    This completes the proof of assertion \ref{ass_LLNZEpsilon}.

    For assertion \ref{ass_LLNXZ}, also without loss of generality, assume that ${\bm x}$ and ${\bm z}$ are both univariate. Note that 
    $$\hatED[{\bm x}{\bm z}]- \EDP[{\bm x}{\bm z}]=\frac{1}{G}\sum_{g=1}^G\frac{G}{n_0}\sum_{i\in I_g}({\bm x}_i{\bm z}_i-\EP[{\bm x}_i {\bm z}_i]).$$
    Since the clusters are independent, we apply Lemma \ref{lem_ULLN}, and it is enough to show that
    $$\sup_{P\in \mathcal P}\sup_{g\in \mathbb N}\EP\left[\left|\sum_{i\in I_g}{\bm x}_i{\bm z}_i\right|^{1+\eta}\right]<\infty,$$
    which directly follows from H\"older's inequality, assertion \ref{ass_XZCluster} of Assumption \ref{ass_MomentsCluster} and Assumption \ref{ass_Cluster}.

    Assertion \ref{ass_LLNZZ} follows similarly with assertion \ref{ass_ZZCluster} of Assumption \ref{ass_MomentsCluster}.

    For assertion \ref{ass_LLNwX}, one can use the same reasoning since $|w|<1$ for $w\in \mathcal W$ and using assertion \ref{ass_XCluster} of Assumption \ref{ass_MomentsCluster}.

\subsubsection{Verification of Assumption \ref{ass_SigmaHatGen}}\label{sec_VeriAssSigmaHatGen}
Recall the definitions of $s_g(w)$ and $\hat s_g(w)$ from Section \ref{sec_Cluster}.
Since by assumption, $\EP[s_g(w)]$ does not depend on $g$, we have that
$$
    \EP\left[s_g(w)\right]=\frac{1}{G}\sum_{g=1}^G\EP[s_g(w)]
    =\frac{n_0}{G}\EDP[(w({\bm z})+{\bm a}_w^T{\bm z})\epsilon].
$$
Note that $\EDP[{\bm a}_w^T{\bm z}\epsilon]=-\EDP[w(\bm z) \bm x^T]\bm M\EDP[{\bm z}\epsilon] = 0$ by \eqref{eq_MEZEpsilon}, hence
$$
    \EP\left[s_g(w)\right] = \frac{n_0}{G}\EDP[w({\bm z})\epsilon]=\frac{1}{G}\sum_{g=1}^G\sum_{i\in I_g}\EP[w({\bm z}_i)\epsilon_i].
$$
Hence, for the second term in \eqref{eq_SigmaWCluster},
$$\frac{1}{n_0}\sum_{g=1}^G\EP[s_g(w)]^2 = \frac{n_0}{G}\left(\frac{1}{n_0}\sum_{g=1}^G\sum_{i\in I_g}\EP[w({\bm z}_i)\epsilon_i]\right)^2.$$
By Assumption \ref{ass_Cluster}, $n_0/G\leq \bar n$. Hence, it is enough to show
\begin{align}
    \frac{1}{n_0}\sum_{g=1}^G\sum_{i\in I_g} \left(w({\bm z}_i)\hat r_i - \EP[w({\bm z}_i)\epsilon_i]\right) = \oPW(1),\label{eq_2ndSigmaW}\\
    \frac{1}{n_0}\sum_{g=1}^G\left(\hat s_g(w)^2-\EP[s_g(w)^2]\right) = \oPW(1). \label{eq_1stSigmaW}
\end{align}
We start with \eqref{eq_2ndSigmaW}. Note that $\hat r_i=\epsilon_i + {\bm x}_i^T({\bm{\beta}}^* - \hat{\bm{\beta}})$.
On one hand, since $\bar n<\infty$ and $|w|\leq 1$, we have $\sup_{P\in \mathcal P}\sup_{g\in \mathbb N}\sup_{w\in \mathcal W}\EP[|\sum_{i\in I_g}w({\bm z}_i)\epsilon_i|^2]<\infty$ using assertion \ref{ass_EpsilonCluster} of Assumption \ref{ass_MomentsCluster}. With Lemma \ref{lem_ULLN}, it holds that 
$\frac{1}{n_0}\sum_{g=1}^G \sum_{i\in I_g} (w({\bm z}_i)\epsilon_i - \EP[w({\bm z}_i)\epsilon_i]) = \oPW(1).$

Moreover
$$\left|\frac{1}{n_0}\sum_{g=1}^G\sum_{i\in I_g}w({\bm z}_i){\bm x}_i^T({\bm{\beta}}^*-\hat {\bm{\beta}})\right|\leq \|{\bm{\beta}}^*-\hat{\bm{\beta}}\|_2\frac{1}{n_0}\sum_{g=1}^G\sum_{i\in I_g} \|{\bm x}_i\|_2.$$
This bound is independent of $w\in \mathcal W$ and the second factor is $\OP(1)$ by Lemma \ref{lem_UnifOP1}. Moreover, $\|{\bm{\beta}}^* - \hat {\bm{\beta}}\|_2 = \oP(1)$ by Lemma \ref{lem_AwBwCons}.
Hence, \eqref{eq_2ndSigmaW} follows using Lemma \ref{lem_ProdOrder}.

It remains to prove \eqref{eq_1stSigmaW},
which reduces to proving
\begin{align}
    \frac{1}{n_0}\sum_{g=1}^G\left(s_g(w)^2- \EP[s_g(w)^2]\right)=\oPW(1),\label{eq_OracleLLNCluster}\\
    \frac{1}{n_0}\sum_{g=1}^G\left(\hat s_g(w)^2 - s_g(w)^2\right)=\oPW(1).\label{eq_DiffOracleCluster}
\end{align}
For \eqref{eq_OracleLLNCluster}, we can use Lemma \ref{lem_ULLN} provided that $\sup_{P\in \mathcal P}\sup_{w\in \mathcal W}\sup_{g\in \mathbb N}\EP[|s_g(w)|^{2+\eta}]<\infty.$ But this follows similarly to before using that $\bar n<\infty$, $|w|\leq 1$ and assertions \ref{ass_EpsilonCluster} and \ref{ass_EpsilonZCluster} of Assumption \ref{ass_MomentsCluster}.

For \eqref{eq_DiffOracleCluster}, we use Lemma \ref{lem_SumDiffSquares}, and it is enough to prove
\begin{align}
    \sup_{w\in \mathcal W}\frac{1}{n_0}\sum_{g=1}^G s_g(w)^2 &= \OP(1),\label{eq_WEpsO1Cluster}\\
    \frac{1}{n_0}\sum_{g=1}^G(\hat s_g(w)- s_g(w))^2 &= \oPW(1).\label{eq_WEpsDiffCluster}
\end{align}
For \eqref{eq_WEpsO1Cluster}, note that using similar arguments as before,
$$\frac{1}{n_0}\sum_{g=1}^Gs_g(w)^2\leq \frac{2\bar n}{n_0}\sum_{g=1}^G\sum_{i\in I_g}\left(\epsilon_i^2 + \|{\bm a}_w\|_2^2\|{\bm z}_i\|_2^2\epsilon_i^2\right).$$
By Lemma \ref{lem_AwBwCons}, $\|{\bm a}_w\|_2^2$ is uniformly bounded in $w\in \mathcal W$, $P\in \mathcal P$ and $n\in \mathbb N$. Hence, \eqref{eq_WEpsO1Cluster} follows using Lemma \ref{lem_UnifOP1} and assertions \ref{ass_EpsilonCluster} and \ref{ass_EpsilonZCluster} of Assumption \ref{ass_MomentsCluster}.

For \eqref{eq_WEpsDiffCluster}, observe that
\begin{align*}
    &\frac{1}{n_0}\sum_{g=1}^G(\hat s_g(w)- s_g(w))^2 \\&\leq \frac{1}{n_0}\sum_{g=1}^Gn_g\sum_{i\in I_g}\left[(w({\bm z}_i)+\hat {\bm a}_w^T{\bm z}_i)\hat r_i-(w({\bm z}_i)+{\bm a}_w^T {\bm z}_i)\epsilon_i\right]^2\\
    &\leq 2 \underbrace{\frac{\bar n}{n_0}\sum_{g=1}^G\sum_{i\in I_g}\left((w({\bm z}_i) + \hat {\bm a}_w^T {\bm z}_i)(\hat r_i - \epsilon_i)\right)^2}_{(I)} + 2\underbrace{\frac{\bar n}{n_0}\sum_{g=1}^G\sum_{i\in I_g}(\hat {\bm a}_w^T {\bm z}_i - {\bm a}_w^T {\bm z}_i)^2 \epsilon_i^2}_{(II)}
\end{align*}

For $(I)$, observe that $\hat r_i - \epsilon_i = {\bm x}_i^T({\bm{\beta}}^*-\hat {\bm{\beta}})$. Hence, using similar arguments as before,
\begin{align}
    (I)&= \frac{\bar n}{n_0}\sum_{g=1}^G\sum_{i\in I_g}\left((w({\bm z}_i) + \hat {\bm a}_w^T {\bm z}_i){\bm x}_i^T({\bm{\beta}}^*-\hat {\bm{\beta}})\right)^2\nonumber\\
    &\leq\|{\bm{\beta}}^*-\hat{\bm{\beta}}\|_2^2\frac{\bar n}{n_0}\sum_{g=1}^G\sum_{i\in I_g}\|(w({\bm z}_i) + \hat {\bm a}_w^T{\bm z}_i){\bm x}_i^T\|_2^2\nonumber\\
    &\leq\|{\bm{\beta}}^*-\hat {\bm{\beta}}\|_2^2\frac{2\bar n}{n_0}\sum_{g=1}^G\sum_{i\in I_g} \left(\|{\bm x}_i\|_2^2 + \|\hat {\bm a}_w\|_2^2\|{\bm z}_i\|_2^2\|{\bm x}_i\|_2^2\right).\nonumber\\
    &\leq \|{\bm{\beta}}^*-\hat{\bm{\beta}}\|_2^2\left(\frac{2\bar n}{n_0}\sum_{g=1}^G\sum_{i\in I_g}\|{\bm x}_i\|_2^2 +\|\hat {\bm a}_w\|_2^2\frac{2\bar n}{n_0}\sum_{g=1}^G\sum_{i\in I_g}\|{\bm z}_i\|_2^2\|{\bm x}_i\|_2^2\right)\label{eq_BoundTermI}
\end{align}
Now, $\|{\bm{\beta}}^*-\hat {\bm{\beta}}\|_2 = \oP(1)$ by Lemma \ref{lem_AwBwCons}. Moreover, $\frac{2\bar n}{n_0}\sum_{g=1}^G\sum_{i\in I_g}\|{\bm x}_i\|_2^2=\OP(1)$ and $$\frac{2\bar n}{n_0}\sum_{g=1}^G\sum_{i\in I_g}\|{\bm z}_i\|_2^2\|{\bm x}_i\|_2^2=\OP(1)$$
by Lemma \ref{lem_UnifOP1} and assertions \ref{ass_XCluster} and \ref{ass_XZCluster} of Assumption \ref{ass_MomentsCluster}. Finally,
$$\sup_{w\in \mathcal W} \|\hat {\bm a}_w\|_2 \leq \hatED[\|{\bm x}_i\|_2]\|\hat {\bm M}\|_{op} = \OP(1),$$
since $\hatED[\|{\bm x}_i\|_2]=\OP(1)$ by Lemma \ref{lem_UnifOP1} and assertion \ref{ass_XCluster} of Assumption \ref{ass_MomentsCluster} and $\|\hat {\bm M}\|_{op} = \|{\bm M}\|_{op}+\|\hat {\bm M} - {\bm M}\|_{op} = O(1) + \oP(1)$ by Lemma \ref{lem_AwBwCons}. We get that  $(I) = \oPW(1)$.

For $(II)$, observe that
$$(II)\leq \|\hat {\bm a}_w - {\bm a}_w\|_2^2\frac{1}{n_0}\sum_{g=1}^G\sum_{i\in I_g}\|{\bm z}_i\|_2^2\epsilon_i^2.$$
Hence, we can conclude that $(II)=\oPW(1)$ using
Lemma \ref{lem_AwBwCons}, Lemma \ref{lem_UnifOP1} and assertion \ref{ass_EpsilonZCluster} of Assumption \ref{ass_MomentsCluster}. 
This concludes the proof of \eqref{eq_WEpsDiffCluster}.

\subsection{Proof of Theorem 2 in the Main Text}\label{sec_ProofPValNull}
First, consider a fixed $w\in \mathcal W$. Define $q_{1-\alpha} = \Phi^{-1}(1-\alpha)$, that is, the $(1-\alpha)$ quantile of a standard normal distribution. Fix $\alpha_0\in (0,0.5)$. Let $p_{val}(w) = 1-\Phi\left(N(w)/\max(\sqrt \gamma, \hat\sigma_w)\right)$. Then, it holds that
$$\Prob_P\left(p_{val}(w)\leq \alpha\right)=\Prob_P\left(\frac{N(w)}{\max(\sqrt \gamma, \hat\sigma_w)}\geq q_{1-\alpha}\right).$$
Now, consider $K\geq 2$. Then,
\begin{align}
    \Prob_P\left(p_{val}(w)\leq \alpha\right)
    &=\Prob_P\left(\frac{N(w)}{\max(\sqrt \gamma, \hat\sigma_w)}\geq q_{1-\alpha},\,\sigma_w\geq \frac{\sqrt \gamma}{K}\right)\nonumber\\
    &\quad+ \Prob_P\left(\frac{N(w)}{\max(\sqrt \gamma, \hat\sigma_w)}\geq q_{1-\alpha},\, \sigma_w< \frac{\sqrt \gamma}{K}, \, \hat \sigma_w^2 \geq \gamma\right)\nonumber\\
    &\quad +\Prob_P\left(\frac{N(w)}{\max(\sqrt \gamma, \hat\sigma_w)}\geq q_{1-\alpha},\, \sigma_w< \frac{\sqrt \gamma}{K}, \, \hat \sigma_w^2 < \gamma\right). \label{eq_UnionK}
\end{align}
Since $\max(\sqrt \gamma, \hat\sigma_w)>\hat\sigma_w$ and applying Theorem 1 in the main text with $\zeta = \gamma/K^2$ (and noting that $\rho_w= 0$ and $\bm \tau = 0$ under the null hypothesis), we have for the first term of \eqref{eq_UnionK} that
\begin{align}
    &\sup_{P\in \mathcal P}\sup_{w\in \mathcal W}\sup_{\alpha\in (0, \alpha_0)}\left\{\Prob_P\left(\frac{N(w)}{\max(\sqrt \gamma, \hat\sigma_w)}\geq q_{1-\alpha},\,\sigma_w\geq \frac{\sqrt \gamma}{K}\right)-\alpha\right\}\nonumber\\
    &\leq \sup_{P\in \mathcal P}\sup_{w\in \VP(\gamma/K^2)}\sup_{\alpha\in (0, \alpha_0)}\left\{\Prob_P\left(\frac{N(w)}{\hat\sigma_w}\geq q_{1-\alpha}\right)-\alpha\right\}\nonumber\\
    &\to 0\text{, as } n\to\infty.\label{eq_K1}
\end{align}
For the second term in \eqref{eq_UnionK}, we use that $K\geq 2$ and the second part of Theorem 1 in the main text and get
\begin{align}
 &\sup_{P\in \mathcal P}\sup_{w\in \mathcal W}\sup_{\alpha\in (0, \alpha_0)} \Prob_P\left(\frac{N(w)}{\max(\sqrt \gamma, \hat\sigma_w)}\geq q_{1-\alpha},\, \sigma_w< \frac{\sqrt \gamma}{K}, \, \hat \sigma_w^2 \geq \gamma\right)\nonumber\\
 &\leq \sup_{P\in \mathcal P}\sup_{w\in \mathcal W}\Prob_P\left(|\sigma_w^2 - \hat\sigma_w^2|\geq 3\gamma/4\right)\nonumber\\
 &\to 0\text{, as } n\to\infty. \label{eq_K2}
\end{align}
For the third term in \eqref{eq_UnionK}, we use the restriction on $\hat\sigma_w$ and the union bound to get
\begin{align}
    &\Prob_P\left(\frac{N(w)}{\max(\sqrt \gamma, \hat\sigma_w)}\geq q_{1-\alpha},\, \sigma_w< \frac{\sqrt \gamma}{K}, \, \hat \sigma_w^2 < \gamma\right)\nonumber\\
    &\leq \Prob_P\left(\frac{N(w)}{\sqrt \gamma}\geq q_{1-\alpha},\, \sigma_w< \frac{\sqrt \gamma}{K}\right)\nonumber\\
    &\leq \Prob_P\left(\frac{N^*(w)}{\sqrt \gamma}\geq \frac{q_{1-\alpha}}{2},\, \sigma_w< \frac{\sqrt \gamma}{K}\right) + \Prob_P\left(\frac{N'(w)}{\sqrt \gamma}\geq \frac{q_{1-\alpha}}{2}\right)\label{eq_K3}
\end{align}
with $N^*(w) = \frac{1}{\sqrt{n_0}}\sum_{i\in \mathcal D}(w({\bm z}_i) + {\bm a}_w^T{\bm z}_i)\epsilon_i$ and $N'(w) = (\hat {\bm a}_w-{\bm a}_w)^T\frac{1}{\sqrt{n_0}}\sum_{i\in \mathcal D}{\bm z}_i\epsilon_i$. For the first term in \eqref{eq_K3}, note that $\E[N^*(w)^2] = \sigma_w^2$. Moreover, note that $q_{1-\alpha}\geq q_{1-\alpha_0}>0$. Hence, we can use Markov's inequality and obtain
\begin{align}
    &\sup_{P\in \mathcal P}\sup_{w\in \mathcal W}\sup_{\alpha\in (0, \alpha_0)}\Prob_P\left(\frac{N^*(w)}{\sqrt \gamma}\geq \frac{q_{1-\alpha}}{2},\, \sigma_w< \frac{\sqrt \gamma}{K}\right)\nonumber\\
    &\leq \frac{4 \sigma_w^2}{\gamma q_{1-\alpha_0}^2}1\{\sigma_w^2\leq \gamma/K^2\}\nonumber\\
    &\leq\frac{4}{K^2q_{1-\alpha_0}^2}.\label{eq_K3A}
\end{align}
For the second term in \eqref{eq_K3}, we use that it holds by \eqref{eq_RestTerm} that
\begin{equation}
    \lim_{n\to\infty}\sup_{P\in \mathcal P}\sup_{w\in \mathcal W}\sup_{\alpha\in (0, \alpha_0)}\Prob_P\left(\frac{N'(w)}{\sqrt \gamma}\geq \frac{q_{1-\alpha}}{2}\right) = 0.\label{eq_K3B}
\end{equation}
From \eqref{eq_UnionK}, \eqref{eq_K1}, \eqref{eq_K2}, \eqref{eq_K3}, \eqref{eq_K3A} and \eqref{eq_K3B}, it follows that
$$\limsup_{n\to\infty}\sup_{P\in \mathcal P}\sup_{w\in \mathcal W}\sup_{\alpha\in (0, \alpha_0)}\left\{\Prob_P(p_{val}(w)\leq \alpha) - \alpha)\right\}\leq \frac{4}{K^2q_{1-\alpha_0}^2}.$$
Since this holds for all $K>2$, we have that 
$$\limsup_{n\to\infty}\sup_{P\in \mathcal P}\sup_{w\in \mathcal W}\sup_{\alpha\in (0, \alpha_0)}\left\{\Prob_P(p_{val}(w)\leq \alpha) - \alpha\right\}\leq 0.$$
Now, we consider $\hat w$. Note that the samples $\mathcal D_A$ and $\mathcal D$ are independent. Hence, $\Prob_P(p_{val}(\hat w)\leq \alpha) = \E_P[\Prob_P(p_{val}(\hat w)\leq \alpha|\mathcal D_A)] = \E_P[\Prob_P(p_{val}(w)\leq \alpha)\bigr\vert_{w = \hat w}]$. It follows that
\begin{align*}
    &\limsup_{n\to\infty}\sup_{P\in \mathcal P}\sup_{\alpha\in (0,\alpha_0)}\left\{\Prob_P(p_{val}(\hat w)\leq \alpha) - \alpha\right\}\\
    &\leq \limsup_{n\to\infty}\sup_{P\in \mathcal P}\sup_{w\in \mathcal W}\sup_{\alpha\in (0,\alpha_0)}\left\{\Prob_P(p_{val}(w)\leq \alpha)-\alpha\right\}\leq 0,
\end{align*}
which completes the proof.

\subsection{Proof of Proposition 1 in the Main Text}\label{sec_ProofPower}
Let $w\in \mathcal W$ and define $q_{1-\alpha}=\Phi^{-1}(1-\alpha).$ Then, it holds that
\begin{align}
    \left\{p_{val}(w)\geq \alpha\right\}&=\left\{\frac{N(w)}{\max(\sqrt \gamma, \hat\sigma_w)}\leq q_{1-\alpha}\right\}.
    \label{eq_Q1}
\end{align}
Now, define $N_1(w) = N(w) - \sqrt{n_0}\left(\rho_w + (\hat {\bm a}_w-{\bm a}_w)^T{\bm{\tau}}\right)$ (that is, the numerator of the quantity in eq. (19) in the main text). Then, it holds that
\begin{align}
    \left\{\frac{N(w)}{\max(\sqrt \gamma, \hat \sigma_w)}\leq q_{1-\alpha}\right\}&=\left\{\frac{N_1(w)+\sqrt{n_0}\left(\rho_w + (\hat {\bm a}_w-{\bm a}_w)^T{\bm{\tau}}\right)}{\max(\sqrt \gamma, \hat \sigma_w)}\leq q_{1-\alpha}\right\}\nonumber\\
    &\subseteq\left\{\frac{N_1(w)+ \sqrt{n_0}\xi_n}{\max(\sqrt \gamma, \hat \sigma_w)}\leq q_{1-\alpha}\right\}\cup\left\{\rho_w + (\hat {\bm a}_w-{\bm a}_w)^T{\bm{\tau}}< \xi_n\right\}\label{eq_Q2}
\end{align}
Now, define $K_n = \sqrt{\sqrt{n_0}\xi_n}$. Since $\xi_n\gg \frac{1}{\sqrt{n_0}}$ by assumption, $K_n\to\infty$ as $n\to\infty$. It follows that
\begin{align}
    \left\{\frac{N_1(w)+ \sqrt{n_0}\xi_n}{\max(\sqrt \gamma, \hat \sigma_w)}\leq q_{1-\alpha}\right\}&\subseteq\left\{\frac{-K_n + \sqrt{n_0}\xi_n}{\max(\sqrt \gamma, \hat \sigma_w)} \leq q_{1-\alpha}\right\}\cup\left\{-K_n> N_1(w)\right\}.\label{eq_Q3}
\end{align}
Using similar arguments as before, there exists $\bar \sigma\in \mathbb R$ such that $\sigma_w\leq \bar \sigma$ for all $w\in \mathcal W$ and $P\in \{P_k\}_{k\in\mathbb N}$.
By Theorem 1 in the main text, $\lim_{n\to\infty}\sup_{P\in \{P_k\}_{k\in\mathbb N}}\sup_{w\in\mathcal W}\Prob_P(\hat\sigma_w> 2 \bar \sigma) = 0.$
Hence, for the first set in \eqref{eq_Q3}, since $-K_n + \sqrt{n_0}\xi_n\to\infty$,
\begin{equation}
    \lim_{n\to\infty}\sup_{P\in \{P_k\}_{k\in\mathbb N}}\sup_{w\in\mathcal W}\Prob_P\left(\frac{-K_n + \sqrt{n_0}\xi_n}{\max(\sqrt \gamma, \hat \sigma_w)} \leq q_{1-\alpha}\right)=0.\label{eq_Q3A}
\end{equation}

For the second set in \eqref{eq_Q3},
\begin{align}
    \left\{-K_n>N_1(w)\right\}\subseteq \left\{w\notin\mathcal V_{P_n}(\kappa)\right\}\cup \left\{-K_n>N_1(w), \, w\in \mathcal V_{P_n}(\kappa)\right\}\label{eq_Q4}
\end{align}
From \eqref{eq_Q1}, \eqref{eq_Q2}, \eqref{eq_Q3}, and \eqref{eq_Q4} and the fact that $\hat w_n$ was estimated on the auxiliary sample $\mathcal D_A$ independent of $\mathcal D$, it follows that,
\begin{align*}
    \Prob_{P_n}\left(p_{val}(\hat w_n)\geq \alpha\right)&\leq \Prob_{P_n}\left(\rho_{\hat w_n} + (\hat {\bm a}_{\hat w_n}-{\bm a}_{\hat w_n})^T {\bm{\tau}}<\xi_n\right) + \Prob_{P_n}\left(\hat w_n\notin \mathcal V_{P_n}(\kappa)\right)\\
    &\quad +\Prob_{P_n}\left(\frac{-K_n + \sqrt{n_0}\xi_n}{\max(\sqrt \gamma, \hat \sigma_w)} \leq q_{1-\alpha}\right)\\
    &\quad+ \sup_{P\in \{P_k\}_{k\in \mathbb N}}\sup_{w\in \mathcal V_{P}(\kappa)}\Prob_P\left(-K_n> N_1(w)\right).
\end{align*}
We can use Assumption 4 in the main text and \eqref{eq_Q3A}, and it remains to show that
\begin{equation}
    \lim_{n\to\infty}\sup_{P\in \{P_k\}_{k\in \mathbb N}}\sup_{w\in \mathcal V_{P}(\kappa)}\Prob_P\left(-K_n> N_1(w)\right)=0.\label{eq_Q5}
\end{equation}
For this, note that for $w\in \mathcal V_P(\kappa)$, it holds that $\sigma_w^2\geq \kappa$ and 
$$
    \Prob_P\left(-K_n>N_1(w)\right)=\Prob_P\left(\frac{-K_n}{\sigma_w}\geq \frac{N_1(w)}{\sigma_w}\right)\leq\Prob_P\left(\frac{-K_n}{\sqrt \kappa}\geq \frac{N_1(w)}{\sigma_w}\right).
$$
From Theorem 1 in the main text with $\zeta = \kappa$ and $\mathcal P = \{P_k\}_{k\in \mathbb N}$, we know that $\frac{N_1(w)}{\sigma_w}\to\mathcal N(0,1)$ uniformly in $P\in \{P_k\}_{k\in \mathbb N}$ and $w\in \mathcal V_{P}(\kappa)$. Since $-K_n/\sqrt \kappa\to -\infty$, this proves \eqref{eq_Q5} and concludes the proof.

\subsection{Proof of Theorem \ref{thm_GenAsympWeak}}
In the following, we omit the dependence of $\bm\beta = \bm \beta(P)$ and $\bm \theta = \bm \theta(P)$ on $P\in \mathcal P$.
By Assumption \ref{ass_BetaP} (and the subsequent definition of $\nu$) and definitions (25) and (26) in the main text, we can write
\begin{align}
    \tilde r_i(\bm \beta) &= r_i(\bm \beta)- \bm c_i^T\hat {\bm \lambda}_r(\bm \beta)\nonumber\\
    &=\nu_i + \bm c_i^T\bm \theta - \bm c_i^T\hatED[\bm c \bm c^T]^{-1}\hatED[\bm c(\nu + \bm c^T\bm \theta)]\nonumber\\
    &= \nu_i - \bm c_i^T\hatED[\bm c\bm c^T]^{-1} \hatED[\bm c\nu].\label{eq_TildeRi}
\end{align}
Note that from the definition (25) of $\hat{\bm \lambda}_w$ in the main text,
\begin{align*}
    \frac{1}{\sqrt{n_0}}\sum_{i\in \mathcal D}\tilde w_i\bm c_i^T\hatED[\bm c\bm c^T]
    &=\frac{1}{\sqrt{n_0}}\sum_{i\in \mathcal D}\left(w(\bm z_i, \bm c_i)- \bm c_i^T\hat {\bm\lambda}_w\right)\bm c_i^T\hatED[\bm c\bm c^T]^{-1}\\
    &= \sqrt{n_0}\left(\hatED[w(\bm z, \bm c)\bm c^T]-\hat{\bm \lambda}_w^T\hatED[\bm c\bm c^T]\right)\hatED[\bm c\bm c^T]^{-1}\\
    &=\sqrt{n_0}\left(\hatED[w(\bm z, \bm c)\bm c^T]\hatED[\bm c\bm c^T]^{-1}-\hat{\bm \lambda}_w^T\right)\\
    &= 0.
\end{align*}
Hence, \eqref{eq_TildeRi} implies that
\begin{align}
    N(w, \bm \beta) &= \frac{1}{\sqrt{n_0}}\sum_{i \in \mathcal D} \tilde w_i \tilde r_i(\bm \beta)\nonumber\\ 
    &= \frac{1}{\sqrt{n_0}}\sum_{i \in \mathcal D}\left( w(\bm z_i, \bm c_i) - \bm c_i^T\hat {\bm \lambda}_w\right)\nu_i\nonumber\\
    &=\frac{1}{\sqrt{n_0}}\sum_{i\in \mathcal D}\nu_i \bar w_i+({\bm \lambda}_w-\hat{\bm \lambda}_w)^T\frac{1}{\sqrt{n_0}}\sum_{i\in \mathcal D}\bm c_i\nu_i,\label{eq_DecompNWeak}
\end{align}
with $\bm \lambda_w$ and $\bar w_i$ defined in \eqref{eq_DefLambdaW}. Then, Theorem \ref{thm_GenAsympWeak} follows using Assumptions \ref{ass_GenMomentsWeak}, \ref{ass_GenNormWeak}, \ref{ass_GenULLNWeak}, and \ref{ass_GenVarWeak} analogously to the proof of Theorem \ref{thm_AsNormGen} in Section \ref{sec_ProofAsNormGen}.

\subsection{Proof of Proposition \ref{prop_AsNormWeakCluster}}
We apply Theorem \ref{thm_GenAsympWeak} and need to verify Assumptions \ref{ass_GenMomentsWeak}-\ref{ass_GenVarWeak}.
\subsubsection{Verification of Assumption \ref{ass_GenMomentsWeak}}
Assumption \ref{ass_GenMomentsWeak} follows from Assumption \ref{ass_MomentsWeakCluster}.
\subsubsection{Verification of Assumption \ref{ass_GenNormWeak}}
Define
$$l_g(w) = \frac{1}{\bar \sigma_w \sqrt{n_0}}\sum_{i \in I_g}\nu_i\bar w_i$$
and note that for $|w|\leq 1$,
\begin{align*}
    \EP\left[|\nu_i\bar w_i|^{2+\eta}\right]\leq 2^{1 + \eta} \left(\EP[|\nu_i|^{2 + \eta}] + \|\bm \lambda_w\|_2^{2 + \eta} \EP[|\nu_i|^{2 + \eta}\|\bm c_i\|_2^{2 +  \eta}]\right)
\end{align*}
can be bounded independently of $i$, $P$ and $w$ because $\bm \lambda_w = \EDP[\bm c\bm c^T]^{-1}\EDP[\bm cw(\bm z, \bm c)]$ and by Assumption \ref{ass_MomentsWeakCluster}. Then, Assumption \ref{ass_GenNormWeak} follows analogously to Section \ref{sec_VerifyAssCLTGen} with $l_g(w)$ replacing $r_g(w)$.
\subsubsection{Verification of Assumption \ref{ass_GenULLNWeak}}
This is analogous to Section \ref{sec_VeriAssLLNGen} using Assumption \ref{ass_MomentsWeakCluster}.
\subsubsection{Verification of Assumption \ref{ass_GenVarWeak}}
The proof is similar to Section \ref{sec_VeriAssSigmaHatGen}, and we only sketch the main steps.

From the definitions \eqref{eq_SigmaWeakCluster} and \eqref{eq_HatSigmaWeakCluster}, it follows that we need to show
\begin{align}
    \frac{1}{n_0}\sum_{g = 1}^G \tilde s_g(w, \bm \beta) = \oPW(1),\label{eq_SgWeak1}\\
    \frac{1}{n_0}\sum_{g=1}^G\left(\tilde s_g(w,\bm \beta)^2-\EP[\bar s_g(w)^2]\right) = \oPW(1),\label{eq_SgWeak2}
\end{align}
with $\bm \beta = \bm \beta(P)$. We start with \eqref{eq_SgWeak1}, which is equal to
$$\frac{1}{n_0}\sum_{g = 1}^G\sum_{i\in I_g}\nu_i\bar w_i + (\bm \lambda_w - \hat{\bm \lambda}_w)^T\frac{1}{n_0}\sum_{g=1}^G\sum_{i \in I_g}\bm c_i\nu_i$$
by \eqref{eq_DecompNWeak}.
From this, \eqref{eq_SgWeak1} follows similarly to Section \ref{sec_VeriAssSigmaHatGen} with Assumption \ref{ass_MomentsWeakCluster}.

For, \eqref{eq_SgWeak2}, one can proceed similarly to Section \ref{sec_VeriAssSigmaHatGen} (and using Assumption \ref{ass_MomentsWeakCluster}) to see that it is enough to show
\begin{equation}
    \frac{1}{n_0}\sum_{g = 1}^G\left(\tilde s_g(w, \bm \beta)- \bar s_g(w)\right)^2 = \oPW(1).
\end{equation}
Plugging in the definitions, and \eqref{eq_TildeRi}, writing $\hat {\bm \lambda}_\nu \coloneqq\hatED[\bm c\bm c^T]^{-1}\hatED[\bm c\nu]$, and using the Cauchy--Schwarz inequality, we obtain
\begin{align*}
    &= \frac{1}{n_0}\sum_{g = 1}^G\left[\sum_{i\in I_g}\left(\tilde w_i \tilde r_i(\bm \beta)-\bar w_i \nu_i\right)\right]^2\\
    &\leq \frac{\bar n}{n_0}\sum_{g = 1}^G\sum_{i\in I_g}\left[\left(w(\bm z_i, \bm c_i)-\bm c_i^T\hat{\bm \lambda}_w \right)\left(\nu_i - \bm c_i^T \hat{\bm \lambda}_\nu\right) -\left(w(\bm z_i, \bm c_i)-\bm c_i^T{\bm \lambda}_w\right)\nu_i\right]^2\\
    & = \frac{\bar n}{n_0}\sum_{g = 1}^G\sum_{i\in I_g}\left[\left(\bm \lambda_w- \hat {\bm \lambda}_w\right)^T\bm c_i\nu_i - \left(w(\bm z_i, \bm c_i) -\bm c_i^T\hat{\bm \lambda}_w\right)\left( \bm c_i^T\hat {\bm \lambda}_\nu\right)\right]^2\\
    &\leq 3\underbrace{\|\bm \lambda_w - \hat{\bm \lambda}_w\|_2^2\frac{\bar n}{n_0}\sum_{g = 1}^G\sum_{i\in I_g}\|\bm c_i\|_2^2 \nu_i^2}_{(I)} + 3\underbrace{\|\hat{\bm \lambda}_\nu\|_2^2\|\hat {\bm \lambda}_w\|_2^2\frac{\bar n}{n_0}\sum_{g = 1}^G\sum_{i\in I_g}\|\bm c_i\|_2^4}_{(II)} + 3\underbrace{\|\hat{\bm \lambda}_\nu\|_2^2\frac{\bar n}{n_0}\sum_{g = 1}^G\sum_{i\in I_g}\|\bm c_i\|_2^2}_{(III)}.
\end{align*}
To show $(I) = \oPW(1)$, we can proceed similarly to before by showing that $\|\bm \lambda_w - \hat{\bm \lambda}_w\|_2^2= \oPW(1)$ and using Assumption \ref{ass_MomentsWeakCluster}. To show $(II) = \oPW(1)$, it suffices to note that $\|\hat {\bm \lambda}_\nu\|_2 = \oPW(1)$, $\sup_{|w| \leq 1}\|\hat{\bm \lambda}_w\|_2 = \OP(1)$, and $\frac{1}{n_0}\sum_{g = 1}^G\sum_{i\in I_g}\|\bm c_i\|_2^4 = \OP(1)$, which follows similarly to before from the assumptions of Proposition \ref{prop_AsNormWeakCluster}. Similarly, $(III)=\oPW(1)$.

\subsection{Proof of Theorem 4 in the Main Text}
This is completely analogous to the proof of Theorem 2 in the main text in Section \ref{sec_ProofPValNull}.

\subsection{Proof of Lemma \ref{lem_WnWStar}}\label{sec_ProofWnWStar}
For two functions $w, w'\in \mathcal W$, we have $|\rho_w-\rho_{w'}|=|\E_P[\epsilon(w({\bm z})-w'({\bm z}))]|\leq \EP[\epsilon^2]^{1/2}\EP[(w({\bm z})-w'({\bm z}))^2]^{1/2}=\EP[\epsilon^2]^{1/2}\|w-w'\|_{L_2}$. Hence, it follows from the assumption of Lemma \ref{lem_WnWStar} that $\rho_{\hat w_n}-\rho_{w^*} = o_P(1)$.
For \eqref{eq_SigmaWEst}, note that
    $$|\sigma_w^2-\sigma_{w'}^2| = \EP[(w({\bm z})+{\bm a}_w^T{\bm z})^2\epsilon^2]-\EP[(w'({\bm z})+{\bm a}_{w'}^T{\bm z})^2\epsilon^2]+\rho_{w'}^2 - \rho_w^2.$$
    Note that for random variables $v$ and $w$, it holds that $|\EP[v^2]-\EP[w^2]| = |\EP[(v+w)(v-w)]|\leq\EP[(v+w)^2]^{1/2}\EP[(v-w)^2]^{1/2}$. Hence, what remains to do is to obtain a bound for $\EP[(w({\bm z})+{\bm a}_w^T{\bm z} + w'({\bm z}) + {\bm a}_{w'}^T{\bm z})^2\epsilon^2]$ independent of $w$ and $w'$ and a bound for $\EP[(w({\bm z})-w'({\bm z})+{\bm a}_w^T {\bm z} - {\bm a}_{w'}^T{\bm z})^2\epsilon^2]$ in terms of $\|w-w'\|_{L_2}$.
    The first is a consequence of Lemma \ref{lem_BoundEp}. For the second, since $\EP[\epsilon^2|{\bm z}]\leq C$, it is enough to bound $\EP[(w({\bm z})-w'({\bm z}))^2]$ and $\EP[({\bm a}_w^T{\bm z}-{\bm a}_{w'}^T{\bm z})^2]$. The former is by definition equal to $\|w-w'\|_{L_2}^2$. The latter is equal to $({\bm a}_w-{\bm a}_{w'})^T\EP[{\bm z}{\bm z}^T]({\bm a}_w-{\bm a}_{w'})\leq \|\EP[{\bm z}{\bm z}^T]\|_{op}\|{\bm a}_w-{\bm a}_{w'}\|_2^2$ and 
    $$\|{\bm a}_w-{\bm a}_{w'}\|_2^2\leq\|\bm M\|_{op}^2\|\E_P[(w({\bm z})-w'({\bm z})){\bm x}]\|_2^2\leq \|\bm M\|_{op}^2\E[\|{\bm x}\|_2^2]\|w-w'\|_{L_2}^2.$$
    Hence, \eqref{eq_SigmaWEst} follows using Lemma \ref{lem_AwBwCons} and assertion 3 of Assumption 2 in the main text.

\subsection{Proof of Lemma 1 in the Main Text}
 If $\EP[\epsilon|{\bm z}]=0$, we can set ${\bm{\lambda}} = {\bm M}\EP[{\bm z}\EP[f({\bm x})+\eta|{\bm z}]]$. On the other hand, if $\EP[f({\bm x})+\eta|{\bm z}]=\EP[{\bm x}|{\bm z}]^T{\bm{\lambda}}$, then $\EP[\epsilon|{\bm z}]=0$, since ${\bm M}\EP[{\bm z}\EP[f({\bm x})+\eta|{\bm z}]] = {\bm M}\EP[{\bm z}{\bm x}^T]{\bm{\lambda}} = {\bm{\lambda}}.$

\subsection{Auxiliary Lemmas}
Here, we collect auxiliary Lemmas for the various proofs.

\begin{lemma}\label{lem_UnifOP1}
Let $(v_i)_{i\in \mathbb N}$ be a sequence of random variables with law determined by $P\in \mathcal P$. If $\sup_{P\in \mathcal P}\sup_{i\in  \mathbb N}\EP[|v_i|]<\infty$, then
$$\frac{1}{n}\sum_{i=1}^n |v_i|=\OP(1).$$
\end{lemma}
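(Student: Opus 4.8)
The plan is to apply Markov's inequality directly, exploiting that the summands $|V_i|$ are nonnegative so that the random variable of interest is itself nonnegative with a uniformly bounded expectation. First I would set $C \coloneqq \sup_{P\in \mathcal P}\sup_{i\in \mathbb N}\EP[|V_i|]$, which is finite by hypothesis. Then for any $K > 0$ and any $P \in \mathcal P$, $n \in \mathbb N$, Markov's inequality gives
$$\Prob_P\left(\frac{1}{n}\sum_{i=1}^n |V_i| \geq K\right) \leq \frac{1}{K}\EP\left[\frac{1}{n}\sum_{i=1}^n |V_i|\right] = \frac{1}{Kn}\sum_{i=1}^n \EP[|V_i|] \leq \frac{C}{K},$$
where the final step uses linearity of expectation together with the per-index bound $\EP[|V_i|] \leq C$.

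Since the resulting bound $C/K$ does not depend on $P$ or $n$, I would take the supremum over both and then send $K \to \infty$, obtaining
$$\lim_{K\to\infty}\sup_{P\in \mathcal P}\sup_{n\in \mathbb N}\Prob_P\left(\frac{1}{n}\sum_{i=1}^n |V_i| \geq K\right) \leq \lim_{K\to\infty}\frac{C}{K} = 0,$$
which is precisely the defining property of $\OP(1)$ in Definition \ref{def_StochOrder}.

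There is essentially no obstacle to overcome here: the claim reduces to a single application of Markov's inequality. The only point worth flagging is that the uniformity over $P \in \mathcal P$ and over $n$ — the feature that distinguishes $\OP(1)$ in Definition \ref{def_StochOrder} from the classical $O_P(1)$ — is obtained for free, because the moment bound $C$ is assumed uniform in both $P$ and $i$, and averaging $n$ terms each with mean at most $C$ keeps the overall mean at most $C$ irrespective of $n$.
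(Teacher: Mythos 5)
Your proof is correct and is essentially identical to the paper's: both apply Markov's inequality to the nonnegative average, bound its expectation uniformly by $\sup_{P\in \mathcal P}\sup_{i\in \mathbb N}\EP[|V_i|]$ via linearity, and let $K\to\infty$. No differences worth noting.
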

\begin{proof}
    Let $K>0$. By Markov's inequality,
    $$\sup_{P\in \mathcal P}\sup_{n \in  \mathbb N}\Prob_P\left(\frac{1}{n}\sum_{i=1}^n |v_i|\geq K\right)\leq \frac{1}{K}\sup_{P\in \mathcal P}\sup_{n\in \mathbb N}\frac{1}{n}\sum_{i=1}^n\EP[|v_i|]\leq \frac{\sup_{P\in \mathcal P}\sup_{i\in \mathbb N}\EP[|v_i|]}{K},$$
    which converges to $0$ as $K\to\infty$.
\end{proof}
The next lemma is a uniform version of the weak law of large numbers with non-identically distributed random variables.
\begin{lemma}\label{lem_ULLN}
Let $(v_i)_{i\in \mathbb N}$ be a sequence of independent (not necessarily identically distributed) random variables, vectors, or matrices with law determined by $P\in \mathcal P$. If there exists $\eta>0$ such that 
$\sup_{P\in \mathcal P}\sup_{i\in  \mathbb N}\EP[\|v_i\|^{1+\eta}]<\infty$, then
$$\frac{1}{n}\sum_{i=1}^n (v_i-\EP[v_i]) = \oP(1).$$
\end{lemma}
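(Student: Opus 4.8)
The plan is to run a standard truncation argument, taking care that every bound depends on $P$ only through the uniform moment constant $B\coloneqq\sup_{P\in\mathcal P}\sup_{i\in\mathbb N}\E_P[\|V_i\|^{1+\eta}]<\infty$ and that the truncation level depends on $n$ alone. First I would reduce to the scalar case: since the $V_i$ live in a space of fixed dimension, each coordinate $V_{i,(k)}$ satisfies $\E_P[|V_{i,(k)}|^{1+\eta}]\le B$, and $\bigl\|\tfrac1n\sum_i(V_i-\E_P V_i)\bigr\|$ is bounded by a constant times the maximum over the finitely many coordinates of $\bigl|\tfrac1n\sum_i(V_{i,(k)}-\E_P V_{i,(k)})\bigr|$; a union bound over coordinates then preserves uniformity. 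I may also assume $\eta\in(0,1)$ without loss of generality, since the hypothesis for a given exponent $1+\eta$ implies the analogous hypothesis for every smaller exponent, and the case $\eta\ge1$ can be handled directly, as it entails uniformly bounded variances to which Chebyshev's inequality applies immediately.

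For scalar $V_i$ with $\eta\in(0,1)$, fix $\delta>0$ and a truncation level $\tau_n$, and split $V_i=\bar V_i+\tilde V_i$ with $\bar V_i=V_i\mathbbm 1\{|V_i|\le\tau_n\}$. For the bounded part I would use independence together with Chebyshev's inequality: since $\E_P[\bar V_i^2]\le\tau_n^{1-\eta}\E_P[|V_i|^{1+\eta}]\le\tau_n^{1-\eta}B$, the variance of $\tfrac1n\sum_i\bar V_i$ is at most $\tau_n^{1-\eta}B/n$, so that
$$\Prob_P\!\left(\Bigl|\tfrac1n\textstyle\sum_i(\bar V_i-\E_P\bar V_i)\Bigr|>\tfrac\delta2\right)\le \frac{4\tau_n^{1-\eta}B}{\delta^2 n}.$$
For the tail part I would apply Markov's inequality after bounding $\E_P[|\tilde V_i|]\le\tau_n^{-\eta}\E_P[|V_i|^{1+\eta}]\le\tau_n^{-\eta}B$, which yields
$$\Prob_P\!\left(\Bigl|\tfrac1n\textstyle\sum_i(\tilde V_i-\E_P\tilde V_i)\Bigr|>\tfrac\delta2\right)\le\frac{4B}{\delta\,\tau_n^{\eta}}.$$

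Choosing $\tau_n=\sqrt n$ makes both right-hand sides vanish as $n\to\infty$ (the first is of order $n^{-(1+\eta)/2}$ and the second of order $n^{-\eta/2}$), and crucially both bounds are uniform in $P\in\mathcal P$ because they depend on $P$ only through $B$ while $\tau_n$ is $P$-free. Taking the supremum over $P\in\mathcal P$ and then the limit in $n$ concludes the argument. The \emph{hard part} here is essentially bookkeeping rather than a genuine obstacle: the only points requiring care are the reduction from vectors or matrices to a fixed number of scalar coordinates (so that the union bound is harmless) and the verification that the truncation level can be chosen independently of $P$, which is precisely what upgrades the classical pointwise weak law of large numbers to the uniform statement $\oP(1)$ in the sense of Definition \ref{def_StochOrder}.
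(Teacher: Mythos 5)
Your proof is correct. At its core it is the same truncation idea that underlies the paper's argument, but the execution differs in two genuine ways. The paper first reduces the uniform statement to a pointwise one by extracting a near-supremum sequence $P_n\in\mathcal P$ (chosen so that the supremum over $\mathcal P$ of the deviation probability is attained up to $1/n$) and then invokes a citable textbook result, the weak law of large numbers for triangular arrays (Theorem 2.2.11 in \cite{DurrettProb}) with truncation level $b_n=n$, verifying its three conditions. You instead give a self-contained quantitative argument: truncation at $\tau_n=\sqrt n$, Chebyshev on the truncated part and Markov on the tail part, with explicit bounds $4B\tau_n^{1-\eta}/(\delta^2 n)$ and $4B/(\delta\tau_n^\eta)$ that depend on $P$ only through the uniform moment constant $B$. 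This makes the uniformity over $\mathcal P$ immediate, with no need for the extremal-sequence device; the price is that you carry the bookkeeping yourself rather than delegating it to a known theorem. The particular truncation level is immaterial (any $\tau_n\to\infty$ with $\tau_n^{1-\eta}/n\to 0$ works), and your side remarks are also sound: the coordinatewise reduction for vectors and matrices is harmless by equivalence of norms in fixed dimension and a union bound over finitely many coordinates, and the case $\eta\geq 1$ follows either by Lyapunov's inequality (reducing to a smaller exponent) or directly via uniformly bounded variances and Chebyshev.
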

\begin{proof}
Without loss of generality, we consider univariate $v_i$ (otherwise, one can look at the individual components). Moreover, without loss of generality, let $\eta\in (0,1)$.
    Let $\delta >0$. Choose $P_n\in \mathcal P$ such that 
    \begin{equation}\label{eq_DefPn}
        \Prob_{P_n}\left(\left|\frac{1}{n}\sum_{i=1}^n (v_i-\E_{P_n}[v_i])\right|>\delta\right)\geq \sup_{P\in \mathcal P}\Prob_{P}\left(\left|\frac{1}{n}\sum_{i=1}^n (v_i-\EP[v_i])\right|>\delta\right) - \frac{1}{n}.
    \end{equation}

We apply Theorem 2.2.11. in \cite{DurrettProb} (weak law of large numbers for triangular arrays) with $X_{n, k}$ corresponding to $v_k$ under law $P_n$ and $b_n = n$. First, observe that
$$\sum_{i=1}^n \Prob_{P_n}(|v_i|>n)\leq \sum_{i=1}^n \E_{P_n}\left[\frac{|v_i|^{1+\eta}}{n^{1+\eta}}\right]\leq \frac{\sup_{P\in \mathcal P}\sup_{i\in \mathbb N}\EP[|v_i|^{1+\eta}]}{n^\eta},$$
which converges to $0$ as $n\to\infty$. Moreover, 
\begin{align*}
    \frac{1}{n^2}\sum_{i=1}^n\E_{P_n}\left[v_i^21\{|v_i|\leq n\}\right]&\leq \frac{1}{n^2}\sum_{i=1}^n\E_{P_n}\left[v_i^2\frac{n^{1-\eta}}{|v_i|^{1-\eta}}\right]\\
    &\leq \frac{1}{n^{1+\eta}}\sum_{i=1}^n\E_{P_n}\left[|v_i|^{1+\eta}\right]\\
    &\leq \frac{\sup_{P\in \mathcal P}\sup_{i\in \mathbb N}\EP[|v_i|^{1+\eta}]}{n^\eta},
\end{align*}
which also converges to $0$ as $n\to\infty$. Finally, 
\begin{align*}
    \frac{1}{n}\sum_{i=1}^n\left(\E_{P_n}[v_i]-\E_{P_n}\left[v_i1\{|v_i|\leq n\}\right]\right)&=\frac{1}{n}\sum_{i=1}^n\E_{P_n}\left[v_i1\{|v_i|> n\}\right]\\
    &\leq \frac{1}{n}\sum_{i=1}^n\frac{\E_{P_n}[|v_i|^{1+\eta}]}{n^\eta}\\
    &\leq \frac{\sup_{P\in \mathcal P}\sup_{i\in \mathbb N}\EP[|v_i|^{1+\eta}]}{n^\eta}
\end{align*}
converges to $0$ as $n\to\infty$. Applying Theorem 2.2.11. in \cite{DurrettProb}, it follows that
$$\lim_{n\to\infty}\Prob_{P_n}\left(\left|\sum_{i=1}^n (v_i-\E_{P_n}[v_i])/n\right|>\delta\right)=0.$$
From \eqref{eq_DefPn}, it follows that
$$\lim_{n\to\infty}\sup_{P\in \mathcal P}\Prob_{P}\left(\left|\frac{1}{n}\sum_{i=1}^n (v_i-\EP[v_i])\right|>\delta\right)=0.$$
Since $\delta>0$ was arbitrary, this concludes the proof.
\end{proof}

The following Lemma is a uniform version of Slutsky's theorem, see e.g., Lemma 20 in \cite{ShahPetersHardness} for a proof.
\begin{lemma}[Lemma 20 in \cite{ShahPetersHardness}]\label{lem_USlutsky}
Let $(v_n)_{n\in \mathbb N}$ and $(w_n)_{n\in \mathbb N}$ be two sequences of random variables with distributions determined by $P\in \mathcal P$. Assume that 
$$\lim_{n\to\infty}\sup_{P\in \mathcal P} \sup_{t\in \mathbb R}\left|\Prob_P\left(v_n\leq t\right)-\Phi(t)\right| = 0.$$
\begin{enumerate}
    \item If for all $\delta >0$, it holds that $\lim_{n\to\infty}\sup_{P\in \mathcal P}\Prob(|w_n|>\delta) = 0$, then
$$\lim_{n\to\infty}\sup_{P\in \mathcal P} \sup_{t\in \mathbb R}\left|\Prob_P\left(v_n+w_n\leq t\right)-\Phi(t)\right| = 0.$$
    \item If for all $\delta >0$, it holds that $\lim_{n\to\infty}\sup_{P\in \mathcal P}\Prob_P(|w_n-1|>\delta) = 0$, then
$$\lim_{n\to\infty}\sup_{P\in \mathcal P} \sup_{t\in \mathbb R}\left|\Prob_P\left(v_n/w_n\leq t\right)-\Phi(t)\right| = 0.$$
\end{enumerate}
\end{lemma}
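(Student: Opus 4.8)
The plan is to prove both assertions by a sandwich argument that reduces everything to the assumed uniform CDF convergence $\sup_{P\in\mathcal P}\sup_t|\Prob_P(V_n\le t)-\Phi(t)|\to 0$, exploiting crucially that $\Phi$ is globally Lipschitz with $\Phi'=\phi\le 1/\sqrt{2\pi}$.

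For part (i), I would first fix $\delta>0$ and use the deterministic inclusions $\{V_n+W_n\le t\}\subseteq\{V_n\le t+\delta\}\cup\{|W_n|>\delta\}$ and $\{V_n\le t-\delta\}\subseteq\{V_n+W_n\le t\}\cup\{|W_n|>\delta\}$, which yield
\begin{align*}
\Prob_P(V_n\le t-\delta)-\Prob_P(|W_n|>\delta)&\le \Prob_P(V_n+W_n\le t)\\
&\le\Prob_P(V_n\le t+\delta)+\Prob_P(|W_n|>\delta).
\end{align*}
Subtracting $\Phi(t)$ and inserting $\Phi(t\pm\delta)$, I would bound $|\Phi(t\pm\delta)-\Phi(t)|\le\delta/\sqrt{2\pi}$ by Lipschitz continuity; this is exactly the step that makes the bound uniform in $t$. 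This gives
$$\sup_t\bigl|\Prob_P(V_n+W_n\le t)-\Phi(t)\bigr|\le\sup_s\bigl|\Prob_P(V_n\le s)-\Phi(s)\bigr|+\frac{\delta}{\sqrt{2\pi}}+\Prob_P(|W_n|>\delta).$$
Taking $\sup_{P\in\mathcal P}$, then letting $n\to\infty$ (the first term vanishes by assumption, the third by the hypothesis on $W_n$), and finally $\delta\downarrow 0$ proves part (i).

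For part (ii), I would reduce to part (i) by writing $V_n/W_n=V_n+R_n$ with $R_n:=V_n(1/W_n-1)$, and show $\sup_{P\in\mathcal P}\Prob_P(|R_n|>\varepsilon)\to 0$ for every $\varepsilon>0$. On $\{|W_n-1|\le 1/2\}$ one has $W_n\ge 1/2$, so $|1/W_n-1|\le 2|W_n-1|$ and hence $|R_n|\le 2|V_n|\,|W_n-1|$. For fixed $M>0$ I would split
$$\Prob_P(|R_n|>\varepsilon)\le\Prob_P(|W_n-1|>\tfrac12)+\Prob_P(|V_n|>M)+\Prob_P\bigl(|W_n-1|>\varepsilon/(2M)\bigr).$$
The first and third terms vanish uniformly as $n\to\infty$ by the hypothesis on $W_n$, while $\limsup_n\sup_P\Prob_P(|V_n|>M)\le 2(1-\Phi(M))$ by the uniform convergence of $V_n$; sending $M\to\infty$ afterwards kills this contribution. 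Thus $\sup_P\Prob_P(|R_n|>\varepsilon)\to 0$, and applying part (i) to $V_n+R_n$ concludes.

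The main obstacle is ensuring that every estimate is uniform simultaneously over $t\in\mathbb R$ and $P\in\mathcal P$. The uniformity in $t$ rests entirely on the global Lipschitz bound for $\Phi$, so that the discretization error $\Phi(t\pm\delta)-\Phi(t)$ is controlled by $\delta$ independently of $t$; and for part (ii) the uniform tightness of $V_n$ — that $\sup_P\Prob_P(|V_n|>M)$ can be made small for large $M$, uniformly in $n$ — must be extracted from the assumed uniform CDF convergence rather than from a fixed limiting law.
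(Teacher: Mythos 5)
Your proof is correct. Note that the paper itself gives no proof of this lemma: it is stated as Lemma 20 of \cite{ShahPetersHardness} and the reader is referred there, so your argument supplies what the paper outsources to that reference. Both parts are sound: the sandwich inclusions combined with the global Lipschitz bound $|\Phi(t\pm\delta)-\Phi(t)|\le\delta/\sqrt{2\pi}$ give uniformity in $t$ and $P$ for part (i), and your reduction of part (ii) to part (i) via $V_n/W_n=V_n+R_n$ with the uniform tightness bound $\limsup_n\sup_P\Prob_P(|V_n|>M)\le 2(1-\Phi(M))$ is exactly the right way to extract tightness from uniform CDF convergence rather than from a fixed limit law; this additive reduction also sidesteps any sign issues with $W_n$, since the event $\{W_n\le 1/2\}$ (which contains $\{W_n=0\}$) has uniformly vanishing probability. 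This is essentially the same elementary route taken in the cited reference, so nothing further is needed.
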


\begin{lemma}\label{lem_ProdOrder}
Consider a sequence of random variables $w_n = \OP(1)$ and a sequence of random variables $(v_n)_{n\in \mathbb N}$ such that $v_n = \oP(1)$. Then, 
$$v_n w_n = \oP(1).$$
\end{lemma}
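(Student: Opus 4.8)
The plan is to establish the uniform analogue of the elementary fact that the product of an $\oP(1)$ sequence and an $\OP(1)$ sequence is again $\oP(1)$, via the standard truncation argument, taking care throughout that every bound is uniform over $P \in \mathcal P$ in the sense of Definition \ref{def_StochOrder}.

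First I would fix $\delta > 0$ and a tolerance $r > 0$, with the goal of showing $\limsup_{n\to\infty}\sup_{P\in\mathcal P}\Prob_P(|V_n W_n| > \delta) \le r$; since $r$ is arbitrary this yields $V_n W_n = \oP(1)$. The key step is to exploit the definition of $\OP(1)$ to pin down a single truncation level. Because $W_n = \OP(1)$ means $\lim_{K\to\infty}\sup_{P\in\mathcal P}\sup_{n\in\mathbb N}\Prob_P(|W_n|\ge K)=0$, I can choose $K > 0$ large enough that $\sup_{P\in\mathcal P}\sup_{n\in\mathbb N}\Prob_P(|W_n|\ge K) \le r$. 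The important feature is that this $K$ controls the tail of $W_n$ simultaneously for every $P$ and every $n$.

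Next I would split the event according to whether $|W_n|$ is large:
$$\Prob_P(|V_n W_n| > \delta) \le \Prob_P(|V_n W_n| > \delta,\, |W_n| < K) + \Prob_P(|W_n| \ge K).$$
On the event $\{|W_n| < K\}$, the inequality $|V_n W_n| > \delta$ forces $|V_n| > \delta/K$, so the first term is bounded by $\Prob_P(|V_n| > \delta/K)$. Taking the supremum over $P\in\mathcal P$ and using the choice of $K$ gives
$$\sup_{P\in\mathcal P}\Prob_P(|V_n W_n| > \delta) \le \sup_{P\in\mathcal P}\Prob_P\!\left(|V_n| > \delta/K\right) + r.$$
Since $\delta/K$ is a fixed positive constant and $V_n = \oP(1)$, the first term on the right tends to $0$ as $n\to\infty$. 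Hence $\limsup_{n\to\infty}\sup_{P\in\mathcal P}\Prob_P(|V_n W_n| > \delta)\le r$, and letting $r\downarrow 0$ (together with the arbitrariness of $\delta$) completes the argument.

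This lemma is essentially routine, so there is no substantive obstacle; the only point requiring care—and indeed the reason the $\OP$ and $\oP$ notation in Definition \ref{def_StochOrder} is phrased with suprema over $\mathcal P$—is that the truncation level $K$ must be selected uniformly in both $P$ and $n$. This uniformity is supplied verbatim by the definition of $\OP(1)$, so the proof amounts to the bookkeeping above.
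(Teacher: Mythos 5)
Your proof is correct and follows essentially the same argument as the paper: use the uniformity in $n$ and $P$ of the $\OP(1)$ bound to fix a single truncation level $K$, split on the event $\{|W_n| \ge K\}$, and apply the $\oP(1)$ property of $V_n$ at the threshold $\delta/K$. The only difference is cosmetic bookkeeping (your $\limsup$-and-tolerance $r$ versus the paper's $\delta/2$-plus-$\delta/2$ splitting).
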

\begin{proof}
    Let $\epsilon, \delta>0$. Since $w_n = \OP(1)$, we can choose $M$ such that for all $n\in \mathbb N$ and $P\in \mathcal P$, $\Prob_P(|w_n|> M)<\delta/2$. Then, there exists $N\in \mathbb N$ such that for all $n\geq N$, $\sup_{P\in \mathcal P}\Prob_P(|v_n|>\epsilon/M)< \delta/2$. Hence, for all $n\geq N$ and all $P\in \mathcal P$, $\Prob_P(|v_n w_n|>\epsilon)<\delta$. Since $\epsilon,\delta>0$ were arbitrary, this concludes the proof.
\end{proof}

The following fact is used at several points in the proofs.
\begin{lemma}\label{lem_SumDiffSquares}
Let $a_1,\ldots, a_n$ and $b_1,\ldots, b_n\in \mathbb R$. Then,
    $$\frac{1}{n}\sum_{i=1}^n (a_i^2-b_i^2)\leq 2\sqrt{\frac{1}{n}\sum_{i=1}^nb_i^2}\sqrt{\frac{1}{n}\sum_{i=1}^n(a_i-b_i)^2} + \frac{1}{n}\sum_{i=1}^n (a_i-b_i)^2.$$
\end{lemma}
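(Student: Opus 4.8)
The plan is to reduce the claim to the elementary pointwise identity $a_i^2 - b_i^2 = (a_i - b_i)^2 + 2 b_i (a_i - b_i)$, which holds simply because $a_i + b_i = (a_i - b_i) + 2 b_i$. Summing this identity over $i$ and dividing by $n$ yields
$$\frac{1}{n}\sum_{i=1}^n (a_i^2 - b_i^2) = \frac{1}{n}\sum_{i=1}^n (a_i - b_i)^2 + \frac{2}{n}\sum_{i=1}^n b_i(a_i - b_i),$$
so the entire content of the lemma lies in bounding the cross term $\frac{2}{n}\sum_{i=1}^n b_i(a_i - b_i)$ by the leading summand on the right-hand side of the claimed inequality.

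For that cross term I would apply the Cauchy--Schwarz inequality in the form $\sum_{i=1}^n b_i(a_i - b_i) \le \left(\sum_{i=1}^n b_i^2\right)^{1/2}\left(\sum_{i=1}^n (a_i - b_i)^2\right)^{1/2}$, and then split the prefactor $2/n = 2\, n^{-1/2} n^{-1/2}$ so as to absorb one factor $n^{-1/2}$ into each square root. This converts the bound into $2\bigl(\frac{1}{n}\sum_i b_i^2\bigr)^{1/2}\bigl(\frac{1}{n}\sum_i (a_i - b_i)^2\bigr)^{1/2}$, which is exactly the first term in the statement. Combining this with the displayed identity completes the argument.

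There is no substantive obstacle here; the only point requiring the slightest care is the bookkeeping of the normalizing factor $1/n$, that is, ensuring that after Cauchy--Schwarz one recovers the empirical second moments $\frac{1}{n}\sum_i b_i^2$ and $\frac{1}{n}\sum_i (a_i - b_i)^2$ under the square roots rather than the unnormalized sums. It is also worth noting that the inequality is one-sided by design: the cross term may be negative, but since it is bounded above by its absolute value and Cauchy--Schwarz controls the latter, the stated upper bound holds irrespective of the signs of the $a_i$ and $b_i$.
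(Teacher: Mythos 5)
Your proposal is correct and matches the paper's own proof exactly: the paper likewise writes $a_i^2 - b_i^2 = (a_i-b_i)^2 + 2(a_i-b_i)b_i$ and applies the Cauchy--Schwarz inequality to the cross term. Your write-up simply makes explicit the bookkeeping of the $1/n$ normalization that the paper leaves to the reader.
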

\begin{proof}
    This follows by writing $a_i^2 - b_i^2  = (a_i-b_i)^2 + 2(a_i-b_i) b_i$ and using the Cauchy--Schwarz inequality.
\end{proof}

\section{Additional Simulations}\label{sec_SuppSim}
In this section, we provide additional simulations. We first describe the detailed data-generating process and the implementation details. All the code is available on GitHub (\url{https://github.com/cyrillsch/RPIV_Application}).
\subsection{Data-Generating Process}\label{sec_DataGP}
Let $n$ be the sample size, let ${\bm z}\in \mathbb R^{n_{IV}}$ denote the instruments, and let ${\bm c}\in \mathbb R^{n_C}$ denote additional exogenous controls (possibly $n_C = 0$). We generate
$$
{\bm z}\sim \mathcal N({\bm 0},{\bm I}_{n_{IV}}),\qquad 
{\bm c}\sim \mathcal N({\bm 0},{\bm I}_{n_C}),
$$
independently, and then introduce correlation between instruments and controls by replacing the first $\min(n_{IV},n_C)$ components of $\bm z$ via
$$
z_k \leftarrow \frac{z_k + c_k}{\sqrt 2}, \, k=1,\ldots,\min(n_{IV},n_C)
$$
(where $z_k$ and $c_k$ are the $k$th components of $\bm z$ and $\bm c$, respectively).
Endogeneity is induced by a latent confounder $h\sim \mathcal N(0,1)$. Define
$$
\delta = -h + 0.3\,u_\delta,\quad \epsilon = h + 0.3\,u_\epsilon,
$$
where $u_\delta,u_\epsilon\sim\mathcal N(0,1)$ are independent of everything else. The endogenous regressor is generated as
$$
x = \pi \tanh\!\left(\frac{1}{\sqrt{n_{IV}}}\sum_{k=1}^{n_{IV}} z_k\right)
    + 0.3\,c_1\,1_{\{n_C>0\}}
    + \delta,
$$
where $\pi\ge 0$ controls instrument strength. For heteroskedastic errors, we scale
$$
\epsilon \leftarrow \epsilon\cdot |z_1|
$$
(this preserves $\E[\epsilon|\bm z]=0$), for homoskedastic errors, we leave $\epsilon$ unchanged.
Define the linear component
$$
\ell = -x + 0.5\sum_{j=1}^{n_C} c_j,
$$
(with the convention $\sum_{j=1}^{0}(\cdot)=0$), and the baseline outcome
$$
y = 2 + \ell + \epsilon.
$$
Under alternatives, we add a violation term $s_{viol}v$ with strength $s_{viol}\in \mathbb R$ and one of four violations
$$
v\in\left\{z_1^2,\; \sign(z_1),\; \ell^2,\; \sign(\ell)\right\},
$$
which we will refer to as \textit{``z squared''}, \textit{``sign(z)''}, \textit{``misspec. squared''}, and \textit{``misspec. sign''}, respectively. Thus,
$$
y \leftarrow y + s_{viol}\,v.
$$
Under $H_0$ we set $s_{viol}=0$. Note that the ``true'' coefficient on $x$ in the baseline linear component is $-1$.

\subsection{Implementation Details}\label{sec_ImplementationDetails}
We compute p-values for the following procedures:
\begin{description}
    \item[RP hom./RP het.:] The residual prediction test from Sections 2 and 3 in the main text with homoskedastic/heteroskedastic variance estimators (11) and (9) in the main text. Controls and intercept are included by adding ${\bm c}$ to both ${\bm x}$ and ${\bm z}$.
    \item[weak RP hom./weak RP het.:] The weak-identification robust extension from Section 4 in the main text with homoskedastic/heteroskedastic variance estimators given in \eqref{Def_WeakSigmaHatHomo} and in (28) in the main text. The p-value for overall specification is computed as $\max_{\beta_0} p_{val}(\beta_0)$ over a grid of candidate $\beta_0$ values (see below).
    \item[overid. J:] The classical overidentifying restrictions test if the number of instruments is larger than 1. If $n_{IV}=1$, we add $z^2$ as additional instrument, which corresponds to the approach by \citet{DieterleASimpleDiagnostic}.
    \item[smooth asymp./smooth boot.:] The \textit{smooth test} of \citet{DelgadoConsistentTestsOfConditionalMomentRestrictions} applied to the linear IV model with asymptotic and bootstrap p-values. The test uses a similar test statistic as our residual prediction test, but with kernel smoothing instead of machine learning, and does not use sample splitting. Instead, inference is based on a U-statistic. We treat the controls $\bm c$ by adding them to both $\bm x$ and $\bm z$.
    \item[ICM:] The integrated conditional moment (ICM) test of \citet{AntoineIdentificationRobustNonparametricInference}. The test is robust to weak identification and is also based on setting the parameter $\beta$ to some candidate $\beta_0$, leading to a p-value depending on $ \beta_0$. We compute the p-value for overall specification as $\max_{\beta_0} p_{val}(\beta_0)$ over a grid of candidate $\beta_0 $ values (see below). The test uses a test statistic based on the integrated conditional moment approach \citep{BierensConsistentModelSpecification}. We use a triangle density for the test statistic as in the simulation section of \citet{AntoineIdentificationRobustNonparametricInference}. The p-value is obtained by simulation using a conditional variance estimate based on kernel smoothing. Following the suggestion of the paper, we project out the controls $\bm c$ linearly before applying the test.
\end{description}
We are not aware of any R packages implementing the \textit{smooth} test and the \textit{ICM} test. Since there are no clear guidelines available regarding the kernel smoothing, we use a Gaussian kernel and the normal reference rule / Silverman's rule of thumb \citep{SilvermanDensityEstimation, WassermannAllOfNonparametricStatistics} to choose the bandwidth.

The residual prediction based methods are implemented in the R package \texttt{RPIV} with a random forest \cite{BreimanRandomForest} using the R package \texttt{ranger} \cite{RangerPackage}. The random forest is tuned via out-of-bag error and for \textit{weak RP hom./het.}, we tune at the two-stage least squares solution and only refit (but not retune) the random forest at the other candidate ${\beta_0}$ (see Remark 3 in the main text). We use $n_A=\min\!\left(\frac n2,\exp(1)\cdot \frac{n}{\log n}\right)$ observations as the auxiliary sample and calculate the test statistic on the $n_0=n-n_A$ observations.

For \textit{weak RP hom./het.} and ICM, we use an equally spaced grid of size $200$ between $-10$ and $10$ to calculate p-values (note that the endogenous regressor is univariate, so $\beta_0\in\mathbb R$ is scalar after partialling out).

\subsection{Simulations under $H_0$}\label{sec_SuppSimH0}
In this subsection, we assess size control under the null hypothesis, i.e., we set $s_{viol}=0$ so that
$$
\E[\epsilon|{\bm z}]=0
$$
holds by construction. We report rejection frequencies at level $\alpha=0.05$ over 1000 Monte Carlo repetitions and consider the following simulations:
\begin{description}
    \item[Varying sample size $n$:] We vary the sample size $n\in\{50,100,150,200,250,300,400,500\},$
consider $n_C\in\{0,2\}$ controls and $n_{IV}\in\{1,2\}$ instruments, and fix $\pi=1$. We run this under both homoskedastic and heteroskedastic errors.
    \item[Varying IV strength $\pi$:] We fix $n=300$, vary $\pi\in\{0,0.1,0.2,0.4,0.6,0.8,1\},$
and again consider $n_C\in\{0,2\}$, $n_{IV}\in\{1,2\}$, and both error designs.
    \item[Varying number of instruments $n_{IV}$:] We fix $n=300$ and $\pi=1$ and vary the number of instruments $n_{IV}\in\{5,10,15,20,25\},$
with $n_C\in\{0,2\}$ and both error designs.
\end{description}
\begin{figure}[t]
\centering
\includegraphics[width=0.95\textwidth]{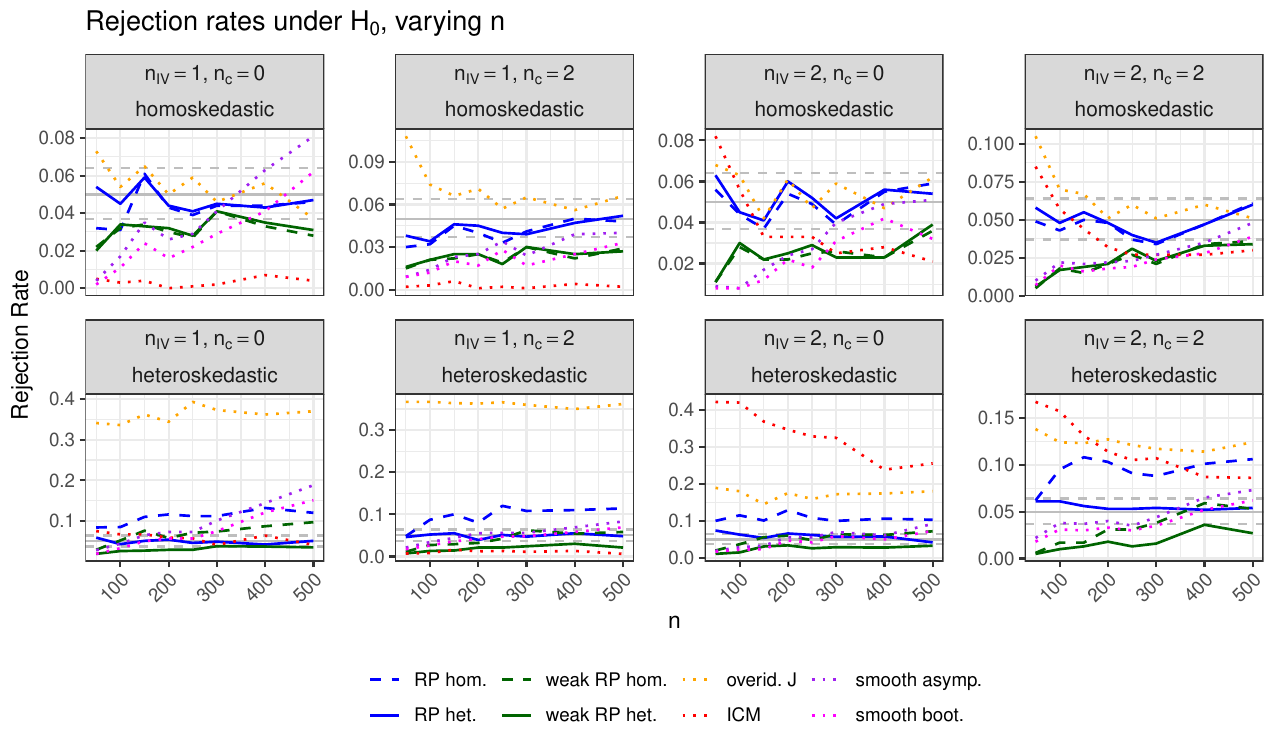}
\caption{Simulation results under $H_0$ with varying sample size $n$. The gray solid line marks $\alpha = 0.05$, the gray dashed lines show pointwise 95\% Monte Carlo bounds for rejection rates based on 1000 replications. Dashed curves coincide with the solid curves of the same color when not visible.}
\label{fig_SuppH0_vary_n}
\end{figure}
\begin{figure}[t]
\centering
\includegraphics[width=0.95\textwidth]{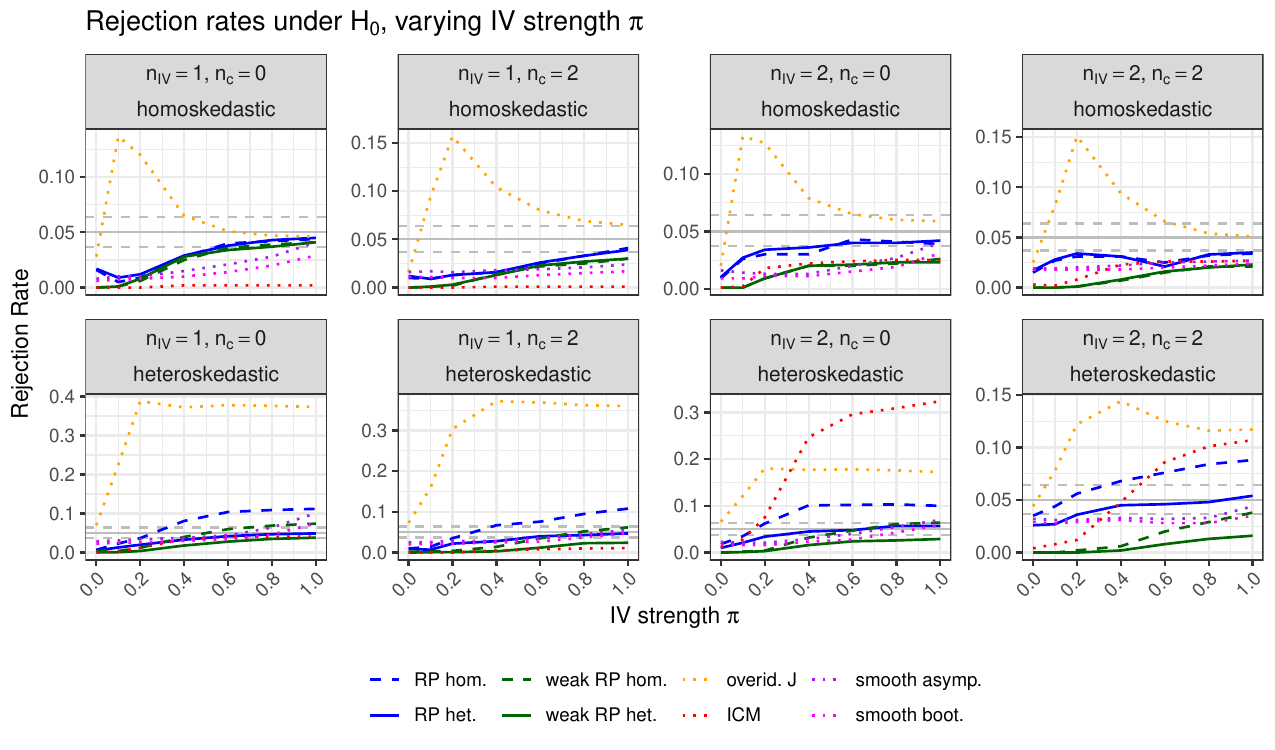}
\caption{Simulation results under $H_0$ with varying instrument strength $\pi$. The gray solid line marks $\alpha = 0.05$, the gray dashed lines show pointwise 95\% Monte Carlo bounds for rejection rates based on 1000 replications. Dashed curves coincide with the solid curves of the same color when not visible.}
\label{fig_SuppH0_vary_pi}
\end{figure}
\begin{figure}[t]
\centering
\includegraphics[width=0.9\textwidth]{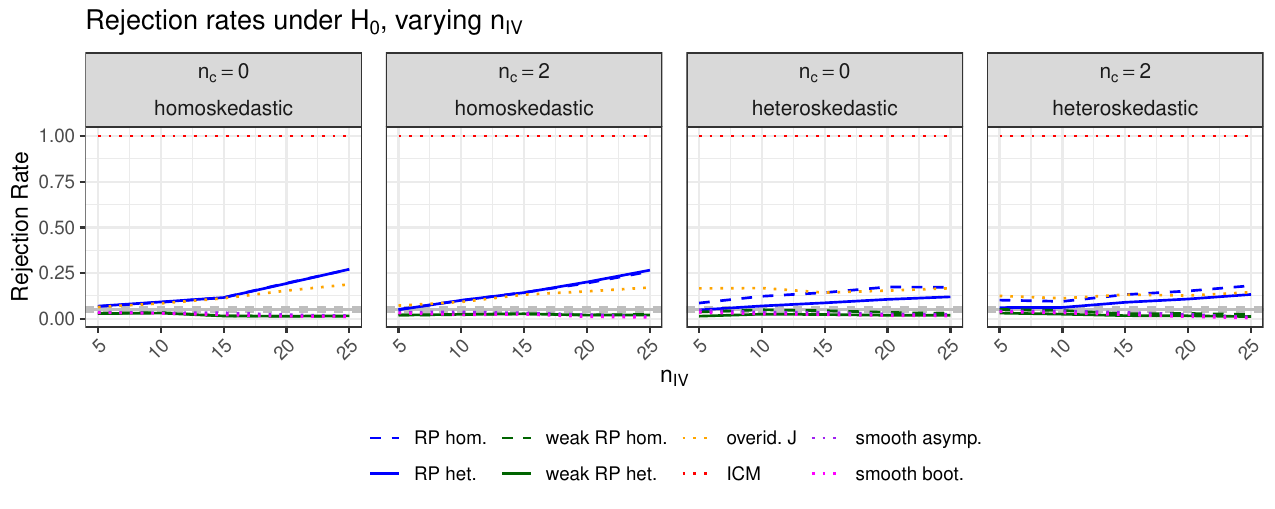}
\caption{Simulation results under $H_0$ with varying number of instruments $n_{IV}$. The gray solid line marks $\alpha = 0.05$, the gray dashed lines show pointwise 95\% Monte Carlo bounds for rejection rates based on 1000 replications. Dashed curves coincide with the solid curves of the same color when not visible.}
\label{fig_SuppH0_vary_niv}
\end{figure}

Figure \ref{fig_SuppH0_vary_n} shows rejection frequencies as a function of the sample size $n$. Across all configurations, the residual prediction based procedures control size well. Rejection rates are close to the nominal level $\alpha=0.05$ for \textit{RP hom./het.} and below the nominal level for \textit{weak RP hom./het.} and remain within the Monte Carlo confidence bands for moderate and large $n$. Regarding heteroskedasticity, the heteroskedasticity-robust variants consistently control size under both homoskedastic and heteroskedastic errors, whereas procedures relying on homoskedastic variance estimators may suffer from substantial overrejection in the heteroskedastic design.

Figure \ref{fig_SuppH0_vary_pi} reports rejection frequencies when varying the instrument strength $\pi$, including the extreme case $\pi=0$. The \textit{weak RP hom./het.} maintain size uniformly over $\pi$, but also \textit{RP hom./het.} does not overreject, even for small $\pi$, but appears to be overly conservative. Again, the \textit{RP hom.} variants may suffer from overrejection in the heteroskedastic design.

Figure \ref{fig_SuppH0_vary_niv} shows rejection frequencies as a function of the number of instruments $n_{IV}$. Here, we see that \textit{RP hom./het} overreject for large $n_{IV}$, whereas \textit{weak RP hom./het.} remain conservative even for large numbers of instruments.

Regarding the other methods, we see that \textit{overid. J} is neither robust to heteroskedastic errors, nor to weak instruments or many instruments. Moreover, we see that \textit{ICM} may suffer from overrejection in the heteroskedastic setting for some data-generating mechanisms and does not control size at all for $n_{IV}\geq 5$ (rejection rate 1 in all panels of Figure \ref{fig_SuppH0_vary_niv}). Also \textit{smooth asymp./boot.} may suffer from overrejection in some heteroskedastic settings, but the effect is less severe than for \textit{ICM}.

\subsection{Simulations under $H_A$}\label{sec_SuppSimHA}
In this subsection, we assess power under violations of well-specification.
We generate data as described in Section \ref{sec_DataGP} but set $s_{viol}\neq 0$, so that the model becomes misspecified due to the additional violation term $s_{viol}v$ with $v\in\{z_1^2,\sign(z_1),\ell^2,\sign(\ell)\}.$
We report rejection frequencies at level $\alpha=0.05$ over 1000 Monte Carlo repetitions and consider the following simulations:
\begin{description}
    \item[Varying violation strength $s_{viol}$:] 
    We fix $n=300$ and $\pi=1$, consider $n_C\in\{0,2\}$, $n_{IV}\in\{1,2\}$, and both homoskedastic and heteroskedastic errors. For each violation type, we vary $s_{viol}$ over a symmetric range. Concretely, we use $s_{viol}\in[-1,1]$ for \textit{z squared}, $s_{viol}\in[-3,3]$ for \textit{sign(z)}, $s_{viol}\in[-4,4]$ for \textit{misspec. squared}, and $s_{viol}\in[-8,8]$ for \textit{misspec. sign}.
    \item[Varying IV strength $\pi$:] 
    We fix $n=300$, vary $\pi\in\{0,0.1,0.2,0.4,0.6,0.8,1\}$, and again consider $n_C\in\{0,2\}$, $n_{IV}\in\{1,2\}$, and both error designs. We use fixed violation levels $s_{viol}\in\{1,3,4,8\}$ for \textit{z squared}, \textit{sign(z)}, \textit{misspec. squared}, and \textit{misspec. sign}, respectively.
    
    \item[Varying number of instruments $n_{IV}$] 
    We fix $n=300$, $\pi=1$, and the same moderate violation levels as above, and vary $n_{IV}\in\{5,10,15,20,25\}$, with $n_C\in\{0,2\}$ and both error designs.
\end{description}

Figures \ref{fig_SuppHA_gamma_Z2}--\ref{fig_SuppHA_gamma_msign} report rejection frequencies as a function of $s_{viol}$ for the four violation types. Figures \ref{fig_SuppHA_pi_Z2}--\ref{fig_SuppHA_pi_msign} report rejection frequencies as a function of $\pi$ for fixed (moderate) violation levels. Figures \ref{fig_SuppHA_niv_Z2}--\ref{fig_SuppHA_niv_msign} report rejection frequencies as a function of $n_{IV}$ for fixed (moderate) violation levels.
\begin{figure}[tbp]
\centering
\includegraphics[width=0.95\textwidth]{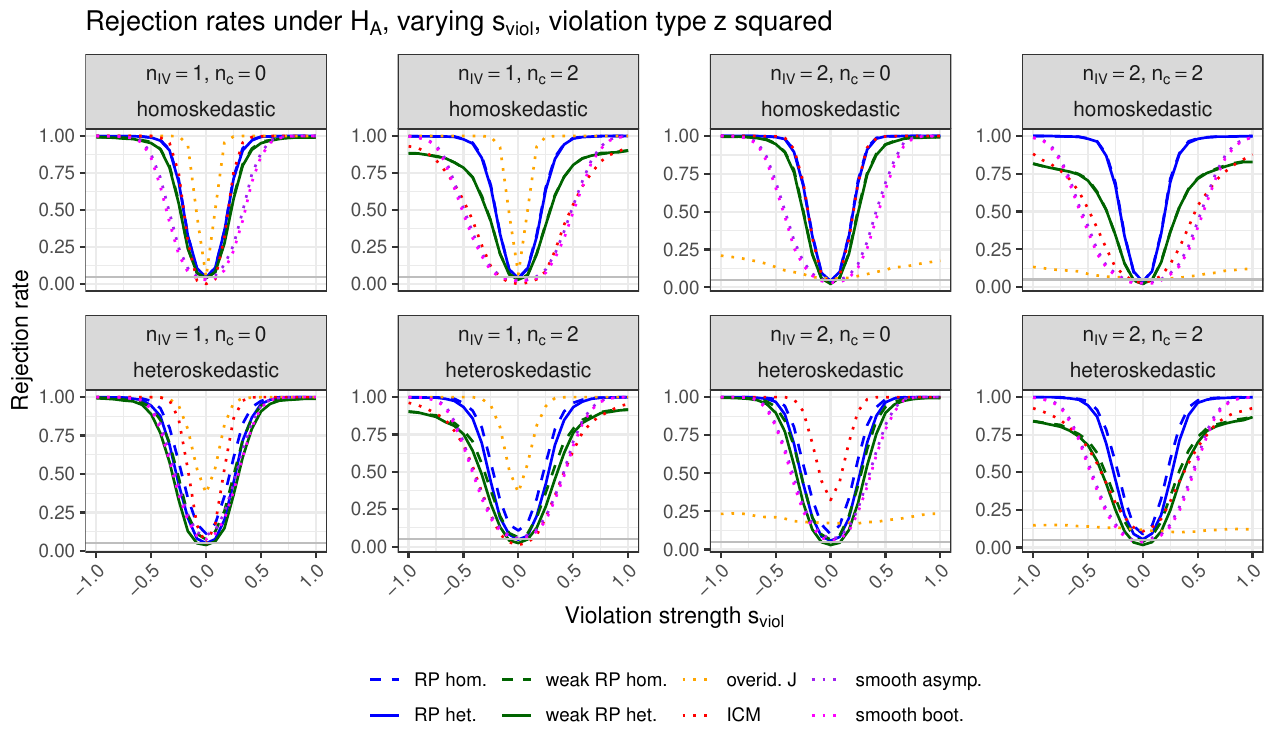}
\caption{Simulation results under $H_A$ with varying violation strength $s_{viol}$ for violation type \textit{z squared} ($v=z_1^2$). The gray solid line marks $\alpha = 0.05$. Panels correspond to combinations of $n_{IV}\in\{1,2\}$, $n_C\in\{0,2\}$, and homoskedastic vs.\ heteroskedastic errors. Dashed curves coincide with the solid curves of the same color when not visible.}
\label{fig_SuppHA_gamma_Z2}
\end{figure}
\begin{figure}[tbp]
\centering
\includegraphics[width=0.95\textwidth]{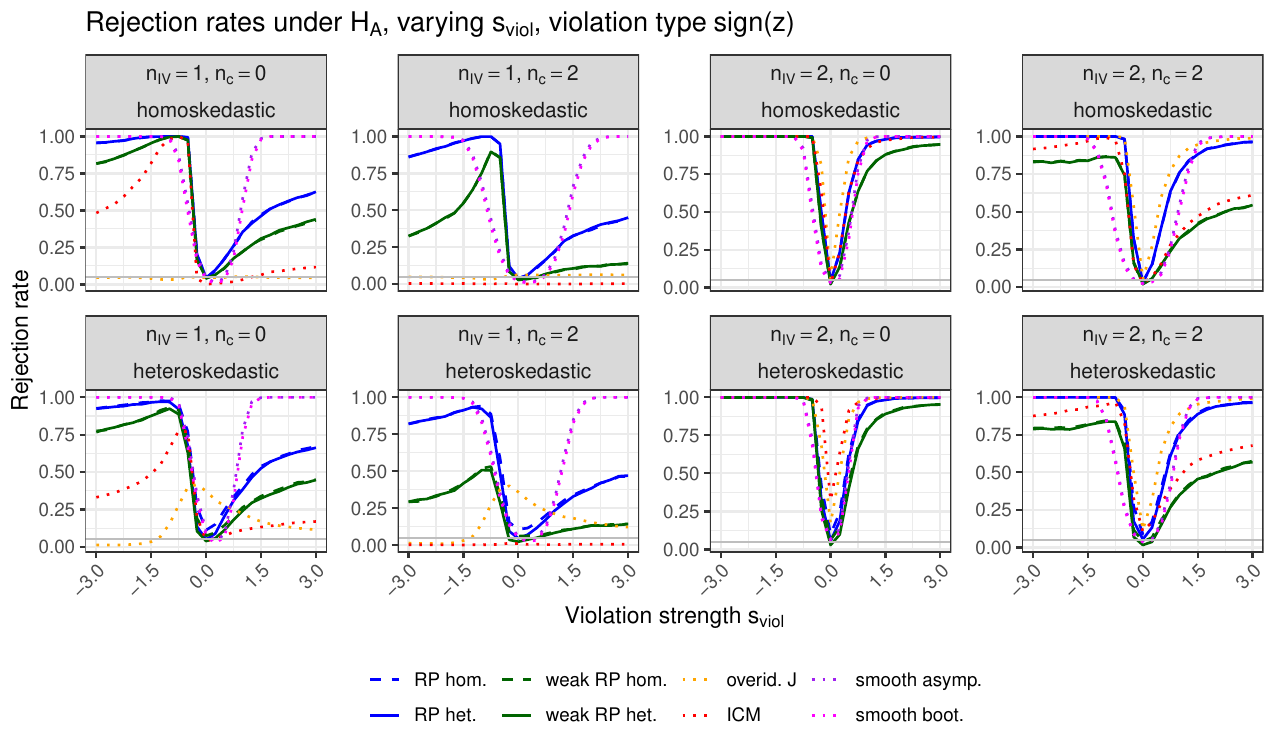}
\caption{Same as Figure \ref{fig_SuppHA_gamma_Z2} but with violation type \textit{sign(z)} ($v=\sign(z_1)$).}
\label{fig_SuppHA_gamma_sZ}
\end{figure}
\begin{figure}[tbp]
\centering
\includegraphics[width=0.95\textwidth]{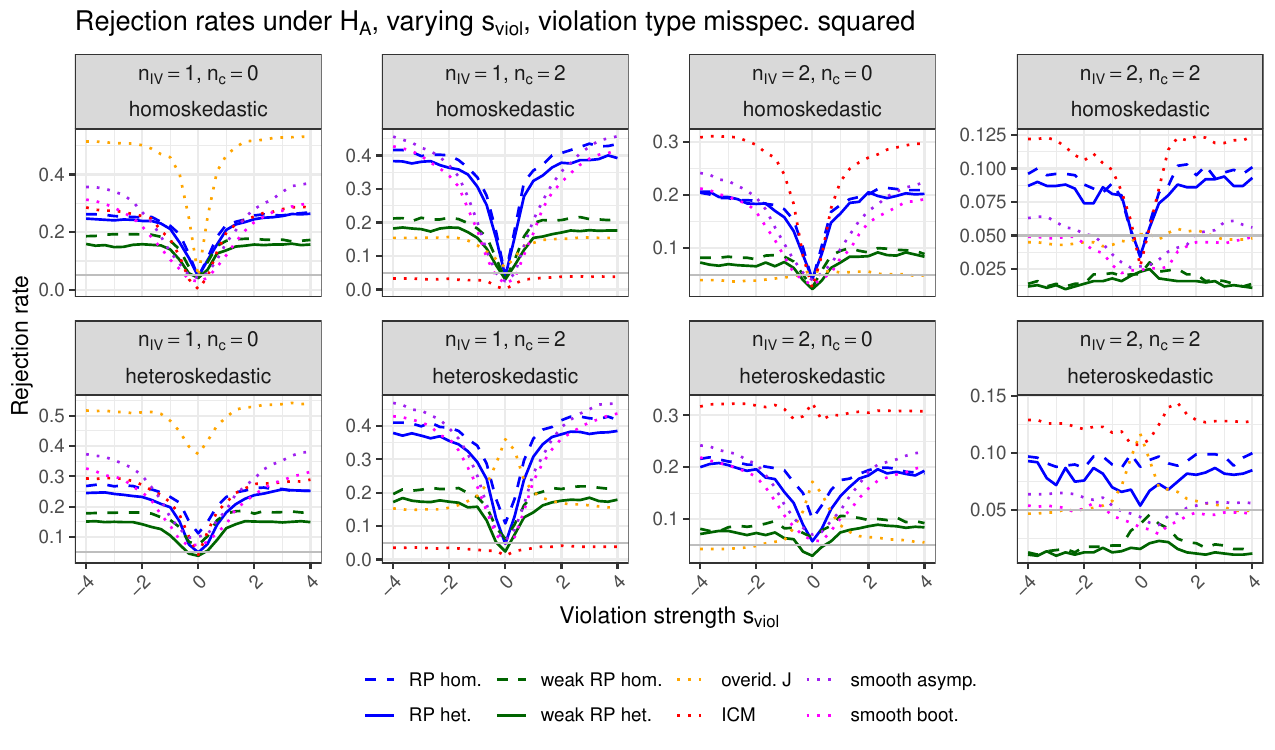}
\caption{Same as Figure \ref{fig_SuppHA_gamma_Z2} but with violation type \textit{misspec. squared} ($v=\ell^2$).}
\label{fig_SuppHA_gamma_msq}
\end{figure}
\begin{figure}[tbp]
\centering
\includegraphics[width=0.95\textwidth]{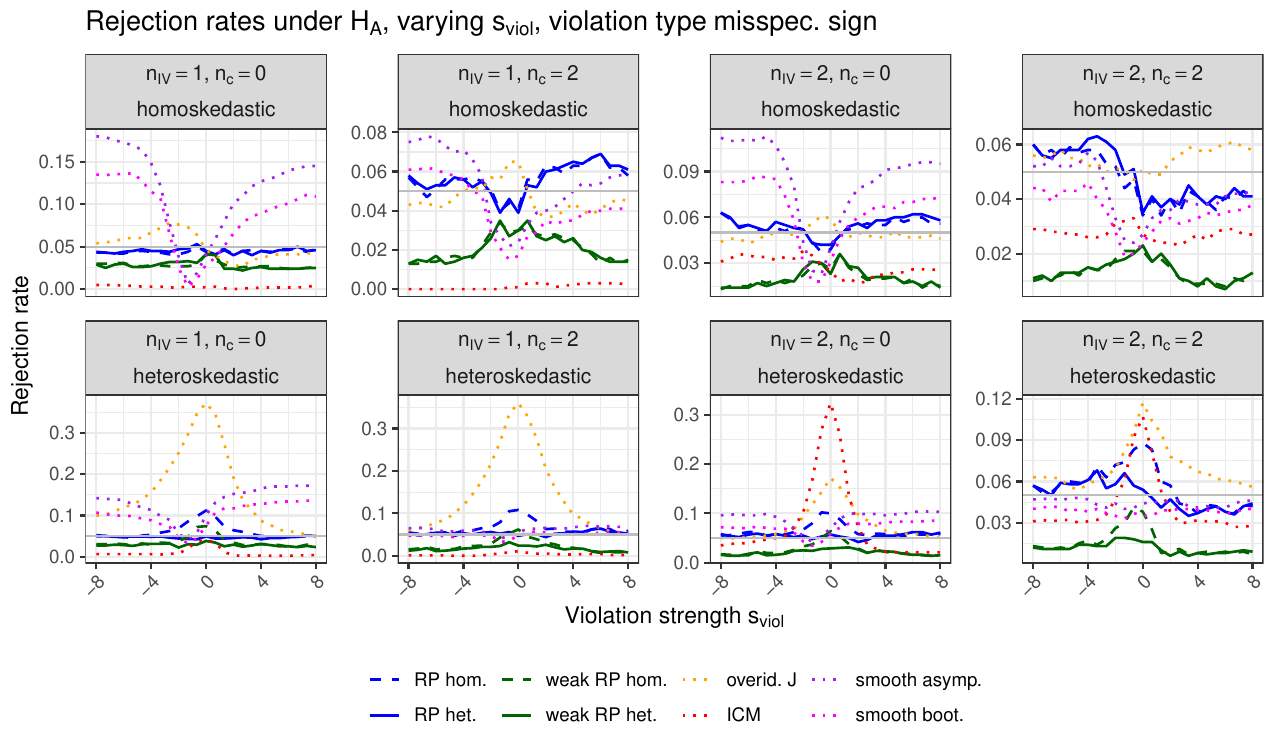}
\caption{Same as Figure \ref{fig_SuppHA_gamma_Z2} but with violation type \textit{misspec. sign} ($v=\sign(\ell)$).}
\label{fig_SuppHA_gamma_msign}
\end{figure}

\begin{figure}[tbp]
\centering
\includegraphics[width=0.95\textwidth]{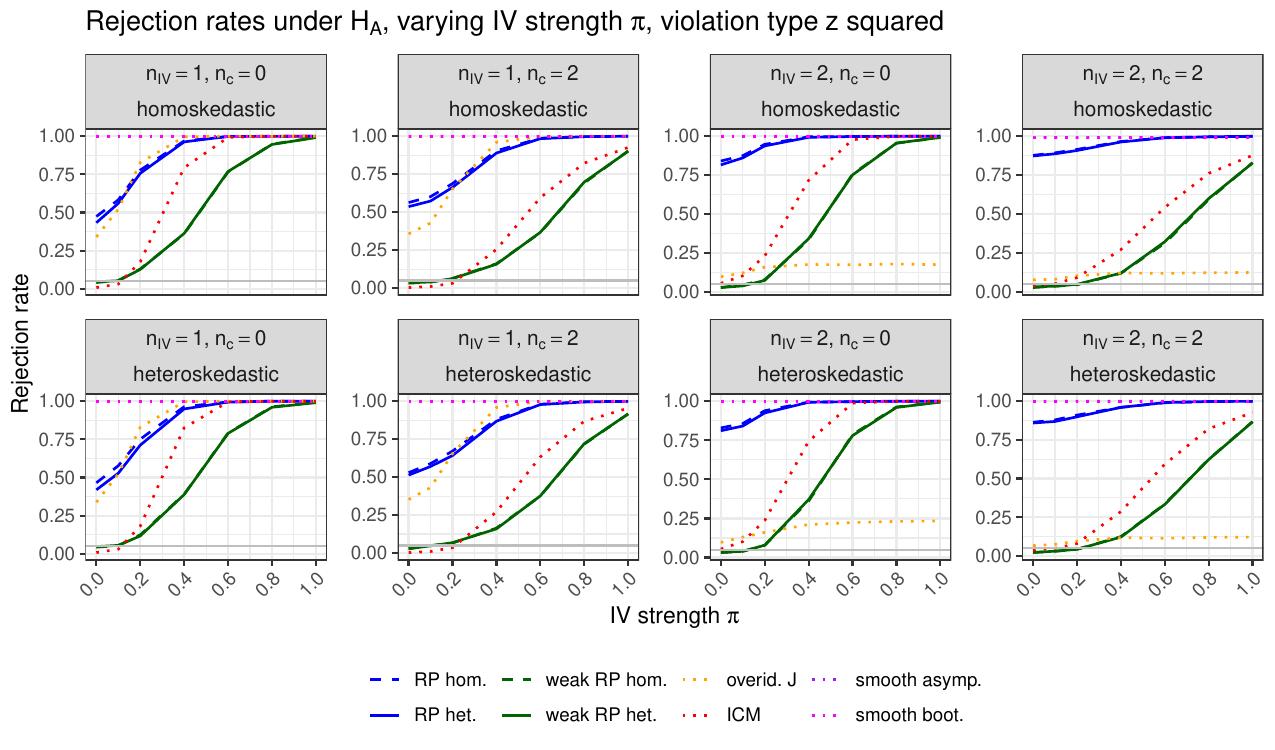}
\caption{Simulation results under $H_A$ with varying IV strength $\pi$ for violation type \textit{z squared} ($v=z_1^2$). The gray solid line marks $\alpha = 0.05$. Panels correspond to combinations of $n_{IV}\in\{1,2\}$, $n_C\in\{0,2\}$, and homoskedastic vs.\ heteroskedastic errors. Dashed curves coincide with the solid curves of the same color when not visible.}
\label{fig_SuppHA_pi_Z2}
\end{figure}
\begin{figure}[tbp]
\centering
\includegraphics[width=0.95\textwidth]{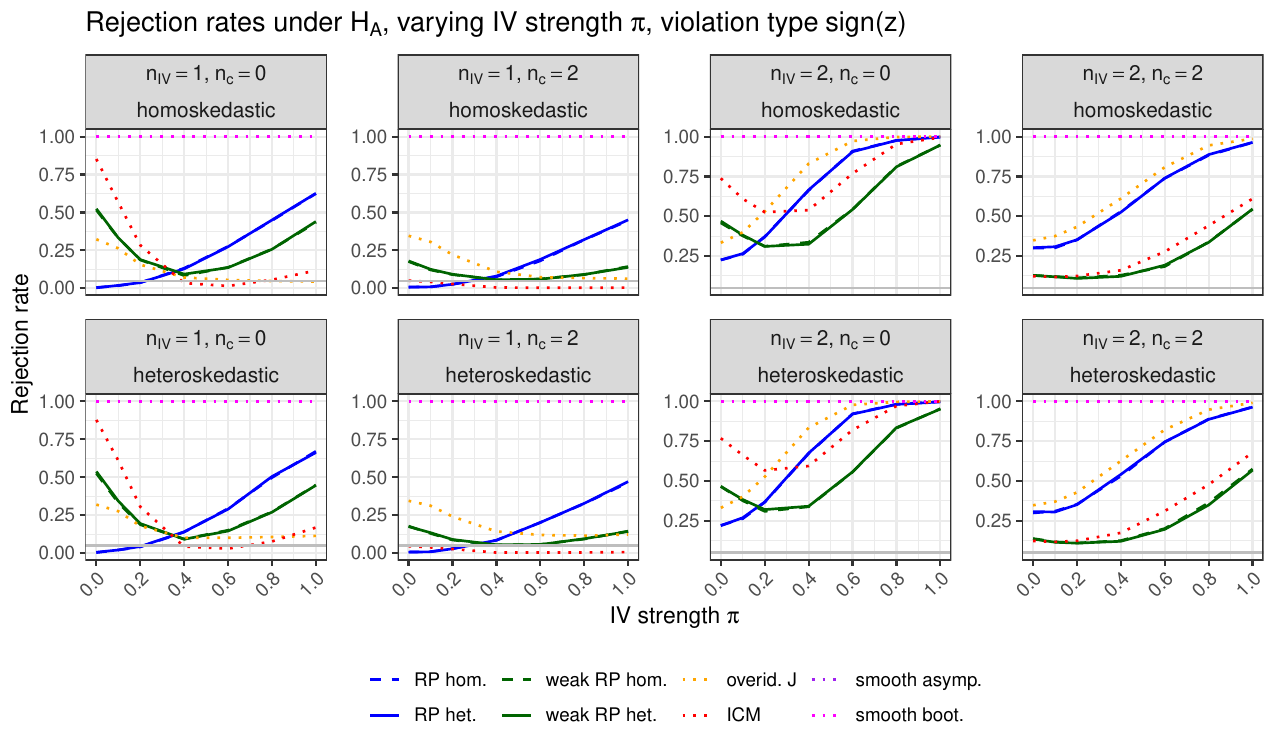}
\caption{Same as Figure \ref{fig_SuppHA_pi_Z2} but with violation type \textit{sign(z)} ($v=\sign(z_1)$).}
\label{fig_SuppHA_pi_sZ}
\end{figure}
\begin{figure}[tbp]
\centering
\includegraphics[width=0.95\textwidth]{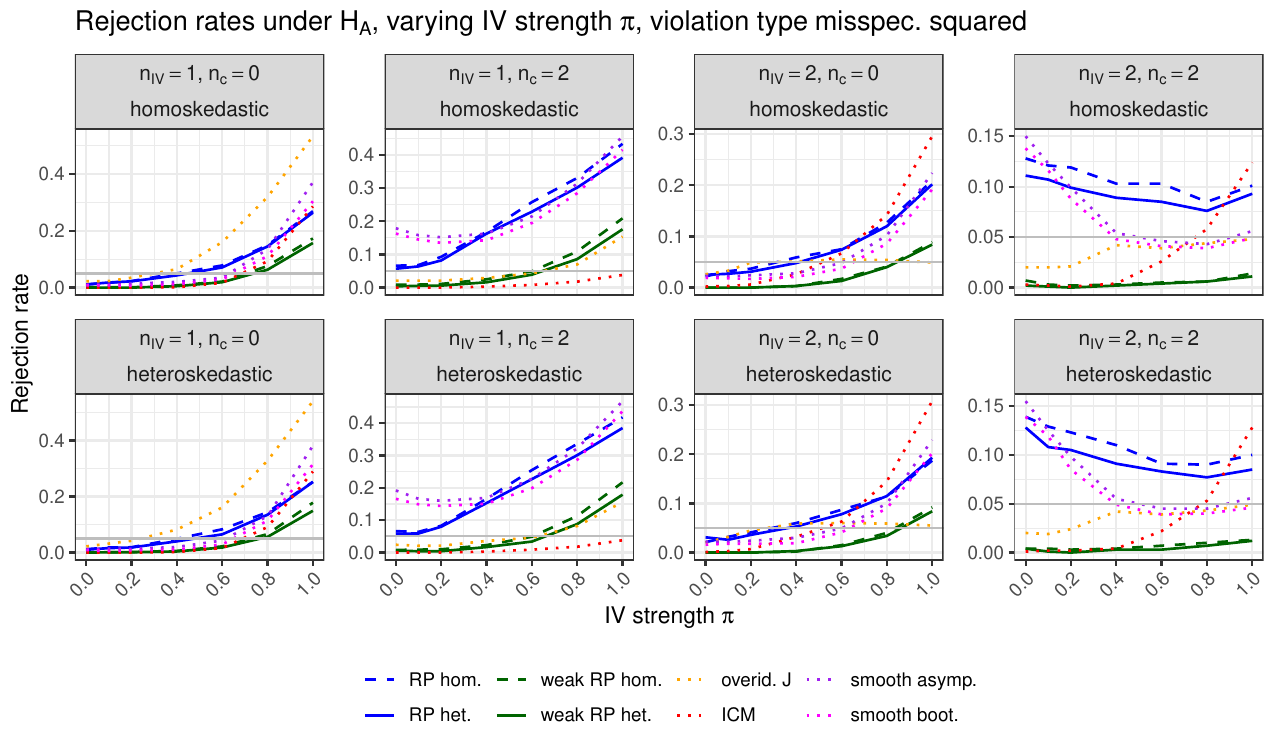}
\caption{Same as Figure \ref{fig_SuppHA_pi_Z2} but with violation type \textit{misspec. squared} ($v=\ell^2$).} 
\label{fig_SuppHA_pi_msq}
\end{figure}
\begin{figure}[tbp]
\centering
\includegraphics[width=0.95\textwidth]{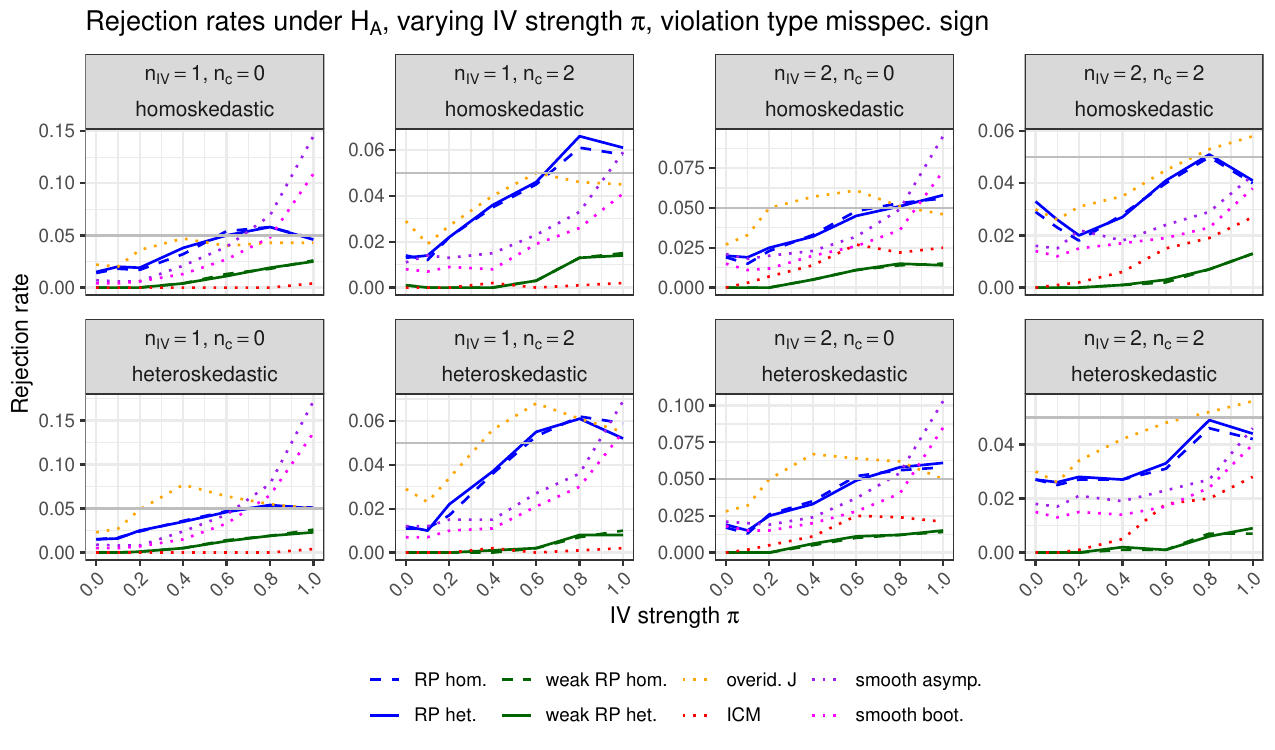}
\caption{Same as Figure \ref{fig_SuppHA_pi_Z2} but with violation type \textit{misspec. sign} ($v=\sign(\ell)$).}
\label{fig_SuppHA_pi_msign}
\end{figure}

\begin{figure}[tbp]
\centering
\includegraphics[width=0.9\textwidth]{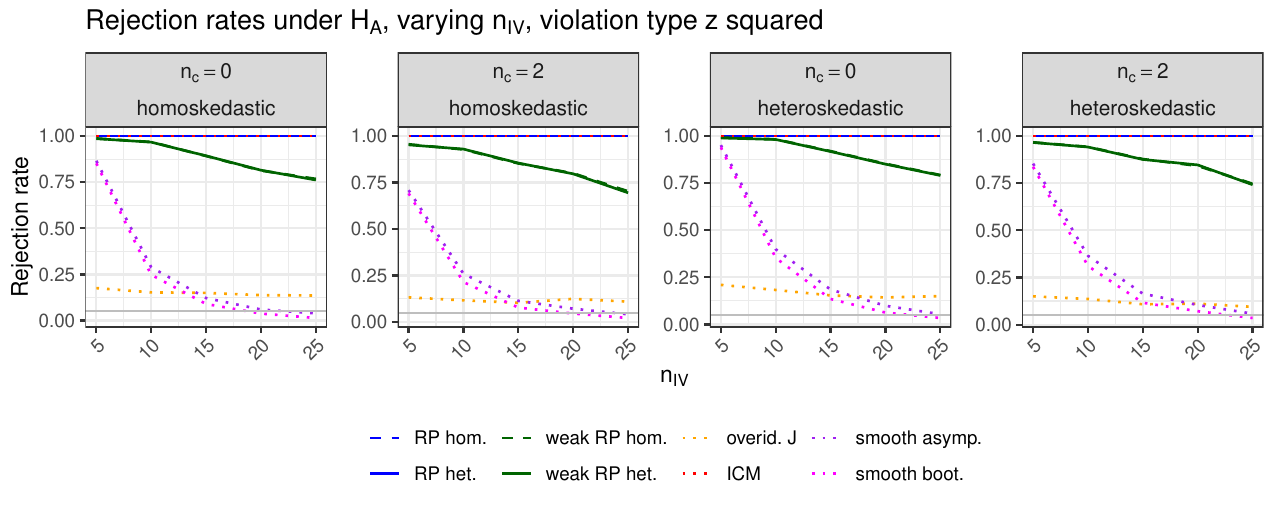}
\caption{Simulation results under $H_A$ with varying number of instruments $n_{IV}$ for violation type \textit{z squared} ($v=z_1^2$). The gray solid line marks $\alpha = 0.05$. Panels correspond to combinations of $n_C\in\{0,2\}$ and homoskedastic vs.\ heteroskedastic errors. Dashed curves coincide with the solid curves of the same color when not visible.}
\label{fig_SuppHA_niv_Z2}
\end{figure}
\begin{figure}[tbp]
\centering
\includegraphics[width=0.9\textwidth]{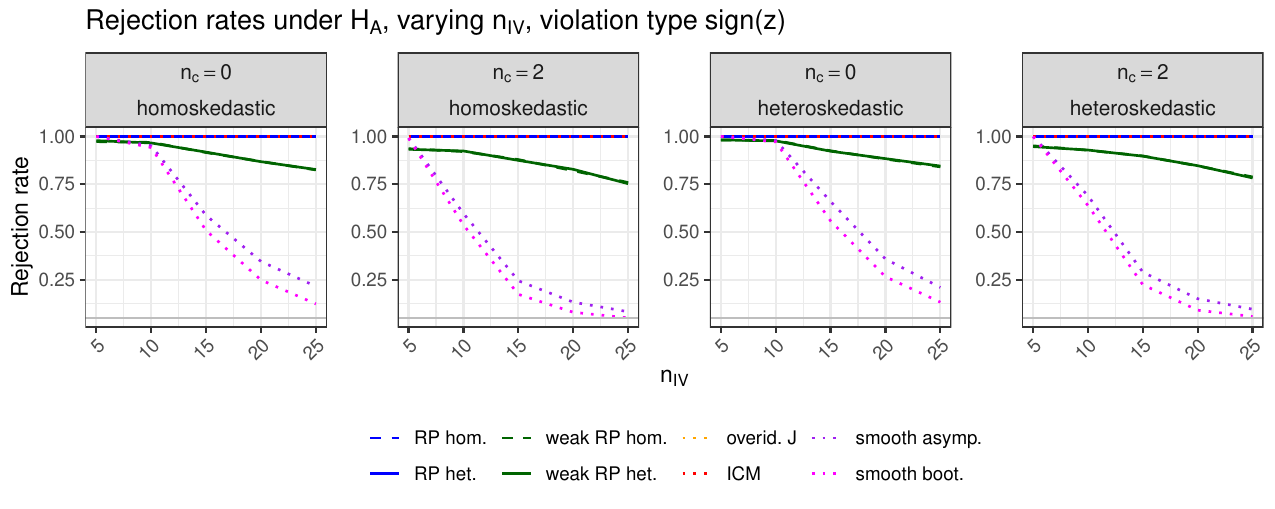}
\caption{Same as Figure \ref{fig_SuppHA_niv_Z2} but with violation type \textit{sign(z)} ($v=\sign(z_1)$).}
\label{fig_SuppHA_niv_sZ}
\end{figure}
\begin{figure}[tbp]
\centering
\includegraphics[width=0.9\textwidth]{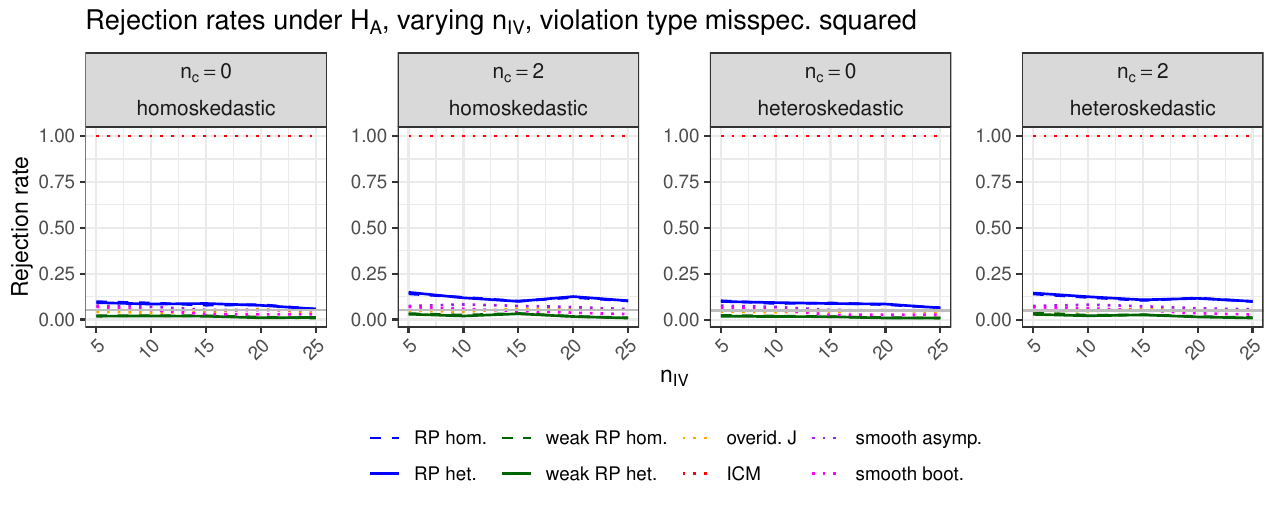}
\caption{Same as Figure \ref{fig_SuppHA_niv_Z2} but with violation type \textit{misspec. squared} ($v=\ell^2$).}
\label{fig_SuppHA_niv_msq}
\end{figure}
\begin{figure}[tbp]
\centering
\includegraphics[width=0.9\textwidth]{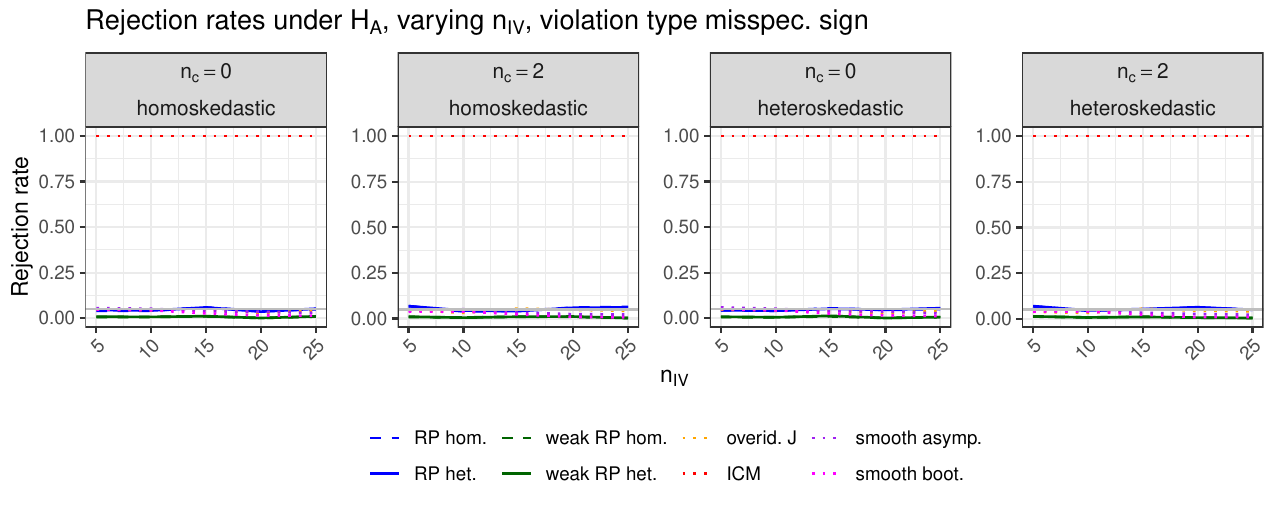}
\caption{Same as Figure \ref{fig_SuppHA_niv_Z2} but with violation type \textit{misspec. sign} ($v=\sign(\ell)$).}
\label{fig_SuppHA_niv_msign}
\end{figure}

In general, there is no clear ranking of the various methods in terms of power, but the residual prediction based methods are competitive against the other methods. Moreover, we see that in the settings considered, \textit{RP hom./het.} mostly has larger power than \textit{weak RP hom./het}, but for violation type \textit{sign(z)} and small IV strength $\pi$, \textit{weak RP hom./het.} actually has larger power for most settings (see Figure \ref{fig_SuppHA_pi_sZ}).

\subsection{Rejection Rate of $H_0(\beta_0)$ under $H_0$ as a function of $\beta_0$}\label{sec_SuppSimBeta}
In this subsection, we study the weak-identification robust procedures as tests of the point null hypothesis
$H_0(\beta_0):\E[y-x\beta_0-{\bm c}^T{\bm\theta}| {\bm z},{\bm c}]=0,$
as a function of the candidate value $\beta_0\in\mathbb R$. 
We generate data under $H_0$ as described in Section \ref{sec_DataGP}, i.e., with $s_{viol}=0$. 
Recall that the true coefficient on $x$ equals $\beta^\star=-1$.

For each configuration, we estimate the rejection frequency at level $\alpha=0.05$ of $H_0(\beta_0)$ over 1000 Monte Carlo repetitions for a grid with range $[-4, 2]$.
We vary $n\in\{100,300\}$, $n_C\in\{0,2\}$, $n_{IV}\in\{1,2,5\}$, $\pi\in\{0,0.5,1\}$, and consider both homoskedastic and heteroskedastic errors. We report results for \textit{weak RP hom.}, \textit{weak RP het.}, and \textit{ICM}.

\begin{figure}[tbp]
\centering
\includegraphics[width=0.8\textwidth]{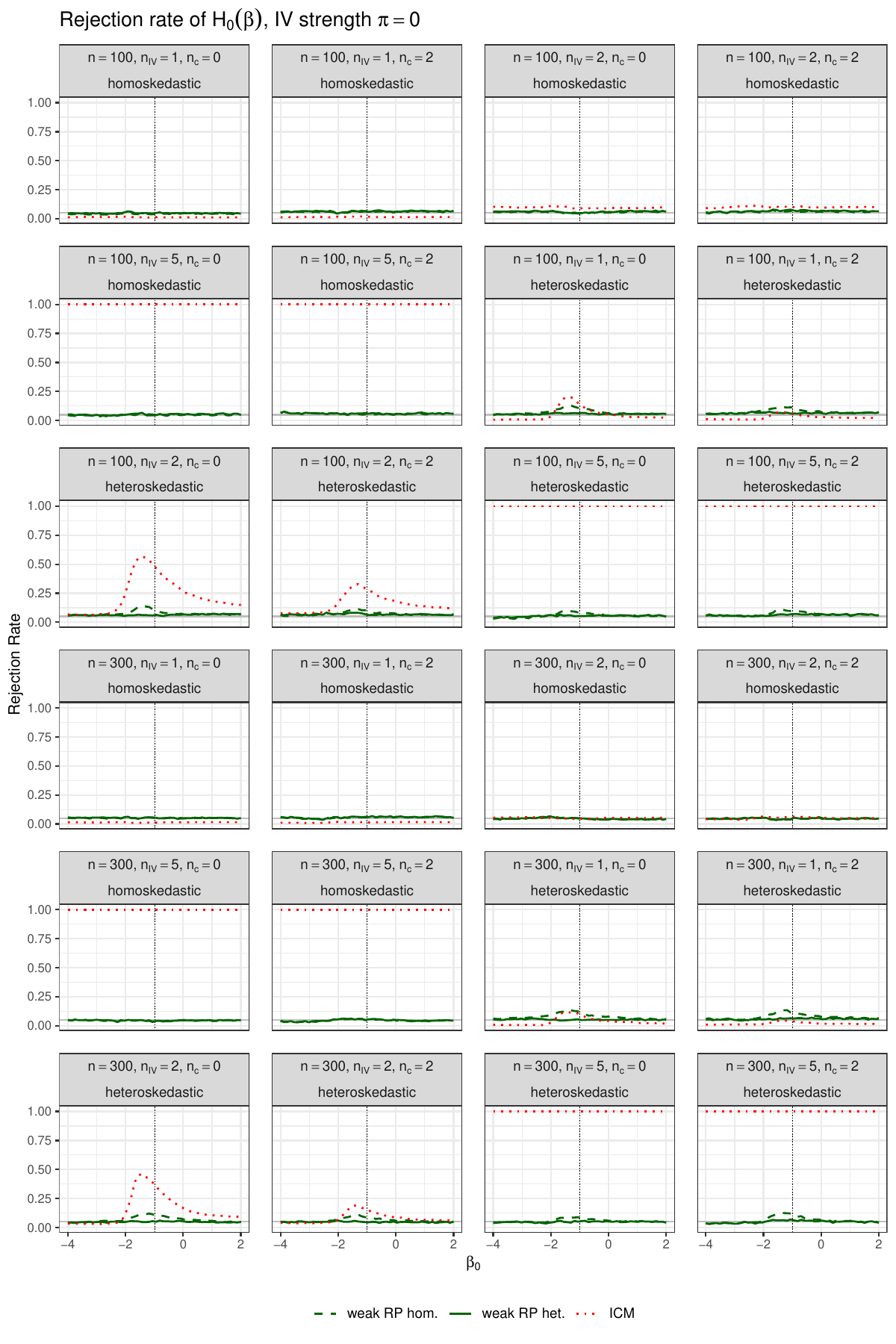}
\caption{Rejection frequency of the point null $H_0(\beta_0)$ under $H_0$ as a function of $\beta_0$, for IV strength $\pi=0$. The horizontal gray line marks $\alpha = 0.05$. The vertical gray line marks the true coefficient $\beta^\star=-1$. Panels correspond to combinations of $n\in\{100,300\}$, $n_{IV}\in\{1,2,5\}$, $n_C\in\{0,2\}$, and homoskedastic vs.\ heteroskedastic errors. Dashed curves coincide with the solid curves of the same color when not visible.}
\label{fig_SuppBeta_pi0}
\end{figure}

\begin{figure}[tbp]
\centering
\includegraphics[width=0.8\textwidth]{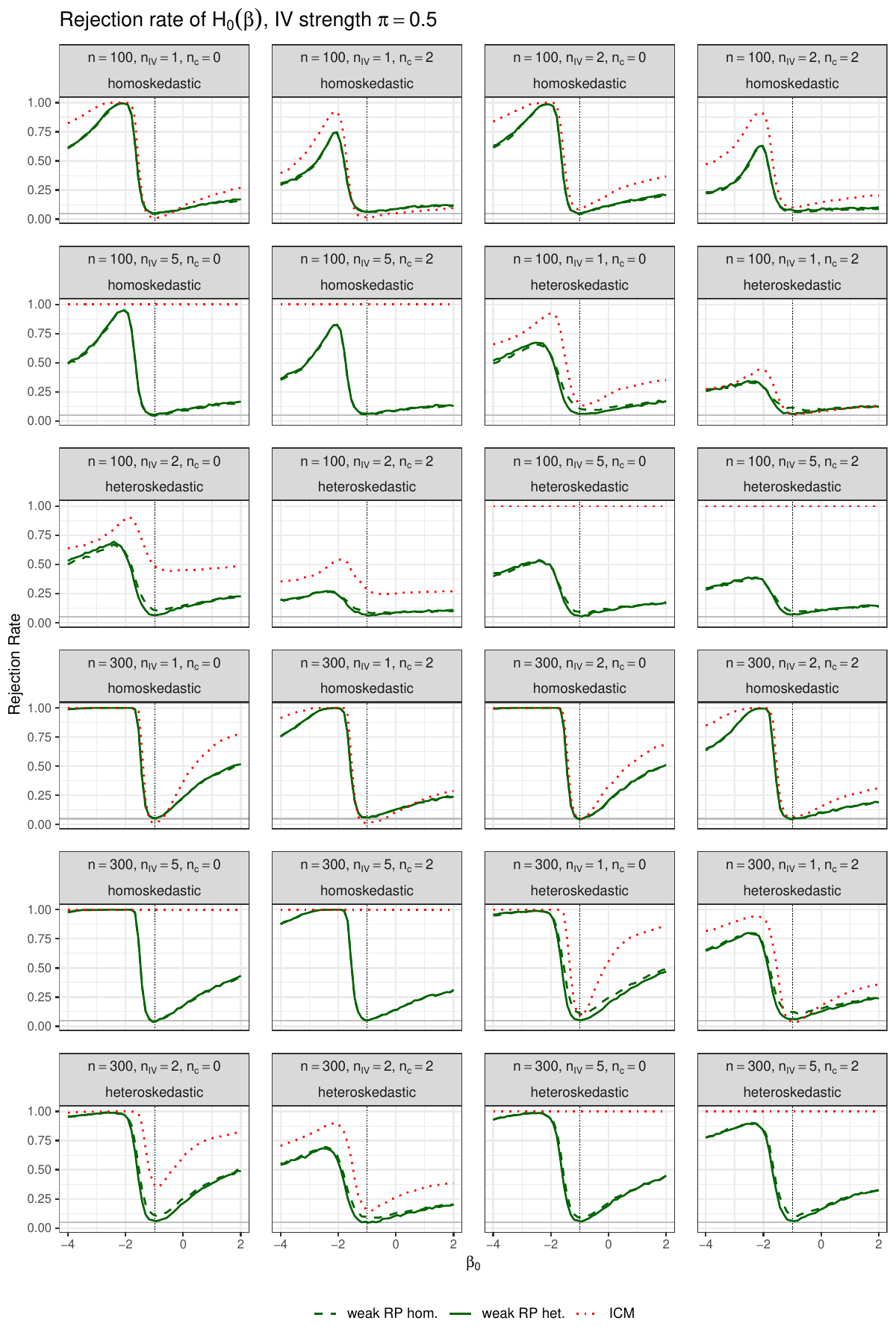}
\caption{Same as Figure \ref{fig_SuppBeta_pi0} but for IV strength $\pi=0.5$.}
\label{fig_SuppBeta_pi05}
\end{figure}

\begin{figure}[tbp]
\centering
\includegraphics[width=0.8\textwidth]{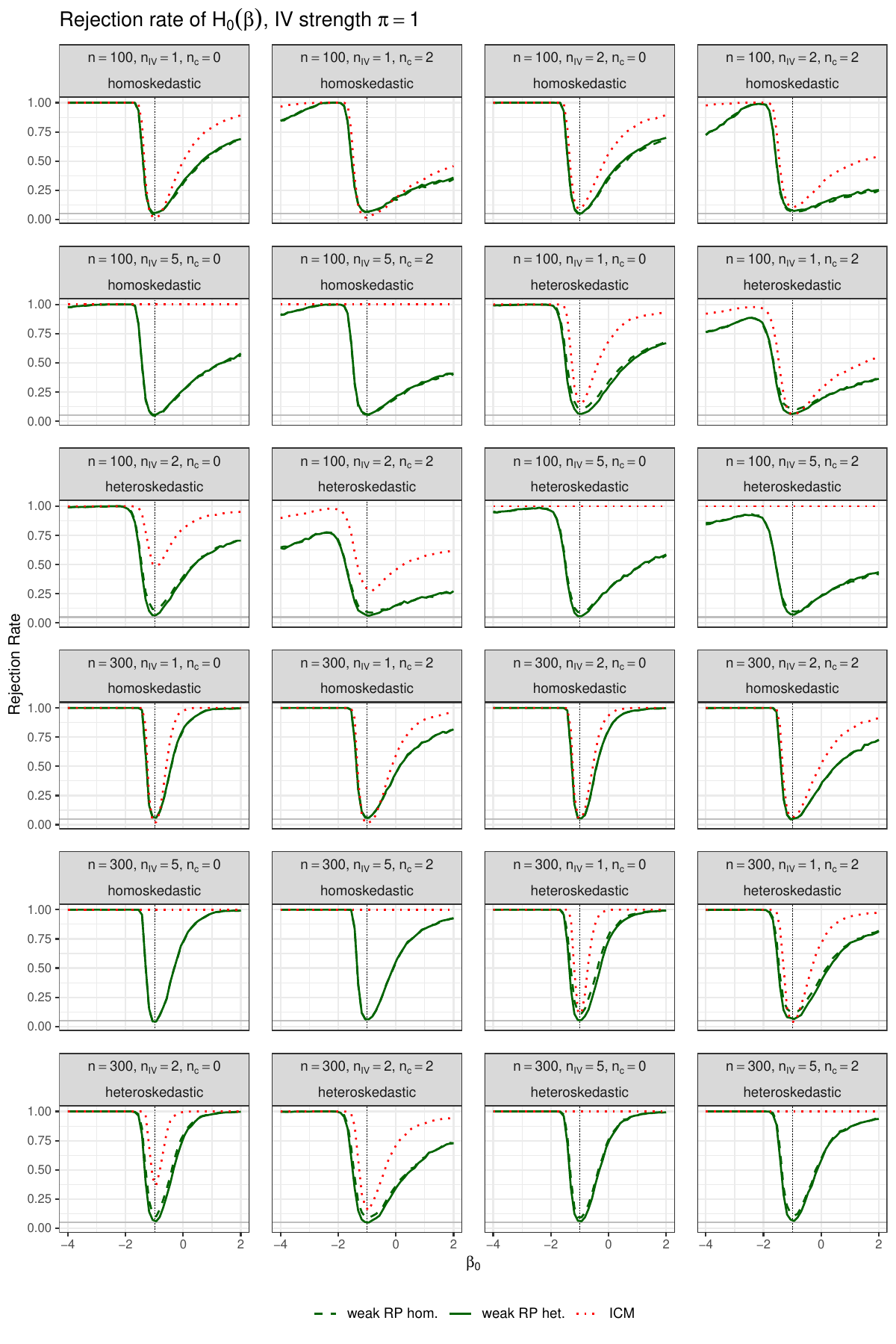}
\caption{Same as Figure \ref{fig_SuppBeta_pi0} but for IV strength $\pi=1$.}
\label{fig_SuppBeta_pi1}
\end{figure}

Figures \ref{fig_SuppBeta_pi0}--\ref{fig_SuppBeta_pi1} show the rejection frequency of the point null $H_0(\beta_0)$ as a function of $\beta_0$. 
By construction, rejection rates should be close to $\alpha$ when $\beta_0=\beta^\star=-1$, and should increase as $\beta_0$ moves away from $\beta^\star$.

When the IV strength is $\pi=0$ (Figure \ref{fig_SuppBeta_pi0}), rejection rates should remain close to $\alpha$ uniformly over $\beta_0$ by construction. While \textit{weak RP het.} behaves accordingly, \textit{weak RP hom.} exhibits overrejection at the true value $\beta_0=\beta^\star$ in some heteroskedastic settings. Moreover, \textit{ICM} shows substantial overrejection at $\beta_0=\beta^\star$ in several configurations, in particular when $n_{IV}=5$ or under heteroskedastic errors.

For $\pi\in\{0.5,1\}$ (Figures \ref{fig_SuppBeta_pi05} and \ref{fig_SuppBeta_pi1}), the rejection curves become more peaked around $\beta^\star=-1$, indicating increasing power to detect incorrect candidate values. In some configurations, \textit{ICM} exhibits slightly higher power than \textit{weak RP hom./het.} away from $\beta^\star$. However, the overrejection at the true value persists for \textit{ICM} in several settings, particularly for $n_{IV} = 5$ or heteroskedastic settings. In contrast, \textit{weak RP het.} maintains rejection rates close to the nominal level at $\beta_0=\beta^\star$ across the designs considered.

\section{Additional Empirical Illustration -- Protestantism and Literacy Rate}
Recall that for the empirical illustration on the dataset by \cite{CardCollege} in Section 5.4 in the main text, the approach by \cite{DieterleASimpleDiagnostic} was not applicable since the instrument was binary. As a comparison to their method, we now consider one of the datasets for which they found indications of model misspecification.
The data come from \cite{BeckerWasWeberWrong}, which consider (among many other questions) the effect of protestantism on literacy rate. As an instrument ${z}$, they use the distance to Wittenberg. There are 452 observations -- each corresponding to one county -- of the percentage of protestant population (endogenous regressor $x$), the literacy rate (outcome $y$), and diverse demographic controls ($\bm c$). We fit the same model as in the first row of Table 3.1 in \cite{DieterleASimpleDiagnostic}, which corresponds to the middle column of Table III in \cite{BeckerWasWeberWrong}. The data were obtained from the R package \texttt{ivdoctr} \citep{IVDOCTRPackage}. \cite{DieterleASimpleDiagnostic} noticed that the coefficients change drastically when fitting a model with instruments $(z, z^2)$ (where $z$ is the distance to Wittenberg) compared to a model with single instrument $z$, and the overidentifying J test based on $(z, z^2)$ rejects well-specification. Here, we additionally apply our residual prediction based tests and the additional tests as described in Section \ref{sec_ImplementationDetails}. As in Section 5.4 in the main text, we reduce the variability due to sample splitting by reporting doubled median p-values from 100 replications \citep{MeinshausenPValuesForHDRegression}. Figure \ref{fig_Weber} is the analogue of Figure 5 in the main text, plotting p-values (on log-scale) as a function of the candidate coefficient $\beta_0$. Note that \textit{RP hom./het.}, \textit{smooth asymp./boot.} and \textit{overid. J} appear as horizontal lines, whereas the weak-identification robust versions \textit{weak RP hom./het.} and \textit{ICM} vary with $\beta_0$.
\begin{figure}[t]
\centering
\includegraphics[width=0.8\textwidth]{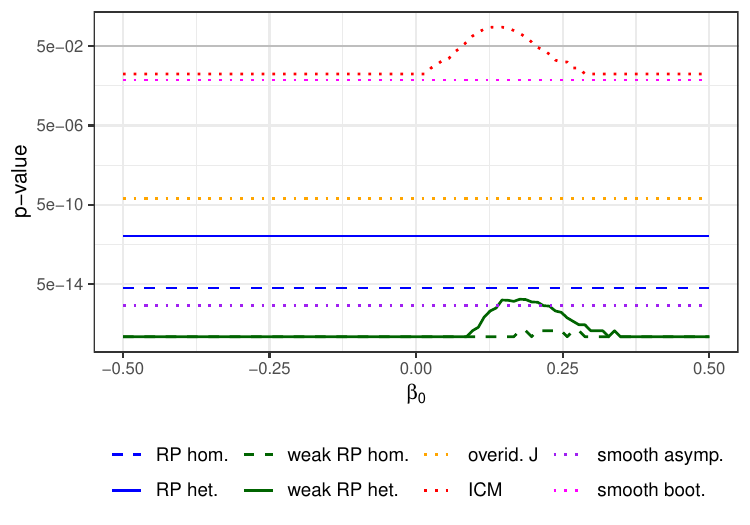}
\caption{P-values for the dataset from \cite{BeckerWasWeberWrong} as a function of $\beta_0$. When dashed lines are not visible, they are identical to the solid lines of the same color. The gray horizontal line indicates the significance level $\alpha = 0.05$.} 
\label{fig_Weber}
\end{figure}
We see that almost all of the tests reject well-specification with p-values far below the significance threshold $\alpha = 0.05$. Only the \textit{ICM} procedure produces a small range of $\beta_0$ such that the null hypothesis $H_0(\beta_0)$ is not rejected, hence not rejecting overall well-specification.

\section{Simulations with Clustered Data}\label{sec_SimCluster}
We now illustrate the theory from Section \ref{sec_GeneralTheory} by simulation. We consider independent clusters $I_1,\ldots,I_G$ of size $4$, so that $G = n/4$. 
We use exactly the same data-generating mechanism as in Section \ref{sec_DataGP}, with the only difference that we induce a clustered dependence structure in each $\mathcal N(0,1)$ distribution appearing in the structural assignments.
Specifically, whenever a standard normal variable $\xi_i \sim \mathcal N(0,1)$ appears in the construction in Section \ref{sec_DataGP}, we replace it by
$$
\xi_{ig} = \sqrt{s_{clust}}\,R_{g(i)} + \sqrt{1-s_{clust}}\,S_i,
$$
where $s_{clust}\in[0,1]$ controls the strength of the dependence, $g(i)\in\{1,\ldots,G\}$ denotes the cluster index of observation $i$, and $R_1,\ldots,R_G,S_1,\ldots,S_n$ are i.i.d.\ $\mathcal N(0,1)$. 
Thus, all observations within the same cluster share the component $R_{g(i)}$, while $S_i$ is unique to observation $i$.\footnote{By construction,
$\Var(\xi_{ig}) = s_{clust} + (1-s_{clust}) = 1$
so that the marginal distribution remains $\mathcal N(0,1)$ for all $s_{clust}$.}

The case $s_{clust}=0$ corresponds to i.i.d.\ data, whereas for $s_{clust}=1$ all observations within a cluster are identical.

We restrict attention to the case $n_{IV} = 1$, $n_C = 2$, and $\pi = 1$.

\subsection{Simulation under $H_0$}
For $n\in\{100,200,400,800\}$ and $s_{clust}\in\{0,0.1,\ldots,1\}$, we simulate 1000 datasets under $H_0$ (i.e., $s_{viol}=0$) and compute rejection rates at significance level $\alpha=0.05$ for \textit{RP hom.}, \textit{RP het.}, \textit{weak RP hom.}, \textit{weak RP het.}, and their cluster-robust counterparts \textit{RP clust.} and \textit{weak RP clust.}

In Figure \ref{fig_SuppCluster_H0}, we plot rejection rates against the strength of cluster dependence $s_{clust}$ for the different sample sizes. 
We see that for increasing $s_{clust}$, the rejection rates of the non-cluster-robust procedures rise above the nominal level $\alpha=0.05$, whereas the cluster-robust versions remain approximately constant around $\alpha=0.05$.

\begin{figure}[t]
\centering
\includegraphics[width=0.95\textwidth]{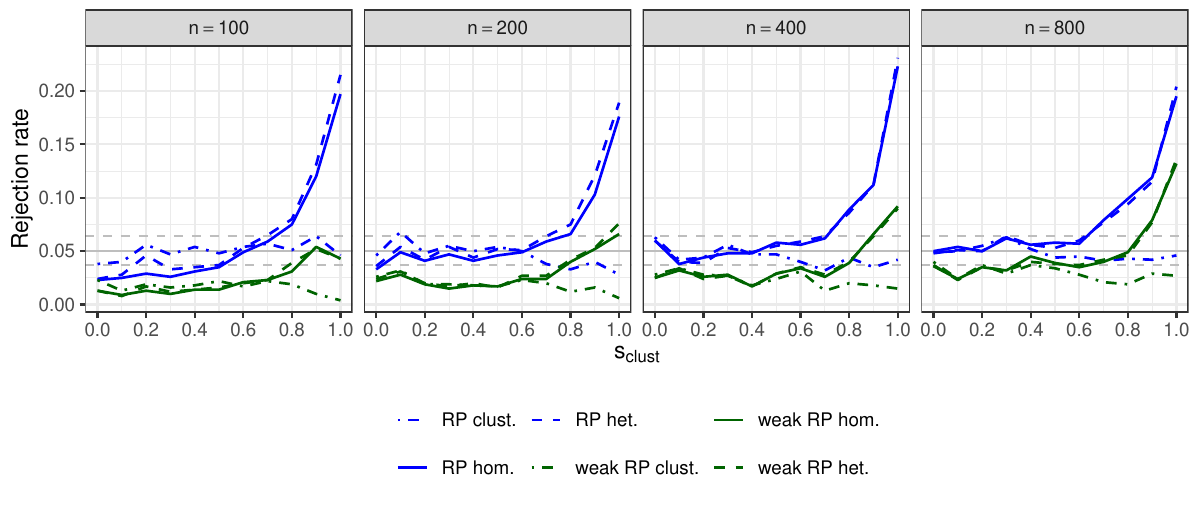}
\caption{Simulation results under $H_0$ with clustered dependence (cluster size 4), $n_{IV} = 1$ and $n_C = 2$ for $n\in\{100,200,400,800\}$ and varying strength $s_{clust}\in[0,1]$ of cluster dependence. The gray solid line marks $\alpha = 0.05$, the gray dashed lines show pointwise 95\% Monte Carlo bounds for rejection rates based on 1000 replications. Dashed curves coincide with the solid curves of the same color when not visible.}
\label{fig_SuppCluster_H0}
\end{figure}

\subsection{Simulation under $H_A$}
We next consider the same violations of well-specification as in Section \ref{sec_SuppSimHA}. 
We fix $n=300$ and simulate 1000 datasets for $s_{clust}\in\{0,0.1,\ldots,1\}$ with cluster size 4 and fixed violation strengths $s_{viol}\in\{1,3,4,8\}$ for \textit{z squared}, \textit{sign(z)}, \textit{misspec. squared}, and \textit{misspec. sign}, respectively.

Figure \ref{fig_SuppCluster_HA} reports rejection rates against $s_{clust}$ for the four violation types. When $s_{clust} = 0$, the clustered procedures exhibit approximately the same rejection rate as their non-cluster-robust counterparts; hence, there is no loss of efficiency when using the clustered procedures. On the other hand, for increasing $s_{clust}$, the rejection rates of the clustered procedures decrease more than the rejection rates of their non-cluster-robust counterparts. This is due to the fact that the cluster-robust variance estimator takes into account the decrease in the effective sample size as the within-cluster dependence $s_{clust}$ increases. This is not the case for the non-cluster-robust variance estimators.

\begin{figure}[t]
\centering
\includegraphics[width=0.95\textwidth]{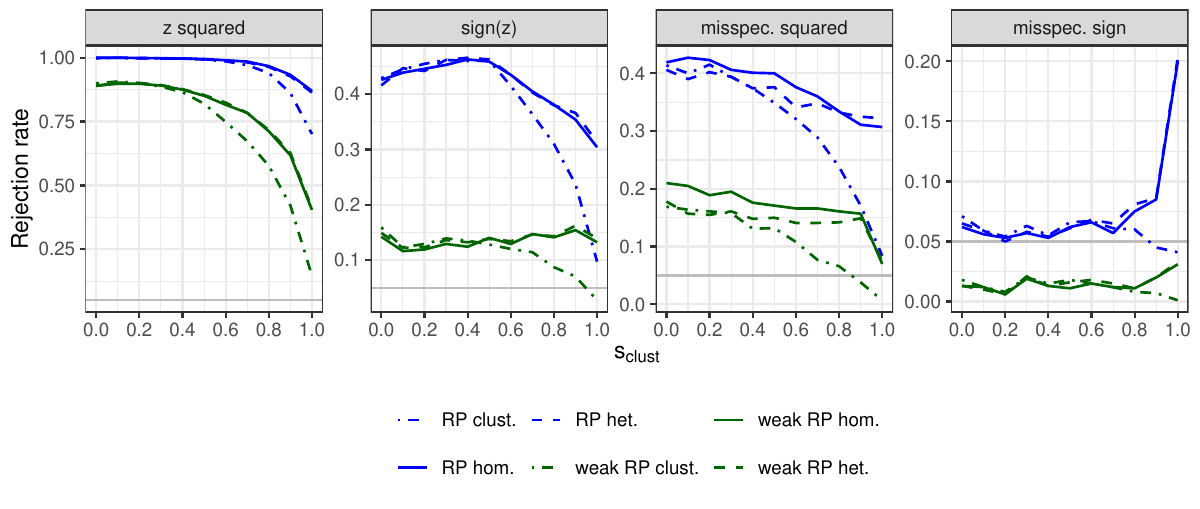}
\caption{Simulation results under $H_A$ with clustered dependence (cluster size 4), $n_{IV} = 1$ and $n_C = 2$, $n=300$, and varying strength $s_{clust}\in[0,1]$ of cluster dependence. Panels correspond to the four violation types. The gray solid line indicates the significance level $\alpha=0.05$. Dashed curves coincide with the solid curves of the same color when not visible.}
\label{fig_SuppCluster_HA}
\end{figure}

\section{Extension to Arbitrary Conditional Moment Restrictions}
In this section, we sketch how our approach can be generalized to test the well-specification of models defined by arbitrary conditional moment restrictions
\begin{equation}\label{eq_CMR}
    \E[g(\bm v, \bm \theta)|\bm z] = 0,
\end{equation}
where $\bm \theta\in\mathbb R^p$ is a finite-dimensional parameter, and $\bm v\in \mathbb R^l$ and $\bm z\in \mathbb R^d$ are random vectors that may have overlapping components. For simplicity, we assume $g$ to be univariate in the following, but extensions to vector-valued $g$ are straightforward. The linear instrumental variable model is a special case of \eqref{eq_CMR} by setting $\bm v = (y, \bm x)$, $\bm \theta = \bm \beta$, and $g(\bm v, \bm \theta) = y-\bm x^T\bm \beta$.

Such models are often estimated using the generalized method of moments (GMM) \citep[see, e.g.,][]{NeweyLargeSampleEstimation}, which we sketch in the following.
Let $h:\mathbb R^{d}\to \mathbb R^m$ ($m\geq p$), e.g., the identity $h(\bm z) = \bm z$. Then, the parameter $\theta$ solves
\begin{equation}\label{eq_GMM}
    \E[g(\bm v, \bm\theta)h(\bm z)]=0\,\, (\in\mathbb R^m).
\end{equation}
With $n_0$ i.i.d. samples $(\bm v_i, \bm z_i)_{i \in \mathcal D}$, let
\begin{equation}\label{eq_Defm}
    m_i(\bm \theta) = g(\bm v_i, \bm\theta) h(\bm z_i)\in \mathbb R^m, \quad \bar m_{\mathcal D}(\theta) = \frac{1}{n_0}\sum_{i\in \mathcal D} m_i(\theta)
\end{equation}
Then, $\bm \theta$ is estimated by
\begin{equation}\label{eq_GMMEstimator}
\hat{\bm \theta} = \arg\min_{\bm \theta} \bar m_{\mathcal D}(\bm \theta)^T{{\bm W}_n} \bar m_{\mathcal D}(\bm \theta),
\end{equation}
where ${{\bm W}_{\mathcal D}}\in \mathbb R^{m\times m}$ is a random matrix converging to some deterministic matrix $\bm W$. 

We can then use the same idea as for the linear instrumental variable model to construct a well-specification test, i.e., a test for
\begin{equation}\label{eq_H0GMM}
    H_0: \exists {\bm \theta}\in \mathbb R^p\, \text{s.t. } \E[g(\bm v, {\bm \theta})|\bm z] = 0.
\end{equation}

We can consider a test statistic of the form
$$T_n(w) = \frac{1}{\sqrt {n_0}} \sum_{i\in \mathcal D}g(\bm v_i, \hat {\bm \theta}) w(\bm z_i),$$
where $\hat {\bm \theta}$ is a GMM estimator of ${\bm \theta}$ and $w:\mathbb R^d\to [-1,1]$. We then learn a suitable weight function $\hat w$ on an independent auxiliary sample with machine learning, and a similar procedure to Procedure 1 in the main text can be applied to obtain asymptotically valid p-values.

In the following, we sketch how one may derive asymptotic normality of $T_n(w)$ (implicitly assuming suitable regularity conditions and -- for simplicity -- fixed $w$). By Taylor expansion,
\begin{align}
    T_n(w) &= \frac{1}{\sqrt{n_0}} \sum_{i\in \mathcal D}g(\bm v_i, \hat {\bm \theta}) w(\bm z_i)\nonumber\\
    &= \frac{1}{\sqrt{n_0}}\sum_{i \in \mathcal D}\left[g(\bm v_i, {\bm \theta}) + \frac{\partial g(\bm v_i, {\bm \theta})}{\partial \bm\theta^T}(\hat{\bm \theta} - {\bm \theta}) + o(\|\hat {\bm \theta} - {\bm \theta}\|_2)\right]w(\bm z_i).\label{eq_TaylorTn}
\end{align}
Moreover, one can show \citep[see, e.g.,][Section 3.3]{NeweyLargeSampleEstimation} 
\begin{equation}\label{eq_GMMLinear}
    \sqrt{n_0}(\hat {\bm \theta} - {\bm \theta}) = -({\bm D}^T{\bm W}{\bm D})^{-1}{\bm D}^T{\bm W}\frac{1}{\sqrt{n_0}}\sum_{i\in \mathcal D}h(\bm z_i)g(\bm v_i,\bm\theta) + o_P(1)
\end{equation}
with ${\bm D} = \EP\left[h(\bm z)\frac{\partial g(\bm v, {\bm \theta})}{\partial \bm\theta^T}\right]\in \mathbb R^{m\times p}$.
Plugging \eqref{eq_GMMLinear} into \eqref{eq_TaylorTn} and rearranging terms yields
\begin{align*}
     T_n(w) &= \frac{1}{\sqrt{n_0}}\sum_{i \in \mathcal D}g({\bm v}_i, {\bm \theta})w({\bm z}_i) + \frac{1}{\sqrt{n_0}}\sum_{i \in \mathcal D} w({\bm z}_i)\frac{\partial g(\bm v_i, {\bm \theta})}{\partial \bm\theta^T}(\hat{\bm \theta} - {\bm \theta}) + o_P(1)\\
     &=\frac{1}{\sqrt{n_0}}\sum_{i \in \mathcal D}g({\bm v}_i, {\bm \theta})w({\bm z}_i) \\
     & \qquad- \frac{1}{\sqrt{n_0}}\sum_{j \in \mathcal D} w(\bm z_j)\frac{\partial g(\bm v_j, {\bm \theta})}{\partial \bm\theta^T}({\bm D}^T{\bm W}{\bm D})^{-1}{\bm D}^T{\bm W}\frac{1}{n_0}\sum_{i \in \mathcal D}h(\bm z_i)g(\bm v_i,\bm\theta) + o_P(1)\\
     & = \frac{1}{\sqrt{n_0}}\sum_{i \in \mathcal D}g({\bm v}_i, {\bm \theta})w({\bm z}_i) \\
     & \qquad- \frac{1}{n_0}\sum_{j\in \mathcal D} w(\bm z_j)\frac{\partial g(\bm v_j, {\bm \theta})}{\partial \bm\theta^T}({\bm D}^T{\bm W}{\bm D})^{-1}{\bm D}^T{\bm W}\frac{1}{\sqrt{n_0}}\sum_{i\in \mathcal D}h(\bm z_i)g(\bm v_i,\bm\theta) + o_P(1)\\
     & =  \frac{1}{\sqrt{n_0}}\sum_{i \in \mathcal D}\left(g({\bm v}_i, {\bm \theta})w({\bm z}_i) - {\bm a}_w^T h(\bm z_i)g(\bm v_i,\bm\theta)\right) + o_P(1)\\
     & = \frac{1}{\sqrt{n_0}}\sum_{i \in \mathcal D}g({\bm v}_i, {\bm \theta})\left(w({\bm z}_i) - {\bm a}_w^T h({\bm z}_i)\right) + o_P(1)
\end{align*}
with 
$${\bm a}_w^T = \EP\left[w({\bm z})\frac{\partial g(\bm v, {\bm \theta})}{\partial \bm\theta^T}\right]({\bm D}^T{\bm W}{\bm D})^{-1}{\bm D}^T{\bm W}.$$
Under $H_0$, it holds that $\E[g({\bm v}_i,{\bm \theta})|{\bm z}_i]=0$ and hence by the central limit theorem, $T_n(w) \to\mathcal N(0, \sigma_w^2)$ with 
$$\sigma_w^2 = \EP[g({\bm v}_i, {\bm \theta})^2(w({\bm z}_i) - \bm a_w^T h({\bm z}_i))^2].$$
Then, $\sigma_w^2$ can be estimated by plugging in the corresponding sample quantities.

\end{document}